\definecolor{darkgreen}{rgb}{0,0.5,0}
\newtheorem{theorem}{Theorem}[section]
\newtheorem{lemma}[theorem]{Lemma}
\newtheorem{corollary}[theorem]{Corollary}
\newtheorem{definition}{Definition}[section]
\newtheorem{remark}{Remark}[section]
\newcommand{\pushline}{\Indp}
\newcommand{\popline}{\Indm}
\let\oldnl\nl
\newcommand{\nonl}{\renewcommand{\nl}{\let\nl\oldnl}}
\providecommand{\keywords}[1]{\textbf{Keywords:} #1}
\newcommand{\E}{\ensuremath{\mathbb{E}}}
\newcommand{\set}[1]{\left\{#1\right\}}
\newcommand{\bigO}[1]{\ensuremath{\operatorname{O}\bigl(#1\bigr)}}
\newcommand{\para}[1]{\medskip\noindent\textbf{#1}}
\newcommand{\Delay}{\ensuremath{\mathit{C}}}
\newcommand{\delay}{\ensuremath{\mathit{c}}}
\newcommand{\Latency}{L}
\newcommand{\latency}{\ell}
\newcommand{\Weight}{W}
\newcommand{\weight}{w}
\newcommand{\R}{\ensuremath{\mathcal{R}}}
\newcommand{\arrow}{\textsc{Arrow}}
\newcommand{\opt}{\textsc{Opt}}
\newcommand{\gnn}{\textsc{Gnn}}
\newcommand{\alg}{\textsc{Alg}}
\newcommand{\grd}{\textsc{Grd}}
\newcommand{\dsms}{DSMS}
\newcommand{\pointer}{\textit{links}}
\newcommand{\forest}{\ensuremath{\mathcal{F}}}
\newcommand{\src}{\ensuremath{\mathit{src}}}
\newcommand{\des}{\ensuremath{\mathit{des}}}
\title{Concurrent Distributed Serving with Mobile Servers\footnote{This work is supported by the Deutsche Forschungsgemeinschaft (DFG), under grant DFG
TU 221/6-3. A shorter version of this paper is to appear in the proceedings of ISAAC 2019.}}
\author{
Abdolhamid Ghodselahi\thanks{Hamburg University of Technology, Germany. {\small \texttt{\{abdolhamid.ghodselahi,turau\}@tuhh.de}}}
\and
Fabian Kuhn\thanks{University of Freiburg, Germany. {\small \texttt{kuhn@cs.uni-freiburg.de}}}
\and
Volker Turau\footnotemark[2]}
\date{}
\begin{document}

\maketitle

\begin{abstract}
This paper introduces a new resource allocation problem in distributed
computing called \textit{distributed serving with mobile servers
  (\dsms)}. In \dsms, there are $k$ identical mobile servers residing
at the processors of a network. At arbitrary points of time, any
subset of processors can invoke one or more requests. To serve a request, one of
the servers must move to the processor that invoked the request.
Resource allocation is performed in a distributed manner since only
the processor that invoked the request initially knows about it. All
processors cooperate by passing messages to achieve correct resource
allocation. They do this with the goal to minimize the communication
cost.

Routing servers in large-scale distributed systems requires a scalable location service.
We introduce the distributed protocol \gnn\ that solves the \dsms\
problem on \textit{overlay trees}. We prove that \gnn\ is
starvation-free and correctly integrates locating the servers and
synchronizing the concurrent access to servers despite asynchrony,
even when the requests are invoked over time. Further, we analyze
\gnn\ for ``one-shot'' executions, i.e., all requests are invoked
simultaneously. We prove that when running \gnn\ on top of a special
family of tree topologies---known as \textit{hierarchically
  well-separated trees (HSTs)}---we obtain a randomized distributed
protocol with an expected competitive ratio of $\bigO{\log n}$ on
general network topologies with $n$ processors. From a technical point
of view, our main result is that \gnn\ optimally solves the \dsms\
problem on HSTs for one-shot executions, even if communication is
asynchronous. Further, we present a lower bound of
$\Omega(\max\{k, \log n/\log\log n\})$ on the competitive ratio for
\dsms. The lower bound even holds when communication is synchronous
and requests are invoked sequentially.
\end{abstract}
\keywords{Distributed online resource allocation, Distributed directory, Asynchronous communication, Amortized analysis, Tree embeddings}

\section{Introduction}
\label{sec:intro}

Consider the following family of online resource allocation problems.
We are given a metric space with $n$ points\label{no:noNodes}.
Initially, a set of
$k \geq 1$\label{no:noServers}\footnote{\Cref{ta:notations} provides
an index for the essential notations used throughout the paper.}
identical mobile servers are residing at different points of the
metric space. Requests arrive over time in an online fashion, that is,
one or several requests can arrive at any point of time. A request
needs to be served by a server at the requesting point sometime after
its arrival. The goal is to provide a schedule for
serving all requests. This
abstract problem lies at the heart of many centralized and distributed
online applications in industrial planning, operating systems, content
distribution in networks, and scheduling
\cite{awerbuch1995,bartal1992competitive,bartal1992distributed,herlihy2001competitive,raymond1989tree}.
Each concrete problem of this family is characterized by a cost
function. We study this abstract problem in distributed computing and
call it the distributed serving with mobile servers (\dsms) problem. A
distributed protocol \alg\ that solves the \dsms\ problem must compute
a schedule for each server consisting of a queue of requests such that
consecutive requests are successively served, and all requests are
served. The $k$ schedules are distributedly stored at the requesting
nodes: each node knows for each of its requests the node which
invoked the subsequent request in the schedule so that a server after
serving one request can subsequently move to the next node (not
necessarily a different node). As long as new requests are invoked the
schedule is extended. Therefore, in response to the appearance of a
new request at a given processor, \alg\ must contact a processor that
invoked a request but yet has no successor request in the global
schedule, to instruct the motion of the corresponding server. This will
result in the entry of a server to the requesting processor. Sending a
server from a processor to another one is done using an underlying
routing scheme that routes most efficiently. The goal is to minimize
the ratio between the communication costs of an online and an
\textit{optimal offline} protocols that solve \dsms.
We assume that an optimal offline \dsms\ protocol \opt\ knows the whole sequence of requests in advance. However, \opt\ still
needs to send messages from each request to its predecessor request.
The \dsms\ problem has some interesting applications. We state two of them:

\para{Distributed $k$-server problem:} The $k$-server problem
\cite{bansal2011polylogarithmic,manasse1988competitive}, is arguably
one of the most influential research problems in the area of online
algorithms and competitive analysis. The distributed $k$-server was
studied in \cite{bartal1992distributed} where requests arrive
sequentially one by one, but only after the current request is served.
The cost function for this problem is defined as the sum of all
communication costs and the total movement costs of all servers. A
generalization of the $k$-server problem where requests can arrive
over time is called the online service with delay (OSD) problem
\cite{azar2017online,bienkowski2018online}. The OSD cost function is
defined as the sum of the total movement costs of all servers and the
total \textit{delay cost}. The delay of a request is the difference
between the service and the arrival times.

\para{Distributed queuing problem:} This problem is an application of
\dsms\ with $k=1$ , i.e., only one server or shared object
\cite{demmer1998arrow,herlihy2006dynamic,herlihy2001competitive}. The
distributed queuing problem is at the core of many distributed
problems that schedule concurrent access requests to a shared object.
The goal is to minimize the sum of the total communication cost and
the total ``waiting time''. The waiting time of a request is the
difference between the times when the request message reaches the
processor of the predecessor request and when the predecessor request
is invoked. Note that in this problem, the processor of a request must
only send one message to the processor of the predecessor request in
the global schedule. Two well-known applications for this problem are
distributed mutual exclusion
\cite{naimi1987improvement,raymond1989tree,van1987fair} and
distributed transactional memory \cite{zhang2010dynamic}. 

Next, we explain why \dsms\ is also interesting from a theoretical
point of view even for one-shot executions, that is, when all requests
are simultaneously invoked. \Cref{fig:complication} shows a rooted tree $T$, where the lengths of
all edges of each level are equal. Further, the length of every edge
is shorter than the length of its parent edge by some factor larger
than one. A set of six requests arrive at the leaves of $T$ at the
same time. Two servers $s_0$, $s_1$ are initially located at the
points that invoked requests $r_0^1$ and $r_0^2$. Serving the requests
$r_0^1$ and $r_0^2$ does not require communication, and these two
requests are the current tails of the queues of $s_0$ and $s_1$. The
requests $r_0^1$ and $r_0^2$ are at the heads of the two queues. An
optimal solution for serving the remaining requests is that $s_0$
consecutively serves the requests $r_b$, $r_c$, and $r_a$ after
serving $r_0^1$, while $s_1$ serves $r_d$ after having served $r_0^2$.
Next, consider an asynchronous network where message latencies are
arbitrary and protocols have no control over these latencies. A
possible schedule, in this case, is shown in \Cref{fig:complication}:
Request $r_a$ is scheduled after $r_0^1$, $r_b$ after $r_a$, and $r_d$
after $r_b$, since the message latency of a request further away can
be much less than the latency of a closer request. This can lead to
complications with regard to improving the \textit{locality} as it is
met in the above optimal solution.

\begin{figure}[H]
  \center	
  \includegraphics[width=0.3\textwidth]{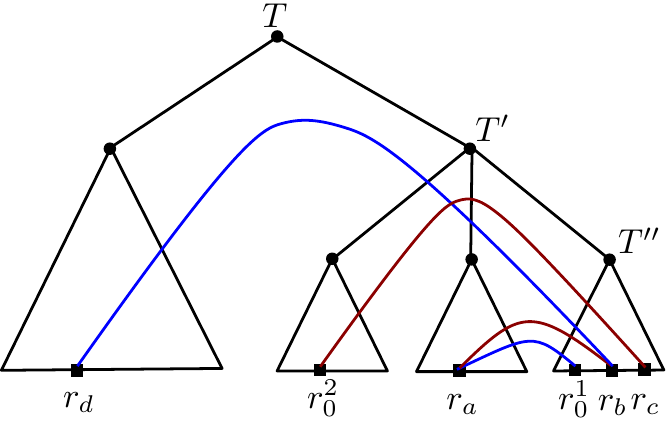}
  \caption{A distributed protocol may lead to complications with regard to improving locality.}
  \label{fig:complication}
\end{figure}

\para{GNN protocol:}
We devise the generalized nearest-neighbor (\gnn) protocol that greedily solves the
\dsms\ problem on \textbf{overlay trees}. An overlay tree $T$ is a
rooted tree that is constructed on top of the underlying network. The
processors of the original network are in a one-to-one correspondence
with the leaves of $T$. Hence, only $T$\textquotesingle s leaves can invoke requests,
and the remaining overlay nodes are artificial. The $k$ servers reside
at different leaves of $T$. Initially, all edges of $T$ are oriented
such that from each leaf there is a directed path to a leaf, where a
server resides. This also implies that every leaf node with a server
has a self-loop. Roughly speaking, the main idea of \gnn\ is to
update the directions of edges with respect to future addresses of a
server. A leaf invoking a request forwards a message along the
directed links, the orientations of all these links are inverted. When
a message reaches a node and finds several outgoing (upward/downward)
links, it is forwarded via an arbitrary downward link to find the
current or a future address of a server. We show that in \gnn\, a
processor holding a request always sends a message through a direct
path to the processor of the predecessor request in the global
schedule. We refer to \Cref{sec:gnn} for a formal description of \gnn.

\subsection{Our Contribution}
\label{sec:contribution}

This paper introduces the \dsms\ problem as a distributed online
allocation problem. We devise the greedy protocol \gnn\ that solves
the \dsms\ problem on overlay trees. We prove that even in an
asynchronous system \gnn\ operates correctly, that is, it does not
suffer from starvation, nor livelocks, or deadlocks. To the best of
our knowledge, \gnn\ is the first link-reversal-based protocol that
supports navigating more than one server.

\begin{theorem}\label{th:correctness}
  Suppose the overlay tree $T$ is constructed on top of a distributed
  network. Consider the \dsms\ problem on $T$ where a set of $k \geq
  1$ identical mobile servers are initially located at different
  leaves of $T$. Further, a sequence of requests can be invoked at any
  time by the leaves of $T$. Then \gnn\ schedules all requests to be
  served by some server at the requested points in a finite time
  despite asynchrony.
\end{theorem}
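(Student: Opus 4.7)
The plan is to establish correctness in two layers: a structural invariant on the oriented overlay tree $T$ that is preserved by every \gnn\ operation, and a termination argument that uses this invariant together with a potential function to bound the number of messages each request triggers. I would begin by making precise the edge-orientation invariant: at every reachable global state of an asynchronous execution, restricting $T$ to its current oriented edges yields a directed forest in which every leaf has a directed path to some leaf that is either currently hosting a server or is destined to host one (the latter being the current tail of some server's schedule). In particular, no directed cycle ever forms. The initial state satisfies this, and I would argue inductively that each local action of \gnn---a leaf invoking a new request, an internal node forwarding a message and flipping the incoming link, or a tail leaf appending a successor to its schedule---preserves it. The key point is that a node flips an edge only after committing to forward along another outgoing edge, so the ``directed path to a (future) server'' property is never broken.

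Next, I would show \emph{safety} against deadlock and livelock. Since $T$ is a tree and the invariant rules out directed cycles, a message propagating along oriented edges cannot revisit a node while in transit. Hence any individual request message traverses at most $|V(T)|-1$ edges before reaching a leaf node that is the current tail of some server's queue, at which point the request is appended to that queue and the schedule is extended. A request that is appended as the successor of request $r$ will eventually be served because serving is passed from head to tail along the schedule, each hop consisting of finitely many server-movement actions on a finite path in $T$. Asynchrony only affects \emph{when} a message or a server move is delivered, not whether: by the standard asynchronous model, every sent message arrives in finite (but unbounded) time, so each finite sequence of such events terminates.

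To rule out starvation, I would introduce a potential function that counts, over all pending request messages currently in transit, the remaining number of hops each can take before reaching a tail. The invariant plus the structure of \gnn\ implies that this potential is finite at all times and strictly decreases with each forwarding step, so only finitely many forwarding actions occur per request. Combined with the fact that append-to-tail operations happen in finite time after the message reaches a tail, every request is scheduled within a finite number of global events. Starvation-freedom of service (as opposed to scheduling) then follows because each server traverses its queue in order and every server movement takes finite time.

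The main obstacle I anticipate is the interaction between \emph{concurrent} requests under asynchrony: two messages may simultaneously be traveling toward overlapping regions of $T$ and flipping edges in ways that could, naively, deflect one another indefinitely or cause a message to be routed toward a leaf whose server status has just changed. Handling this requires a careful case analysis at each internal node when it processes a message arriving over a link whose orientation may have been altered by an earlier but still-in-flight update, showing that \gnn's rule---forwarding along an arbitrary downward link and inverting the incoming link---always produces a local state consistent with the global invariant above. I expect this to be the technically delicate part of the proof, and it is where the argument that \gnn\ sends each request's message along a \emph{direct} path to the predecessor's processor (claimed in \Cref{sec:gnn}) is crucial.
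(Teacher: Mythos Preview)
Your two-layer plan (scheduling invariant, then serving) matches the paper's decomposition into \Cref{th:scheduleGuarantee} and \Cref{th:servingGuarantee}, but there is a real gap in the termination step. You infer that because $H$ is acyclic at every instant, a message ``cannot revisit a node while in transit.'' That inference is not sound: $H$ changes as the message moves, so instantaneous acyclicity does not by itself forbid a message from traversing an edge $v\to u$, having the orientation later flip, and then traversing $u\to v$. The paper closes this with a sharper local invariant you do not state (\Cref{le:edgeStates}): on every edge of $T$ there is at all times either exactly one directed link or exactly one message in transit, never both. From this, if a message were about to re-enter $v$ over $(u,v)$, then just before $u$ processed it $u$ would have to point to $v$ while a message is simultaneously in transit on that edge---contradiction. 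Your invariant (``$v.\pointer$ never empty'' plus ``no directed cycle'') is a consequence of this edge-state invariant but does not on its own yield non-revisiting; your potential-function route likewise needs \Cref{le:edgeStates} to show the potential actually decreases.

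For the serving guarantee, you assume each server's schedule is a well-formed queue, but with $k>1$ servers and concurrent requests this needs proof: one must rule out that the successor relation among scheduled requests forms a cycle disconnected from every dummy request. The paper handles this in \Cref{le:queuDirectPath}, showing each $F^w_{\alg}$ is a directed path rooted at $r^w_0$ via a three-part argument (out-degree one, in-degree at most one, connectedness). The connectedness part is the nontrivial one and is exactly where the concurrent-deflection concern you flag as ``the main obstacle'' is resolved---so you have located the difficulty correctly, but its resolution lives in the serving-guarantee lemma rather than in the routing invariant.
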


While \gnn\ itself solves any instance of the \dsms\ problem, we
analyze \gnn\ for the particular case that the requests are
simultaneously invoked. We consider general distributed networks with
$n$ processors. We model such a network by a graph $G$. A
hierarchically well-separated tree (HST) is an overlay tree with
parameter $\alpha >1$, that is, an $\alpha$-HST is a rooted tree where
every edge weight is shorter by a factor of $\alpha$ from its parent
edge weight. A tree is an HST if it is an $\alpha$-HST for some
$\alpha>1$. There is a randomized embedding of any graph into a
distribution over HSTs \cite{bartal96,fakcharoenphol2003tight}. We
sample an HST $T$ according to the distribution defined by the
embedding. We consider an instance $I$ of the \dsms\ problem where the
communication is asynchronous, and the requests are simultaneously
invoked by the nodes of $G$. When running \gnn\ on $T$, we get a
randomized distributed protocol on $G$ that solves $I$ with an
expected competitive ratio of $\bigO{\log n}$ against oblivious
adversaries\footnote{This assumes that the sequence of
  requests is statistically independent of the randomness used for
  constructing the given tree.}.

\begin{theorem}\label{th:graph}
  Let $I$ denote an instance of the \dsms\ problem consisting of an
  asynchronous network with $n$ processors and a set of requests that
  are simultaneously invoked by processors of the network. There is a
  randomized distributed protocol that solves $I$ with an expected
  competitive ratio of $O(\log n)$ against an oblivious adversary.
\end{theorem}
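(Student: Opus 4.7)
\medskip\noindent\textbf{Proof plan for \Cref{th:graph}.}
The plan is to reduce \Cref{th:graph} to the assertion---announced in the abstract as the main technical result of the paper and to be proved later---that \gnn\ solves the one-shot asynchronous \dsms\ problem optimally on HSTs. Given that black box, the theorem follows by the standard randomized tree embedding template of Bartal \cite{bartal96} and Fakcharoenphol, Rao and Talwar \cite{fakcharoenphol2003tight}: sample an $\alpha$-HST $T$ (for some fixed $\alpha > 1$) whose leaves correspond bijectively to the processors of the underlying network $G$, drawn from a distribution satisfying $d_G(u,v)\leq d_T(u,v)$ for every pair of leaves and $\E[d_T(u,v)] = \bigO{\log n}\cdot d_G(u,v)$.

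Next I would run \gnn\ on $T$ (which terminates correctly by \Cref{th:correctness}) and physically deliver each inter-leaf message along the shortest path in $G$; call this composed randomized protocol \alg. The competitive analysis then chains three inequalities. The cost of \alg\ measured on $G$ is upper bounded by its cost measured on $T$, since graph distances never exceed tree distances. The cost of \gnn\ on $T$ is at most the $T$-cost of any feasible schedule---in particular of an optimal offline schedule $\opt$ for $I$ on $G$ reinterpreted as a schedule on $T$---by the HST optimality black box. Finally, since $\opt$'s total communication decomposes into a sum of pairwise leaf-to-leaf messages, linearity of expectation combined with the pairwise distortion bound gives $\E[\mathrm{cost}_T(\opt)] = \bigO{\log n}\cdot\mathrm{cost}_G(\opt)$; this expectation is meaningful precisely because an oblivious adversary's request sequence is statistically independent of the random $T$.

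The hard step is the middle inequality, i.e., establishing that \gnn\ is optimal on HSTs for asynchronous one-shot executions. The subtlety is exactly the scenario illustrated in \Cref{fig:complication}: asynchrony may force \gnn\ to schedule a distant request ahead of a closer one, so locality is not automatic. My strategy for handling this would be to exploit the geometric decay of HST edge weights: any ``misordering'' decided inside a subtree rooted at height $h$ can only rearrange requests whose pairwise tree distance is dominated by the height-$h$ edge weight, and a careful level-by-level amortization should show that these local rearrangements never inflate the total \gnn-cost beyond what \opt\ already pays on $T$. Once this HST optimality is in hand, composing it with the two distortion-based inequalities above yields $\E[\mathrm{cost}_G(\alg)] = \bigO{\log n}\cdot\mathrm{cost}_G(\opt)$, which is \Cref{th:graph}.
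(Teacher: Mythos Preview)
Your proposal is correct and follows essentially the same route as the paper: reduce to the HST optimality result (\Cref{th:hstOptimal}) and then lift to general graphs via the FRT embedding, chaining exactly the three inequalities you list (these appear in the paper as \eqref{eq:optUBGraph} together with $\Delay_{\gnn}\leq \Weight_T(\forest_{\opt}(T))$). Your outline of the hard middle step---a level-by-level amortization exploiting the geometric decay of HST edge weights---also anticipates the paper's mechanism, which is carried out by transforming $\forest_{\gnn}$ into a locality-based forest via bottom-up ``gap closing'' and a potential-function argument (\Cref{le:transformation}, \Cref{le:UBGNNForest}).
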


Consider an instance $I$ of \dsms\ that consists of an HST $T$
where communication is asynchronous and a set of requests that are
simultaneously invoked by the leaves of $T$. Analyzing \gnn\ for $I$
turns out to be involved and non-trivial. The fact that the \gnn\ (as
any other protocol) has no control on the message latencies bears a
superficial resemblance to the case where the requests are invoked
over time. Hence, when analyzing \gnn\ for $I$, one faces the
following complications: 1) A server may go back to a subtree of $T$
after having left it. 2) A request in a subtree of $T$ that initially
hosts at least one server can be served by a server that is initially
outside this subtree. 3) Different servers can serve two requests in a
subtree of $T$ that does not initially host any server.
\Cref{th:graph} is derived from our main technical result for HSTs.

\begin{theorem}\label{th:hstOptimal}
 Consider an instance $I$ of \dsms\ that consists of an HST $T$ where even the communication is asynchronous
 and a set of requests that are simultaneously invoked by the leaves of $T$.
 The \gnn\ protocol optimally solves $I$.
\end{theorem}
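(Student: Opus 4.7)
The plan is to prove optimality via an edge-decomposition of the communication cost. Because messages travel along tree paths, for any schedule $\Sigma$ produced by a \dsms\ protocol on $T$, we can write $\mathrm{cost}(\Sigma)=\sum_{e\in T} w(e)\cdot N_\Sigma(e)$, where $N_\Sigma(e)$ is the number of ``schedule edges'' that cross $e$ (that is, pairs formed either by a server's initial position with its first served request, or by two consecutive requests of a server, whose endpoints lie on opposite sides of $e$). Optimality of \gnn\ then reduces to showing that for every edge $e$, $N_\gnn(e)$ equals the minimum $L(e)$ achievable by any valid schedule.

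First I would establish a structural lower bound $L(e)$ for a generic edge $e=(v,\mathrm{parent}(v))$ in terms of $n_v$ (requests in $T_v$), $k_v$ (servers initially in $T_v$), and the way requests and servers are split among the children of $v$. The argument is combinatorial: each server's trajectory changes sides of $e$ some nonnegative number of times, and these counts are constrained by the requirement that every request be served and by which servers start on which side. The result is a level-wise inequality $N_\Sigma(e)\ge L(e)$ that holds for any valid distributed schedule and in particular for \opt.

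Next I would analyze \gnn's execution by induction on the height of $T$. The base case for edges incident to leaves is immediate. For the inductive step I exploit the HST assumption $w(e_{\text{parent}})=\alpha\cdot w(e)$ with $\alpha>1$, together with the link-reversal mechanism of \gnn: once a \gnn\ message has crossed $e$ into $T_v$, the reversed pointers inside $T_v$ commit $T_v$ to serving a specific batch of requests with a specific set of servers, so $e$ is not crossed again unnecessarily. Combining this with the inductive hypothesis on the edges strictly below $v$ I obtain $N_\gnn(e)=L(e)$.

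The main obstacle is handling asynchrony, because, as stressed right before the theorem, arbitrary message latencies can make \gnn\ look locally wasteful: a server may seem to re-enter a subtree it already left, two servers may end up serving requests in the same initially-server-less subtree, and a request in a subtree that originally hosts servers may be served from outside. The hard part is showing that these local ``anomalies'' never inflate the edge-crossing counts beyond $L(e)$ at any level. I would address this by an amortized, per-edge argument: any extra low-level crossing forced by an adversarial interleaving must be matched, via the link-reversal invariant of \gnn, by a corresponding saved crossing at the enclosing level, and the geometric weight decay in the HST guarantees that the accounting balances exactly. Combining the lower bound $L$ with the matching upper bound then yields $\mathrm{cost}(\gnn)=\mathrm{cost}(\opt)$.
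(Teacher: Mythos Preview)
Your edge-by-edge crossing-count program breaks at the upper-bound step. The claim that $N_{\gnn}(e)=L(e)$ for every tree edge $e$ is \emph{false} in the asynchronous one-shot setting, and this is precisely the phenomenon the paper isolates with its ``gap'' notion (see the discussion surrounding Figure~\ref{fig:complication} and \Cref{le:timeLine}). When a gap $(\mu,\mu')$ occurs on a subtree $T'$, the message $\mu'$ leaves $T'$ even though the locality-based schedule would keep it inside; hence $\mu'$ contributes an \emph{extra} crossing of the edge above $T'$ (and of further edges higher up), so $N_{\gnn}(e)>L(e)$ there. Summed over the tree this gives $W_T(\forest_{\gnn})>W_T(\forest_{\min})$ in general, which means your decomposition $\mathrm{cost}(\gnn)\le\sum_e w(e)\,N_{\gnn}(e)$ cannot be closed against $\sum_e w(e)\,L(e)$. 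Your final paragraph essentially concedes this (``extra low-level crossing\ldots matched by a saved crossing at the enclosing level''), but note that the direction is reversed---the extra crossings are at the \emph{higher} levels, not the lower ones---and in any case no redistribution of crossing counts can work, because the total weight of $\forest_{\gnn}$ is genuinely larger.

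What rescues \gnn\ is not a crossing-count identity but a \emph{latency-versus-weight} slack that your decomposition throws away. The paper's proof transforms $\forest_{\gnn}$ into a locality-based forest $\forest_{mdf}$ by closing gaps (\Cref{le:transformation}) and then shows $L_{\gnn}(\forest_{\gnn})\le W_T(\forest_{mdf})$ via an amortized argument (\Cref{le:UBGNNForest}). The crucial observation is \Cref{le:greedyNature}: whenever a gap $(\mu,\mu')$ occurs on $T'$, the \emph{incoming} message $\mu$ must have arrived at the root of $T'$ before $\mu'$ did, so $\latency_{\gnn}(\mu)\le\delta(T')$. This bounded latency---strictly smaller than the weight $w_T\big(e(\mu)\big)$ of the corresponding schedule edge---is the potential that pays for the overshoot of $\mu'$. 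Concretely, the potential function is $\Phi(F)=W_T(F)-L_{\gnn}(F)$ on the set $E^{pot}$ of such incoming edges. Your plan never distinguishes latency from weight, so it has no access to this slack; to repair the argument you would need to abandon the per-edge identity and instead carry out the latency-based amortization of \Cref{sec:gnnUB}.
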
  

One-shot executions of the distributed queuing problem for synchronous
communication were already considered in
\cite{herlihy2001competitive}. The following corollary follows from
\Cref{th:hstOptimal}.

\begin{corollary}\label{co:hstOptimalQueuing}
  \gnn\ optimally solves the distributed queuing problem on HSTs for
  one-shot executions even when the communication is asynchronous.
\end{corollary}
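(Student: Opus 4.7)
The plan is to derive the corollary as a direct specialization of \Cref{th:hstOptimal} to the case $k=1$, after verifying that the distributed queuing problem is precisely the single-server instance of \dsms.

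First, I would recall from the discussion in \Cref{sec:intro} that the distributed queuing problem is cast as an application of \dsms\ with a single mobile server ($k=1$). In that setting, the global schedule forms one totally ordered queue of requests, and a new request is appended to the tail by contacting the processor holding the current tail via exactly one message sent through the underlying routing scheme. The waiting time contribution in the queuing cost corresponds exactly to the latency of that single message along the tree, plus the time offset with which the predecessor was invoked. For one-shot executions, all invocation times coincide, so the queuing cost reduces to the sum of the communication costs for routing each request's message to its predecessor in the global schedule, which is the same objective that \dsms\ optimizes when $k=1$.

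Second, I would observe that \gnn\ applied with a single server behaves exactly as the link-reversal queuing protocol used by \dsms: each requesting leaf forwards a message along the currently outgoing links, reversing them as it goes, until it reaches the unique leaf whose request is at the tail of the queue. By the structural argument referenced in the description of \gnn\ (that each request's message follows a direct path to the predecessor), the communication cost contribution of \gnn\ for $k=1$ coincides with the DSMS cost that \Cref{th:hstOptimal} optimizes.

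Third, I would invoke \Cref{th:hstOptimal} with $k=1$ on the instance $I$ given by the HST $T$, asynchronous communication, and the simultaneously invoked request set. Since \gnn\ optimally solves this \dsms\ instance, and since the cost function of the distributed queuing problem agrees with the \dsms\ cost in this one-shot single-server case, it follows that \gnn\ is also optimal for the distributed queuing problem on HSTs under asynchrony.

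The main obstacle I anticipate is not the reduction itself but making sure that the two cost measures truly coincide for one-shot executions: one has to be careful that the "waiting time" portion of the queuing cost, defined relative to invocation and arrival times, does not introduce any slack beyond pure message routing cost under asynchrony. Once this is checked (using the fact that all invocations happen simultaneously, so waiting time collapses to message latency along the chosen predecessor path), the corollary is immediate from \Cref{th:hstOptimal}.
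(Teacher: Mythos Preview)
Your proposal is correct and follows essentially the same route as the paper: identify distributed queuing as \dsms\ with $k=1$, argue that in a one-shot execution the queuing cost coincides (up to a fixed scalar) with the \dsms\ communication cost, and then invoke \Cref{th:hstOptimal}. The only minor difference is in the treatment of the waiting-time term: the paper simply states that the total waiting time is $0$ in a one-shot execution (since every predecessor is already invoked by the time any message arrives), whereas you argue it collapses to message latency; either reading makes the queuing objective a monotone function of the \dsms\ cost, so the conclusion is unaffected.
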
 

We provide a simple reduction form the
distributed $k$-server problem to the \dsms\ problem. Our following
lower bound is obtained using this reduction and an existing lower
bound \cite{bartal1992distributed} on the competitive ratio for the
distributed $k$-server problem.

\begin{theorem}\label{thm:LB}
  There is a network topology with $n$ processors---for all $n$---such
  that there is no online distributed protocol that solves \dsms\ with
  a competitive ratio of $o(\max\{k,\log n/\log \log n\})$ against adaptive
  online adversaries where $k$
  is the number of servers. This result even holds when requests are
  invoked one by one by processors in a sequential manner and even
  when the communication is synchronous.
\end{theorem}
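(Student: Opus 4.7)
The plan is to build a simple cost-preserving reduction from the distributed $k$-server problem (as studied in \cite{bartal1992distributed}) to \dsms, so that the known $\Omega(\max\{k,\log n/\log\log n\})$ lower bound for the former transfers directly. The reduction exploits the fact that, by the very definition of \dsms, a server is routed between consecutive requests in its queue using the most efficient underlying routing scheme, and the distance traversed is charged to the \dsms\ communication cost. Consequently, the \dsms\ cost of any protocol already dominates the server-movement cost of the $k$-server problem on the same instance.

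Concretely, given any distributed \dsms\ protocol \alg, I would construct a distributed $k$-server protocol $\alg'$ as follows. When a request $r_i$ is invoked in the $k$-server instance, $\alg'$ simulates \alg\ on $r_i$; as soon as \alg\ attaches $r_i$ to the tail of the queue of some server $s$, $\alg'$ moves $s$ along the shortest path from its current location to the location of $r_i$ and declares $r_i$ served. Because $k$-server requests are issued sequentially (the next arrives only after the current one is served), the simulation is well-defined, and the messages sent by $\alg'$ are exactly those sent by \alg. The $k$-server cost of $\alg'$ equals its movement cost plus its message cost, and both are bounded above by the communication cost of \alg; hence $\mathrm{cost}_{k\text{-server}}(\alg')\le \mathrm{cost}_{\dsms}(\alg)$ up to constants. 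In the reverse direction, an optimal offline $k$-server schedule can be realised as a \dsms\ schedule whose communication cost consists of routing each server between its consecutive assigned requests, which is precisely the $k$-server movement cost, giving $\mathrm{OPT}_{\dsms} \le O(\mathrm{OPT}_{k\text{-server}})$. Combining the two bounds, a \dsms\ competitive ratio of $\rho$ implies a distributed $k$-server competitive ratio of $O(\rho)$ on the same topology and against the same class of adversaries.

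Invoking the $\Omega(\max\{k,\log n/\log\log n\})$ lower bound of \cite{bartal1992distributed} for the distributed $k$-server problem against adaptive online adversaries on a suitable $n$-processor network then yields the desired lower bound for \dsms; since that lower bound already requires only sequentially invoked requests and synchronous communication, these assumptions are inherited. The main subtlety I anticipate lies in the accounting for the offline side: an offline \dsms\ protocol, although clairvoyant, must still transmit messages to install the successor links between consecutive requests of each server's queue, and I have to verify that these unavoidable link-setup messages sum to at most a constant times the $k$-server movement cost; otherwise the reduction could lose a non-constant factor and weaken the transferred lower bound.
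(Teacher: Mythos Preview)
Your proposal is correct and follows essentially the same approach as the paper: a cost-preserving reduction from the distributed $k$-server problem to \dsms, after which the lower bound of \cite{bartal1992distributed} (together with the classical $\Omega(k)$ bound) transfers directly. The paper's proof differs only in that it explicitly tracks the parameter $D$ (the ratio of server-movement cost to message cost) and shows it cancels in the competitive-ratio comparison, whereas you implicitly take $D=\Theta(1)$; your flagged ``subtlety'' about the offline link-setup cost is exactly the observation the paper uses, namely that the optimal offline \dsms\ cost equals $\Weight_G(\forest)$ while the optimal offline $k$-server cost is $\Omega(D\cdot\Weight_G(\forest))$.
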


\subsection{Further Related Work} 
\label{sec:relatedwork}

\para{Distributed $k$-server problem:} In \Cref{sec:intro}, we have seen that the distributed $k$-server problem
is an application of the \dsms\ problem. In \cite{bartal1992distributed}, a general translator that transforms any deterministic
global-control competitive $k$-server algorithm into a distributed competitive one is provided.
This yields poly$(k)$-competitive distributed protocols for the line, trees, and the ring synchronous network topologies.
In \cite{bartal1992distributed}, a lower bound of $\Omega(\max\{k,(1/D)\cdot(\log n /\log \log n)\})$ on the competitive ratio for the distributed
$k$-server problem against adaptive online adversaries is also provided where $n$ is the number of processors. $D$
is the ratio between the cost
to move a server and the cost to transmit a message over the same distance in synchronous networks.
\cite{azar2017online} and \cite{bienkowski2018online} study OSD on HSTs and lines, respectively.
\cite{azar2017online} provides an upper bound of $\bigO{\log^3n}$ and \cite{bienkowski2018online} provides
an upper bound of $\bigO{\log n}$ on the competitive ratio for OSD where $n$ is the number of leaves of the input
HST as well as the number of nodes of the input line.

\para{Distributed queuing problem and link-reversal-based protocols:}
A well-known class of protocols has been devised based on link reversals to solve distributed problems
in which the distributed queuing problem is at the core of them
\cite{attiya2010provably,khanchandani2019arvy,naimi1987improvement,raymond1989tree,van1987fair,welch2011link,zhang2010dynamic}. 
In a distributed link-reversal-based protocol nodes keep a link pointing to neighbors in the current or future direction of the
server. When sending a message
over an edge to request the server, the direction of the link flips.
We devise the \gnn\ protocol that is---to the best of our knowledge---the first link-reversal-based protocol that navigates more than one server. 
A well-studied link-reversal-based protocol is called \arrow\ \cite{naimi1987improvement,raymond1989tree,van1987fair}.
Several other tree-based distributed queueing protocols that are similar to
\arrow\ have also been proposed. They operate on fixed
trees. The \textsc{Relay} protocol has been introduced as a
distributed transactional memory protocol \cite{zhang2010dynamic}. It is run on top of a fixed
spanning tree similar to \arrow; however, to more efficiently deal with
aborted transactions, it does not always move the shared object to the
node requesting it. Further, in \cite{attiya2010provably}, a
distributed directory protocol called \textsc{Combine} has been
proposed. \textsc{Combine} like \gnn\ runs on a fixed overlay tree, and it is in particular
shown in \cite{attiya2010provably} that \textsc{Combine} is starvation-free.

The first paper to study the competitive ratio of concurrent
executions of a distributed queueing protocol is
\cite{herlihy2001competitive}. It shows that in synchronous
executions of \arrow\ on a tree $T$ for one-shot executions, the total cost of \arrow\ is
within a factor $\bigO{\log m}$ compared to the optimal queueing cost
on where $m$ is the number of requests.
This analysis has later been extended to the general concurrent setting where requests are
invoked over time. In \cite{herlihy2006dynamic},
it is shown that in this case, the total cost of \arrow\ is within a
factor $O(\log D)$ of the optimal cost on $T$ where $D$ is the diameter of
$T$. Later, the same bounds have also been proven for \textsc{Relay}
\cite{zhang2010dynamic}. Typically, these protocols are run on a
spanning tree or an overlay tree on top of an underlying general
network topology. In this case, the competitive ratio becomes
$O(s\cdot \log D)$, where $s$ is the stretch of the tree. Finally, \cite{ghodselahi2017dynamic} has shown
that when running \arrow\ on top of HSTs, a randomized distributed online queueing protocol
is obtained with expected competitive ratio $O(\log n)$ against an oblivious adversary even on general
$n$-node network topologies. The result holds even if the queueing
requests are invoked over time and even if communication is asynchronous.
The main technical result of the paper shows that the competitive ratio of \arrow\ is
constant on HSTs.

\para{Online tracking of mobile users:} A similar problem to \dsms\ is the online mobile user tracking problem \cite{awerbuch1995}.
In contrast with \dsms\ where a request $r$ results in moving a server to the requesting point,
here the request $r$ can have two types: find request that does not result in moving the mobile user and move request.
A request in \dsms\ that is invoked by $v$ can be seen as a combination of a find request that is invoked at $v$
in the mobile user problem and a move request invoked at the current address of the mobile user. 
The goal is to minimize the sum of the total communication cost and the total cost incurred for moving the mobile user. 
\cite{awerbuch1995} provides an upper bound of $\bigO{\log^2 n}$ on the competitive ratio for the online
mobile user problem for one-shot executions. Further, \cite{alon1992lower} provides a lower bound of $\Omega(\log n/\log \log n)$
on the competitive ratio for this problem against an oblivious adversary. 
\section{Model, Problem Statement, and Preliminaries}
\label{sec:model}

\subsection{Communication Model}
\label{sec:communicationModel}
We consider a point-to-point communication network that is modeled by
a graph $G=(V,E)$, where the $n$ nodes in $V$ represent the
processors of the network and the edges in $E$ represent bidirectional
communication links between the corresponding processors. We suppose
that the edge weights are positive and are normalized such that the
weight of each edge will be at least $1$. If $G$ is unweighted, then we
assume that the weight of an edge is $1$. We consider the message
passing model \cite{peleg2000distributed} where neighboring processors
can exchange messages with each other.
The communication links can have different latencies. These latencies
are not even under control of an optimal offline distributed protocol.
We consider both synchronous and asynchronous systems. In a
synchronous system, the latency for sending a message over an edge
equals the weight of the edge. In an asynchronous system, in contrast,
the messages arrive at their destinations after a finite but unbounded
amount of time. Messages that take a longer path may arrive earlier,
and the receiver of a message can never distinguish whether a message
is still in transit or whether it has been sent at all. For our
analysis, however, we adhere to the conventional approach where the
latencies are scaled such that the latency for sending a message over
an edge is upper bounded by the edge weight in the ``worst case'' (for
every legal input and in every execution scenario) (see Section 2.2 in
\cite{peleg2000distributed} for more information).   

\subsection{Distributed Serving with Mobile Servers (\dsms) Problem}
\label{sec:dsms}
The input for \dsms\ problem for a graph $G$ consists of $k\geq 1$
identical mobile servers that are initially located at different nodes
of $G$ and a set $\R$ of requests that are invoked at the nodes at any
time\label{no:requests}. A request $r_i \in \R$ is represented by
$(v_i,t_i)$ where node $v_i$ invoked request $r_i$ at time $t_i \geq
0$\label{no:singleRequest}. A distributed protocol \alg\ that solves
the \dsms\ problem needs to serve each request with one of the $k$
servers at the requested node. Hence, \alg\ must schedule all requests
that access a particular server. Consequently, \alg\ outputs $k$
global schedules such that the request sets of these schedules form a
partition of $\R$ and all requests of the schedule $\pi^z_{\alg}$
consecutively access the server $s^z$ where $z \in
\set{1,\dots,k}$\label{no:schedule}\label{no:server}. We assume that
at time $0$, when an execution starts, the tail of schedule
$\pi^z_{\alg}$ is at a given node $v^z_0\in V$ that hosts $s^z$.
Formally, this is modeled as a ``dummy request'' $r^z_0=(v^z_0,0)$
that has to be scheduled first in the schedule $\pi^z_{\alg}$ by
\alg\label{no:dummy}. Consider two requests $r_i$ and $r_j$ that are
consecutively served by $s^z$ where $r_i$ is scheduled after $r_j$. To
schedule request $r_i$ the protocol needs to inform node $v_j$, the
predecessor request $r_j$ in the constructed schedule.
As soon as $r_j$ is served by $s^z$, node $v_j$ sends the server to
$v_i$ for serving $r_i$ using an underlying routing facility that
efficiently routes messages. The goal is to minimize the total
communication cost, i.e., the sum of the latencies of all messages
sent during the execution of \alg.

\subsection{Preliminaries}
\label{sec:preliminaries}
Consider a distributed protocol \alg\ for the \dsms\ problem when
requests can arrive at any time. Let $\R$ denote the set of requests,
including the dummy requests. Assume that \alg\ partitions $\R$ into
$k$ sets $\R^1_{\alg},\dots,\R^k_{\alg}$, and that it schedules the
requests in set $\R^z_{\alg}$ according to permutation $\pi^z_{\alg}$.
Denote the request at position $i$ of $\pi^z_{\alg}$ by
$r_{\pi^z_{\alg}(i)}$\label{no:requestPartition}. The dummy request
$r^z_0$ of $\pi^z_{\alg}$ is represented by $r_{\pi^z_{\alg}(0)}$. Let
$\latency_{\alg}(\mu)$ denote the latency of message $\mu$ as routed
by \alg\label{no:latency}. For every $i \in \set{1,\dots,|\R|-1}$, if
$r_i$ belongs to $\R^z_{\alg}$, the communication cost
$\delay_{\alg}\big(r_{\pi^z_{\alg}(i-1)},r_{\pi^z_{\alg}(i)}\big)$\label{eq:delayCost}
incurred for scheduling $r_{\pi^z_{\alg}(i)}$ as the successor of
$r_{\pi^z_{\alg}(i-1)}$ is the sum of the latencies of all messages sent
by \alg\ to schedule $r_{\pi^z_{\alg}(i)}$ immediately
after $r_{\pi^z_{\alg}(i-1)}$. The total communication cost of \alg\
for scheduling all requests in $\R^z_{\alg}$ is defined as
\begin{equation}
  \Delay_{\alg}(\pi^z_{\alg}) := 
         \sum_{i=1}^{|\R^z_{\alg}|-1}\delay_{\alg}\left(r_{\pi^z_{\alg}(i-1)},r_{\pi^z_{\alg}(i)}\right).\label{eq:componentTotalCost}
\end{equation}
The total communication cost of \alg \ for scheduling all requests in $\R$, therefore, is
\begin{equation}\label{eq:totalCost}
	\Delay_{\alg} := \sum_{z=1}^{k} \Delay_{\alg}(\pi^z_{\alg}).
\end{equation}

\subsection{Hierarchically Well-Separated Trees (HSTs)}
\label{sec:hst}

Embedding of a metric space into probability distributions over tree
metrics have found many important applications in both centralized and
distributed settings
\cite{azar2017online,bansal2011polylogarithmic,ghodselahi2017dynamic}.
The notion of a hierarchically well-separated tree was defined by Bartal in \cite{bartal96}.

\begin{definition}[\textbf{$\alpha$-HST}]\label{de:hst}
  For $\alpha>1$ an $\alpha$-HST of depth $h$ is a rooted tree with
  the following properties: The children of the root are at a distance
  $\alpha^{h-1}$ from the root and every subtree of the root is an
  $\alpha$-HST of depth $h-1$. A tree is an HST if it is an
  $\alpha$-HST for some $\alpha>1$.
\end{definition}

The definition implies that the nodes two hops away from the root are
at a distance $\alpha^{h-2}$ from their parents. The probabilistic
tree embedding result of \cite{fakcharoenphol2003tight} shows that for
every metric space $(X,d)$ with minimum distance normalized to $1$ and
for every constant $\alpha> 1$ there is a randomized construction of
an $\alpha$-HST $T$ with a bijection $f$ between the points in $X$ and
the leaves of $T$ such that a) the distances on $T$ are dominating the
distances in the metric space $(X,d)$, i.e., $\forall x,y\in X:
d_T\big(f(x),f(y)\big)\geq d(x,y)$ and such that b) the expected tree
distance is $\E\big[d_T\big(f(x),f(y)\big)\big] = O(\alpha \log
|X|/\log \alpha)\cdot d(x,y)$ for every $x,y\in X$. The length of the
shortest path between any two leaves $u$ and $v$ of $T$ is denoted by
$d_T(u,v)$\label{no:shortestDist}. An efficient distributed
construction of the probabilistic tree embedding of
\cite{fakcharoenphol2003tight} has been given in
\cite{ghaffari2014near}.
\section{The Distributed GNN Protocol}
\label{sec:gnn}
In this section the \gnn\ protocol is introduced. 
\subsection{Description of GNN}
\label{sec:gnnDescription}

\gnn\ runs on overlay trees and outputs a feasible
solution for the \dsms\ problem. Consider a rooted tree $T=(V_T,E_T)$
whose leaves correspond to the nodes of the underlying graph
$G=(V,E)$, i.e., $V \subseteq V_T$. Let $n=|V|$. The $k \geq 1$
identical mobile servers are initially at different leaves of $T$.
Further, there is a dummy request at every leaf that initially hosts a
server. The leaves of $T$ can invoke requests at any time. A leaf node
can invoke a request while it is hosting a server and a leaf can also
invoke a request while its previous requests have not been served yet.
Initially, a directed version of $T$ is constructed and denoted by
$H$\label{no:H}, the directed edges of $H$ are called {\em links}.
During an execution of \gnn, \gnn\ changes the directions of the links. Denote by
$v.\pointer$ the set of neighbors of $v$ that are pointed by $v$. After a leaf $u$
has invoked a request it sends a find-predecessor message denoted by
$\mu(u)$ along the links to inform the node of the predecessor
request in the global schedule\label{no:messageSender}. The routing of
$\mu(u)$ is explained below.  At the beginning before any message is sent and for any server, all the nodes
on the direct path from the root of $T$ to the leaf that hosts the server, point to the server. Further, the
host points to itself and creates a self-loop. Hence, we have $k$ directed paths with downward links from the
root of $T$ to the points of the current tails of the schedules. Any other node points to its parent with an
upward link. Therefore, the sets $v.\pointer$ for all $v \in V_T$ are non-empty at the beginning of the executing the protocol. \Cref{fig:initial_gnn} shows the directed HST at the beginning as an example.

\begin{algorithm}[H]
\DontPrintSemicolon
\SetKwInOut{Input}{Input}
\Input{The rooted tree $T$, $k$ identical mobile servers that are initially
at distinct leaves of $T$, and a set of requests that are invoked over time}

\BlankLine

\SetKwInOut{Output}{Output}
\SetAlgoNoLine
\Output{$k$ schedules for serving all requests}

\BlankLine

\nonl \textbf{Upon requesting a service: Algorithm \ref{alg:newReq}}\;

\BlankLine

\nonl \textbf{Upon receiving a find-predecessor message: Algorithm \ref{alg:findReceived}}\;

\caption{\gnn\ Protocol}
\label{alg:gnn}
\end{algorithm}

\para{Upon $u$ invoking a new request:} Consider the leaf node $u$ when it invokes a new request $r$.
If $u$ has a self-loop, then $r$ is scheduled immediately behind the last request that has been invoked at $u$.
Otherwise, the leaf $u$ atomically sends $\mu(u)$ to its parent through an upward
link, $u$ points to itself, and the link from $u$ to its parent is removed.
We suppose that messages are reliably delivered. The details of this part of the protocol are given by
\Cref{alg:newReq}. See \Cref{fig:initiate_gnn} as an example.

\begin{algorithm}[H]
\DontPrintSemicolon
\SetAlgoNoLine

\textbf{do atomically}\;  
\tcc{suppose $u.\pointer=\set{v}$ ($u$ as a leaf always points either to itself or to its parent)}
\pushline \eIf{$u=v$}
{    
	$r$ is scheduled immediately after the last request that has been invoked by $u$\;
}
{  
	$u$ sends $\mu(u)$ to $v$\;
	$u.\pointer:=\set{u}$\; \label{le:ClaimUniqueEdge1}   
}
\popline \textbf{end}\;

\caption{Upon $u$ invoking a new request $r$}
\label{alg:newReq}
\end{algorithm}

\begin{figure}[H]
    \centering
    \begin{subfigure}{0.3\textwidth}
      \centering
      \includegraphics[width=\textwidth]{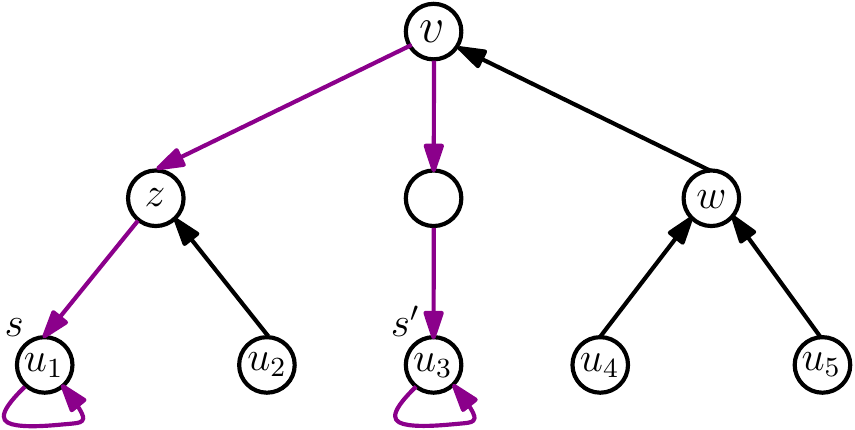}
      \caption{initial system state}
      \label{fig:initial_gnn}
    \end{subfigure}
    \hfill
    \begin{subfigure}{0.3\textwidth}
      \centering
      \includegraphics[width=\textwidth]{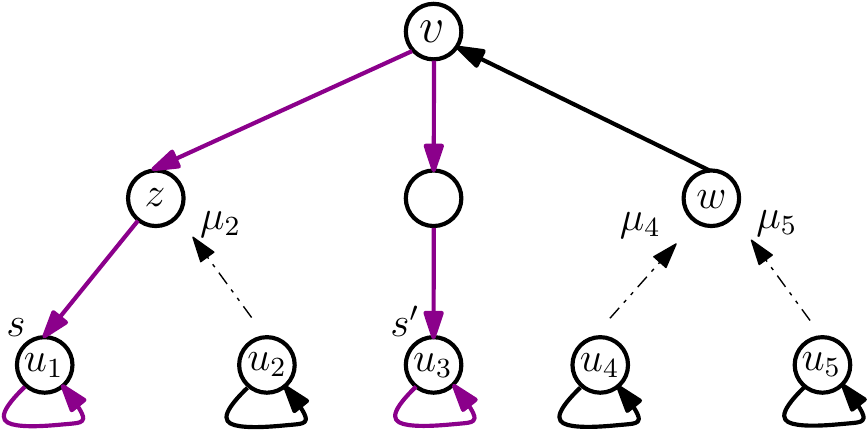}
      \caption{step 1}
      \label{fig:initiate_gnn}
      \end{subfigure}
    \hfill
    \begin{subfigure}{0.3\textwidth}
      \centering
      \includegraphics[width=\textwidth]{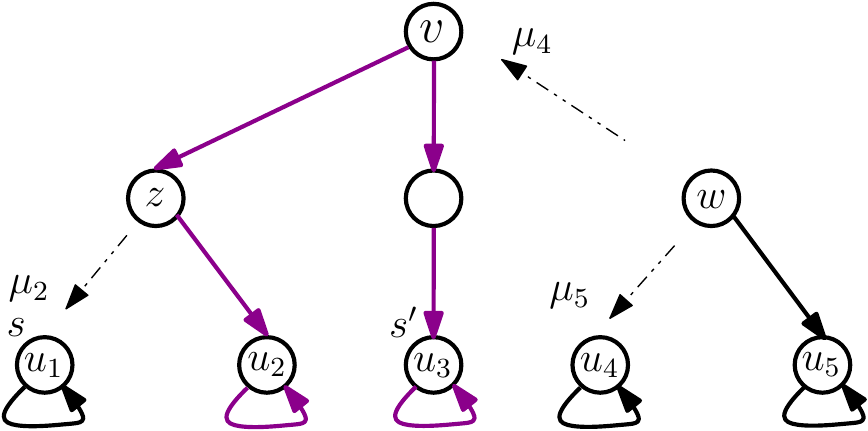}
      \caption{step 2}
      \label{fig:intermediate1_gnn}      
      \end{subfigure}     
      \caption{\gnn\ protocol: (a) The servers $s$ and $s'$
      serve requests in schedules $\pi$ and $\pi'$, respectively. The dummy requests at $u_1$
      and $u_3$ are the initial tails of $\pi$ and $\pi'$, respectively. (b) Nodes $u_2$, $u_3$, $u_4$, and $u_5$ respectively issue requests
      $r_2$, $r_3$, $r_4$, and $r_5$ at the same time and send the find-predecessor messages $\mu_2$, $\mu_3$, $\mu_4$, and $\mu_5$, respectively,
      along the arrows. (c) The request $r_3$ is the current tail of $\pi'$.
    Both $\mu_4$ and $\mu_5$ reach  $w$ at the same time. First, the message $\mu_4$ is arbitrarily processed by $w$
    and $w$ forwards $\mu_4$ towards $v$ and therefore $\mu_5$ is deflected towards $u_4$.}
    \label{fig:intermediate001_gnn}
\end{figure}

\para{Upon $w$ receiving $\mu(u)$ from node $v$:}
Suppose that node $w$ receives a find-predecessor message $\mu(u)$ from node $v$.
The node $w$ executes the following steps atomically. If $w$ has at least one downward link,
then $\mu(u)$ is forwarded to some child of $w$ through a downward link (ties are broken
arbitrarily).
Then, $w$ removes the
downward link and adds a link to $v$---independently of whether $v$ is the parent or a child of $w$.
If $w$ does not have a downward link, it either points to itself, or it has an upward link.
In the latter case,
$\mu(u)$ is atomically forwarded to the parent of $w$, the upward link from $w$ to its parent is removed and
then $w$ points to $v$ using a downward link.
Otherwise, $w$ is a leaf and points to itself. The request $r$ invoked
by $u$ is scheduled behind the last request that has been invoked by $w$. Then, $w$ removes the link that points to itself
and points to $v$ using an upward link. The details of this part of the protocol are given by
\Cref{alg:findReceived}. Also, see \Cref{fig:intermediate1_gnn} and \Cref{fig:intermediate234_gnn}.

\begin{algorithm}[H]
\DontPrintSemicolon
\SetAlgoNoLine
\textbf{do atomically}\; 
  
\pushline \eIf{there exists a child node in $w.\pointer$}
{ 
$z=:$ an arbitrary child node in $w.\pointer$\;
}
{
$z=:$ the only node in $w.\pointer$\;
}
$w.\pointer:=w.\pointer - \set{z}$\; \label{le:startClaimUniqueEdge2}
$w.\pointer:=w.\pointer \cup \set{v}$\; \label{le:endClaimUniqueEdge2}
\eIf{$z \neq w$}
{
	\pushline $w$ sends $\mu(u)$ to $z$\;
}
{
	the corresponding request to $\mu(u)$ is scheduled immediately after the last request that has been invoked by $w$\;
}
\popline \textbf{end}\; 

\caption{Upon $w$ receiving $\mu(u)$ from node $v$ ($w \neq v$)}
\label{alg:findReceived}
\end{algorithm} 

\begin{figure}[H]
    \centering
    \begin{subfigure}{0.3\textwidth}
      \centering
      \includegraphics[width=\textwidth]{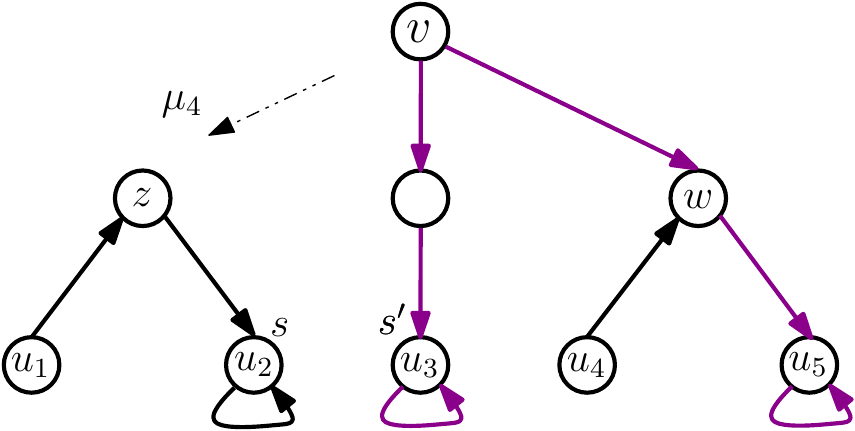}
      \caption{step 3}
      \label{fig:intermediate2_gnn}
    \end{subfigure}
    \hfill
    \begin{subfigure}{0.3\textwidth}
      \centering
      \includegraphics[width=\textwidth]{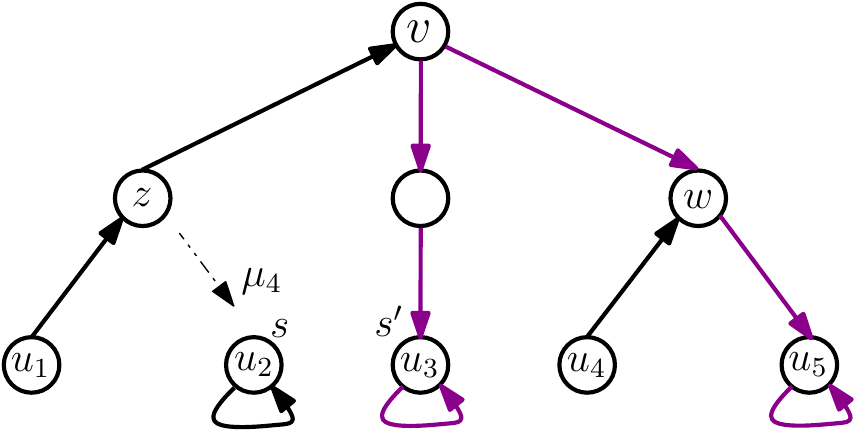}
      \caption{step 4}
      \label{fig:intermediate3_gnn}
    \end{subfigure}
    \hfill
    \begin{subfigure}{0.3\textwidth}
      \centering
      \includegraphics[width=\textwidth]{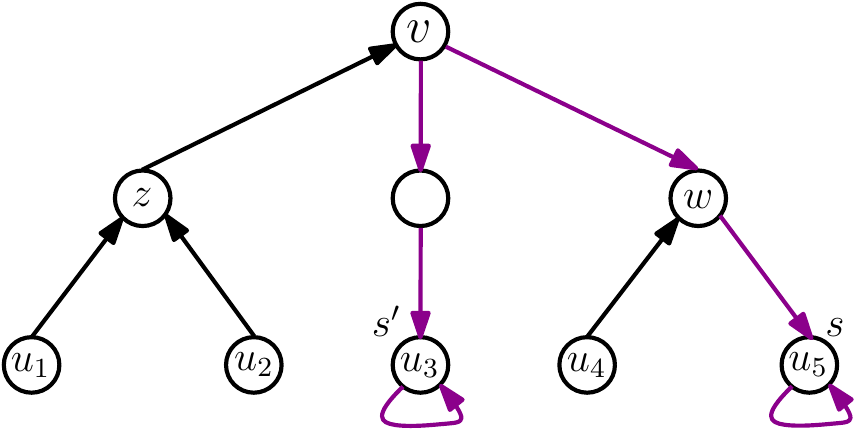}
      \caption{step 5}
      \label{fig:intermediate4_gnn}
    \end{subfigure}
    \caption{\gnn\ protocol: (a) The request $r_2$ is scheduled behind
    the current tail of $\pi$ and now $r_2$ is the current tail of $\pi$ and $u_2$ obtains the server $s$.
	The request $r_5$ is scheduled behind $r_4$ while $\mu_4$ is still in transit.
	(b) $\mu_4$ still follows arrows, reversing the directions of arrows along its way.
	(c) The request $r_4$ is scheduled behind $r_2$ and $s$ moves to
      $u_4$. After $r_4$ served by $s$, then $s$ moves from $u_4$ to $u_5$
      since $r_5$ has already been scheduled behind $r_4$. \Cref{fig:intermediate001_gnn}--\Cref{fig:intermediate234_gnn}
      illustrates that there is always at least one connected path with purple
      arrows from the root to some leaf.}
    \label{fig:intermediate234_gnn}
\end{figure}

\subsection{Correctness of GNN}
\label{sec:gnnCorrectness}

Regarding the description of \gnn, we need to show two invariants for \gnn. The first is that \gnn\
eventually schedules all requests. The second one is that \gnn\ is starvation-free so that
a scheduled request is eventually served. 

\subsubsection{Scheduling Guarantee}
\label{sec:scheduling}
\begin{theorem}\label{th:scheduleGuarantee}
\gnn\ guarantees that the find-predecessor message of any node that invokes a request always
reaches a leaf node $v$ in a finite time such that $v.\pointer=\set{v}$. 
\end{theorem}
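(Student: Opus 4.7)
My plan is to prove \Cref{th:scheduleGuarantee} by establishing two structural invariants that hold throughout every execution of \gnn\ and using them to show that the find-predecessor message travels a short path in $T$ ending at a self-loop leaf. The invariants are: (I1) for every $x\in V_T$, the pointer set $x.\pointer$ is non-empty at every global state; and (I2) for every tree edge $\set{a,b}\in E_T$, the total number of directed pointers on the edge plus the number of in-flight find-predecessor messages crossing it equals exactly one at every instant. In particular, (I2) implies that at most one of $a\to b$ and $b\to a$ is present in the pointer structure at any time.

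I would prove both invariants by induction on the number of atomic protocol steps. Initially each tree edge carries a unique orientation (downward along any server path, upward otherwise) with no in-flight messages, and every node has at least one pointer (a self-loop at server-hosting leaves, one downward pointer per descending server path at internal nodes, and an upward pointer elsewhere), so (I1) and (I2) hold. The non-trivial branch of \Cref{alg:newReq} at $u$ deletes the unique element $v$ of $u.\pointer$, adds the self-loop $\set{u}$, and injects one in-flight message $u\to v$, preserving both invariants on the affected edges and at $u$. For \Cref{alg:findReceived} at $w$ receiving $\mu(u)$ from $v$, the in-flight $v\to w$ disappears and the pointer $w\to v$ is added, preserving the balance on $\set{v,w}$; simultaneously $w$ drops $z$ and, if $z\neq w$, converts the pointer $w\to z$ into an in-flight message on $\set{w,z}$, preserving balance there as well. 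The decisive point for (I1) is that $v\notin w.\pointer$ just before the update, which follows from (I2) applied to $\set{v,w}$ since the in-flight message $v\to w$ already accounts for the unit on that edge; hence removing $z\in w.\pointer$ and adding the fresh element $v$ keeps $|w.\pointer|$ unchanged.

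Using the invariants, I would then argue that $\mu(u)$ follows an ``up-then-down'' walk of length at most $2\cdot\mathrm{height}(T)$ in $T$. Whenever $w$ receives $\mu(u)$ from a child $v$, (I2) forces $v\notin w.\pointer$, so either $w.\pointer$ contains some other child pointer---in which case the child-preference rule of \Cref{alg:findReceived} selects one and the message starts descending---or $w.\pointer$ consists solely of the upward pointer to $w$'s parent and the message continues upward. Once the message starts descending, at every subsequent node $z$ the sender is $z$'s parent $w$, again excluded from $z.\pointer$ by (I2); hence $z.\pointer$ contains only downward pointers or the self-loop $\set{z}$, and the message either continues downward or, at a leaf $v$ with $v.\pointer=\set{v}$, triggers the self-loop branch and schedules the request. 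Because $T$ has bounded depth and messages are delivered reliably in finite time, the walk terminates at such a leaf after finitely many hops.

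The main obstacle is formalizing (I2) in the fully asynchronous regime where many find-predecessor messages may be in transit simultaneously: the invariant has to be stated as a per-edge conserved quantity that bundles pointers and in-flight messages together, and one must verify that no interleaving of events---for example, a new request invoked at $u$ while an earlier $\mu(u)$ is still travelling on $\set{u,v}$---introduces a second token on the same edge. The atomicity of \Cref{alg:newReq}, together with the fact that a new request at a node $u$ with $u.\pointer=\set{u}$ leaves the pointer structure untouched and sends no message, is precisely what rules this out.
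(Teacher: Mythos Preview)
Your proposal is correct and follows essentially the same route as the paper: your invariants (I1) and (I2) are exactly \Cref{le:uniqueEdge} and \Cref{le:edgeStates}, and your inductive verification of them matches the paper's. The only cosmetic difference is the termination step: the paper argues by contradiction that no node is visited twice (whence the path is simple and bounded by the tree's diameter), whereas you argue constructively that the walk is ``up-then-down'' of length at most $2\cdot\mathrm{height}(T)$; in a tree these two conclusions are equivalent, and your version in fact makes the direct-path claim of \Cref{le:uniquePath} explicit.
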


We prove the scheduling guarantee stated in \Cref{th:scheduleGuarantee} using the following properties of \gnn. First, we need to
show that any node always has at least one outgoing edge in \gnn.

\begin{lemma}\label{le:uniqueEdge}
In \gnn, $v.\pointer$ is never empty for any node $v \in V_T$.
\end{lemma}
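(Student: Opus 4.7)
The plan is to prove Lemma by a simple induction on the sequence of atomic steps executed by \gnn. Concretely, I define the invariant $P$: at any time (outside an atomic block), $v.\pointer \neq \emptyset$ for every $v \in V_T$. I verify $P$ at initialization and then show that each of the two atomic operations defined by Algorithms~\ref{alg:newReq} and \ref{alg:findReceived} preserves it.

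For the base case, I appeal directly to the description of the initial state in \Cref{sec:gnnDescription}. Every leaf that hosts a server has a self-loop and therefore points to itself; every internal node on one of the $k$ root-to-server paths points to its child on that path; every remaining node points to its parent. Since each node of $T$ falls into (at least) one of these three categories, $v.\pointer$ starts out non-empty everywhere.

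For the inductive step, I examine the two atomic blocks. In Algorithm~\ref{alg:newReq}, only $u.\pointer$ is touched: either $u$ already had a self-loop (and nothing changes), or $u.\pointer = \{v\}$ with $v$ its parent is replaced by $u.\pointer := \{u\}$. Either way, $u.\pointer$ stays of cardinality one, and no other node's link set is affected. In Algorithm~\ref{alg:findReceived}, only $w.\pointer$ is modified, by the two lines $w.\pointer := w.\pointer - \{z\}$ and $w.\pointer := w.\pointer \cup \{v\}$ executed inside the same atomic block. The element $z$ is chosen from the (non-empty, by the inductive hypothesis) current set $w.\pointer$, and regardless of whether $z = v$ or $z \neq v$ the resulting set $(w.\pointer - \{z\}) \cup \{v\}$ contains $v$ and hence is non-empty. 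No other node's pointers are affected.

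There is no real obstacle here; the only subtlety worth emphasizing is that the removal and addition in Algorithm~\ref{alg:findReceived} occur inside a single atomic block, so no external observer can see an intermediate state with $w.\pointer = \emptyset$. Combining the base case with the fact that both atomic operations preserve $P$, the invariant holds throughout every execution of \gnn, which is exactly the statement of the lemma.
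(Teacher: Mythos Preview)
Your proof is correct and follows essentially the same approach as the paper: verify that $v.\pointer$ is non-empty at initialization, and then observe that each atomic block (Algorithm~\ref{alg:newReq} and Algorithm~\ref{alg:findReceived}) modifies only a single node's pointer set by removing one element and adding one element, so the set stays non-empty. Your version is slightly more explicit about the induction structure and the atomicity, but the argument is the same.
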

\begin{proof}
At the beginning of any execution, $v.\pointer$ is not empty for any $v \in V_T$.
The set $v.\pointer$ changes only when there is a (find-predecessor) message at $v$ (see Line \ref{le:ClaimUniqueEdge1} of \Cref{alg:newReq}
and Line \ref{le:startClaimUniqueEdge2} and Line \ref{le:endClaimUniqueEdge2} of \Cref{alg:findReceived}).
During an execution, every time $v$ receives a message, a node is removed from $v.\pointer$ while a new node is added to
$v.\pointer$. This also covers the case when at least two messages are received by $v$ at the same time. The node $v$ atomically processes all these
messages in an arbitrary order. Therefore, $v.\pointer$ never gets empty.    
\end{proof}

\begin{lemma}\label{le:edgeStates}
\gnn\ always guarantees that on each edge of $H$, there is either exactly one link or exactly one message in transit. 
\end{lemma}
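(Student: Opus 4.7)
The plan is to prove the invariant by induction on the sequence of atomic events (request invocations and message receptions) in an arbitrary execution of \gnn. Since messages are only produced and links only manipulated inside \Cref{alg:newReq} and \Cref{alg:findReceived}, and each such block is executed atomically, it suffices to verify that every atomic step preserves the invariant on every tree edge.

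For the base case, at the start of the execution $H$ is the directed version of $T$ described in \Cref{sec:gnnDescription}: every tree edge carries exactly one link (oriented downward along each root-to-server path and upward otherwise), and no message is yet in transit. Self-loops at the server-hosting leaves are not tree edges and so do not enter the accounting.

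For the inductive step I would case-split on the triggering event. In \Cref{alg:newReq} with $u.\pointer=\{v\}$ and $v\neq u$, the edge $(u,v)$ carries, by induction, exactly one link from $u$ to $v$ and no message; atomically $u$ sends $\mu(u)$ over $(u,v)$ and sets $u.\pointer:=\{u\}$, leaving $(u,v)$ with exactly one message in transit and no link; no other tree edge changes. In \Cref{alg:findReceived} at node $w$ receiving $\mu(u)$ from $v$, the edge $(v,w)$ held exactly one message before the step; by \Cref{le:uniqueEdge} the set $w.\pointer$ is nonempty, so $w$ chooses some $z\in w.\pointer$. Atomically $w$ replaces $z$ by $v$ in $w.\pointer$, which turns $(v,w)$ into an edge bearing exactly one link from $w$ to $v$ and no message. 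If $z\neq w$ then, by the induction hypothesis applied to the edge $(w,z)$ (which held the link $w\to z$ witnessing $z\in w.\pointer$), $(w,z)$ loses its link and gains the forwarded message, again giving exactly one message in transit and no link. If $z=w$, only a self-loop is removed, so no tree edge is affected. In every case the invariant is preserved on each tree edge.

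The main subtlety, which I would address explicitly, is the situation in which several find-predecessor messages arrive at the same node $w$ simultaneously. Because the protocol stipulates that $w$ processes these atomically in an arbitrary serial order, the inductive argument still applies to each individual processing: each one converts the incoming edge from ``message in transit'' to ``one link toward the sender'' while simultaneously converting one outgoing-link edge (chosen from the current $w.\pointer$, which remains nonempty throughout by \Cref{le:uniqueEdge}) from ``one link'' to ``one message in transit''. Hence no intermediate global state ever violates the invariant, which completes the induction and establishes the lemma.
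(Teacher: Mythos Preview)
Your proof is correct and essentially matches the paper's own argument. The paper argues edge-by-edge, tracking the history of a single edge $(u,v)$ through successive link/message transitions, while you organize the same reasoning as an explicit induction on atomic events of \Cref{alg:newReq} and \Cref{alg:findReceived}; the underlying invariant and the case analysis are the same, and your treatment of simultaneous arrivals via serialized atomic processing is a welcome clarification that the paper leaves implicit.
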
  
\begin{proof}
Initially, either a node points to its parent with an upward link or a node points to its children with downward links in the \gnn\ protocol.
Consider the edge $(u,v)$ where $v \in u.\pointer$. Further, consider the first time in which a message is in transit on $(u,v)$. Immediately
before this transition occurs, $u$ must point to $v$, and there is not any message in transit on the edge. Therefore, w.r.t. the protocol description, the
message must be sent by $u$ to $v$, and the link that points from $u$ to $v$ has been removed. Since there is not any link while
the message is in transit, it is not possible to have a second message to be in transit at the same time. When the message arrives at $v$, the node
$v$ points to $u$, and the message is removed from the edge. The next time, if a message will be transited on the edge, then $v$ must have sent it to $u$
and removed the link that points from $v$ to $u$. 
\end{proof}

\begin{lemma}\label{le:acyclic}
The directed tree $H$ always remains acyclic during an execution,
hence a path from a node to another node in $H$ is always the direct path.
\end{lemma}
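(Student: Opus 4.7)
The plan is to derive both parts of the lemma from Lemma~\ref{le:edgeStates} together with the fact that the underlying overlay tree $T$ is acyclic in the undirected sense. First, I would observe that at any moment during an execution, each (undirected) edge $(u,v)$ of $T$ carries at most one oriented link in $H$: by Lemma~\ref{le:edgeStates}, the edge either holds one link or a single in-transit message, so one cannot simultaneously have both $u \to v$ and $v \to u$ as links of $H$. Hence every link of $H$ determines an orientation of a distinct undirected edge of $T$.

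Next, I would argue acyclicity by contradiction. Suppose at some moment $H$ contains a directed cycle with at least two distinct nodes. By passing to a shortest such cycle, we may assume it is simple: $v_0 \to v_1 \to \dots \to v_{k-1} \to v_0$ with $k \geq 2$ and all $v_i$ pairwise distinct. Then the $k$ underlying undirected edges $\{v_i, v_{(i+1) \bmod k}\}$ of $T$ are pairwise distinct by the observation above, and they form an undirected cycle in $T$. This contradicts the assumption that $T$ is a tree. Note that leaf self-loops, although present in the protocol, are not cycles involving edges of $T$ and are handled separately; if one wishes to include them in the notion of ``cycle'' the statement is understood to concern paths between \emph{different} nodes, and self-loops are irrelevant.

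From acyclicity the ``direct path'' claim follows quickly. Let $u,v \in V_T$ with $u \neq v$ and suppose $u = w_0 \to w_1 \to \dots \to w_\ell = v$ is a directed path in $H$. If some $w_i = w_j$ with $i < j$, then the segment $w_i \to w_{i+1} \to \dots \to w_j$ would be a directed cycle of length at least $1$ in $H$, and after shortening contains a directed cycle on at least two distinct nodes (here one uses again that every link corresponds to a distinct undirected tree edge, so consecutive links cannot cancel each other), contradicting the first part. Hence all $w_i$ are distinct, so the underlying undirected edges $\{w_i, w_{i+1}\}$ form a simple undirected path from $u$ to $v$ in $T$. Since $T$ is a tree, this must be \emph{the} unique $u$-$v$ path in $T$, i.e.\ the direct path.

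The only subtlety I anticipate is the bookkeeping around self-loops and around the atomicity of updates in Algorithms~\ref{alg:newReq} and~\ref{alg:findReceived}: one should be sure that the invariant ``at most one link per tree edge'' is maintained across each atomic step (so that Lemma~\ref{le:edgeStates} applies uniformly), which is essentially what Lemma~\ref{le:edgeStates} already established. With that in hand, the rest is purely a consequence of $T$ being a tree, and no further calculation is needed.
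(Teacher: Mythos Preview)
Your proposal is correct and follows essentially the same approach as the paper: invoke Lemma~\ref{le:edgeStates} to ensure that no undirected edge of $T$ ever carries two opposite links, and then use that the underlying tree $T$ is acyclic to conclude that $H$ cannot contain a directed cycle. The paper's proof is terser and does not spell out the ``direct path'' consequence or the self-loop bookkeeping as you do, but the argument is the same.
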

\begin{proof}
The \gnn\ protocol runs on the directed tree $H$ in which the underlying tree---that is, $T$---is fixed, and the directions of links on $H$ are only changed.
Therefore, $H$ is acyclic because the tree is always fixed, and w.r.t. \Cref{le:edgeStates} that shows that it never occurs a state where on
the edge $(u,v)$, $u$ and $v$ point to each other at the same time. 
\end{proof}

The following lemma implies that a find-predecessor message always
reaches the node of its predecessor using a direct path constructed by \gnn.

\begin{lemma}\label{le:uniquePath}
\gnn\ guarantees that there is always at least one direct path in $H$ from any leaf node $u$ to a
leaf node $v$ where $v.\pointer=\set{v}$.
\end{lemma}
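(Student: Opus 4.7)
The plan is to combine the three previously proved lemmas with one additional structural invariant and then argue that starting from any leaf $u$ and greedily following a non-self outgoing link at each visited node yields a simple directed walk in $H$ that must terminate at a leaf $v$ with $v.\pointer = \{v\}$.

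The additional invariant I would establish is that no internal node of $T$ ever carries a self-loop; more precisely, $w.\pointer$ is always a subset of the tree-neighbors of $w$ for every internal node $w$. I would prove this by induction on the execution: initially every internal node points only to its parent or to some of its children, and by inspecting \Cref{alg:findReceived} every element $v$ inserted into $w.\pointer$ is the sender of a message to $w$ and hence a tree neighbor of $w$. A parallel and simpler argument shows that for every leaf $u$, $u.\pointer$ is always a singleton equal to either $\{u\}$ or $\{p(u)\}$, where $p(u)$ is the unique tree neighbor of $u$: the updates in \Cref{alg:newReq} and in \Cref{alg:findReceived} always reset $u.\pointer$ to a singleton.

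With these invariants in hand, the iterative construction is straightforward: starting at $u$, if $u.\pointer = \{u\}$ we are done; otherwise follow the unique outgoing link from $u$ to $p(u)$. At each subsequent node $w$, \Cref{le:uniqueEdge} together with the structural invariant guarantees that either $w$ is a leaf with $w.\pointer = \{w\}$ (and we stop) or $w.\pointer$ contains a non-self tree neighbor along which we extend the walk. By \Cref{le:acyclic} the walk never revisits a node, so it ends in at most $|V_T|$ steps; its endpoint cannot have an empty link set by \Cref{le:uniqueEdge}, and cannot be internal by the structural invariant, so it must be a leaf $v$ with $v.\pointer = \{v\}$. Because the walk traverses distinct tree edges, it coincides with the unique path in $T$ from $u$ to $v$, giving the ``direct'' path required by the statement. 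The main obstacle is the induction underlying the structural invariant, and in particular a careful verification that every update of $w.\pointer$ in \Cref{alg:findReceived} inserts only a tree neighbor of $w$; once this invariant is in place, the rest of the argument is a direct combination of \Cref{le:uniqueEdge} and \Cref{le:acyclic}.
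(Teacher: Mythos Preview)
Your proposal is correct and follows essentially the same approach as the paper: both arguments combine \Cref{le:uniqueEdge} and \Cref{le:acyclic} to conclude that following outgoing links from $u$ yields a simple walk terminating at a self-looping leaf. Your version is more detailed than the paper's very terse proof---in particular, you make explicit the invariant that internal nodes never point to themselves, which the paper leaves implicit when it asserts the walk ends at a leaf---but the underlying idea is the same.
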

\begin{proof}
If the leaf node $u$ points to itself, we are done. Otherwise, w.r.t. \Cref{le:uniqueEdge} there must be a path from $u$ to a leaf node $v$ since
the tree $H$ is acyclic. This path must be a direct path w.r.t. \Cref{le:acyclic}. The leaf node $v$ must point to itself w.r.t. \Cref{le:uniqueEdge}.
\end{proof}

\begin{proof}[\textbf{Proof of \Cref{th:scheduleGuarantee}}]
Using \Cref{le:uniquePath}, it remains to show that any message traverses a direct path between two leaves in a finite time.
The number of edges on the direct path between any two leaves of $T$ is upper bounded by the diameter of the tree. Further, any
message that is in transit at edge $(u,v)$ from $u$ to $v$ is delivered reliably at $v$ in a finite time. Therefore, to show that a request is
eventually scheduled in a finite time, it remains to show that a message will never be at a node for the second time.
To obtain a contradiction, assume that the message $\mu$ is the first message that visits a node twice, and the first node visited
twice by $\mu$ denoted by $v \in V_T$. With respect to \Cref{le:acyclic}, there is never a cycle in $H$. Therefore, the edge $e=(u,v)$ must
be the first edge that is traversed
by $\mu$ first from $v$ to $u$ and immediately from $u$ to $v$ for the second time, and $\mu$ must be the first message that traverses an edge twice.
This implies that immediately before $u$ receives $\mu$, the node $u$ points to $v$, and $\mu$ is in transit on $e$ at the same time. This is a contradiction
with \Cref{le:edgeStates}.    
\end{proof}

\subsubsection{Serving Guarantee}
\label{sec:serving}
\begin{theorem}\label{th:servingGuarantee}
\gnn\ is starvation-free. In other words, any scheduled request is eventually served by some server. 
\end{theorem}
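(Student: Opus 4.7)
The plan is to prove the serving guarantee by induction on the position $i$ of a request in its schedule $\pi^z_{\gnn}$. Say a request $r_{\pi^z_{\gnn}(i)}$ is \emph{served at time $T$} if server $s^z$ is physically present at the invoking leaf of that request at time $T$ (and has executed the service there). By \Cref{th:scheduleGuarantee}, every invoked request is eventually scheduled in some $\pi^z_{\gnn}$ at some finite position, so it suffices to show that each position $i \geq 0$ is served in finite time.

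For the base case $i=0$, the dummy request $r^z_0 = r_{\pi^z_{\gnn}(0)}$ is trivially served at time $0$ since $s^z$ is initially located at its invoking leaf. For the inductive step, assume $r_{\pi^z_{\gnn}(i-1)}$ has been served at some finite time $T_{i-1}$, so that $s^z$ resides at the invoking leaf $v_{i-1}$ from time $T_{i-1}$ onwards (until it is sent away). By \Cref{th:scheduleGuarantee}, the find-predecessor message of $r_{\pi^z_{\gnn}(i)}$ reaches in finite time some leaf $v$ with $v.\pointer = \set{v}$; by the definition of the schedule, that leaf is exactly $v_{i-1}$, and the scheduling step (the leaf branch of \Cref{alg:findReceived}) takes place at some finite time $T'$. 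Thus at time $T^* = \max(T_{i-1}, T')$ the server $s^z$ is present at $v_{i-1}$ and $v_{i-1}$ has registered $r_{\pi^z_{\gnn}(i)}$ as the successor request. At this moment $v_{i-1}$ dispatches $s^z$ to $v_i$ using the underlying routing facility, which delivers the server in finite time, so $r_{\pi^z_{\gnn}(i)}$ is served at some finite time $T_i$.

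Note that this induction is well-founded: to reach position $i$ we traverse the finite chain $0, 1, \ldots, i-1$, and each scheduling event is triggered by a distinct message arrival, so each $T'$ and hence each $T_j$ is finite. Combined with the fact that every invoked request is eventually assigned some finite position $i \geq 1$ in exactly one $\pi^z_{\gnn}$, this yields the claim.

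The main subtlety to handle carefully is the interleaving between the two events whose completion is required for serving $r_{\pi^z_{\gnn}(i)}$: the physical arrival of $s^z$ at $v_{i-1}$, and the registration of $r_{\pi^z_{\gnn}(i)}$ as the successor at $v_{i-1}$. These may occur in either order because the find-predecessor message of $r_{\pi^z_{\gnn}(i)}$ and the earlier server motions travel on unrelated paths with arbitrary asynchronous latencies; one must observe that both orderings lead to a dispatch at time $\max(T_{i-1}, T')$, so no deadlock can arise. The remaining ingredients (finite routing time, reliable message delivery, and absence of circular dependencies, the last of which follows from each request having a unique predecessor fixed at scheduling time) are standard.
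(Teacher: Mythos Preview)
Your inductive argument matches the paper's own proof of \Cref{th:servingGuarantee} once the schedules $\pi^z_{\gnn}$ are known to be well-defined sequences rooted at the dummy requests. The gap is that you take this for granted. You write that ``every invoked request is eventually assigned some finite position $i \geq 1$ in exactly one $\pi^z_{\gnn}$'' and justify the ``absence of circular dependencies'' by noting that ``each request [has] a unique predecessor fixed at scheduling time.'' But a unique predecessor (out-degree~$1$ in the points-to-predecessor graph) does not preclude cycles: a directed graph in which every node has out-degree exactly~$1$ and in-degree at most~$1$ is a disjoint union of paths \emph{and cycles}. If a cycle of requests existed---each the predecessor of the next---none of them would trace back to a dummy request, none would occupy a finite position in any $\pi^z_{\gnn}$, and your induction would never reach them. \Cref{th:scheduleGuarantee} only guarantees that each find-predecessor message reaches \emph{some} leaf with a self-loop; it says nothing about whether the resulting predecessor chain terminates at a server.

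This is precisely the content of \Cref{le:queuDirectPath} in the paper, whose non-trivial part is the connectedness argument (part~3): one must rule out a cycle of requests disconnected from every dummy request. The paper does this by a case analysis on the lowest common ancestor of the leaves involved in the putative cycle, exploiting the link-reversal invariants of \gnn. There is no obvious shortcut via a temporal potential such as ``scheduling time is monotone along predecessor links,'' because in an asynchronous execution a request's own find-predecessor message may arrive at its destination \emph{after} its successor's message has already arrived. You need to supply an argument equivalent to \Cref{le:queuDirectPath} before your induction becomes sound.
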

Consider any of $k$ global schedules that produced by \gnn, say $\pi^w_{\gnn}$. 
Assume that there is more than one request scheduled in $\pi^w_{\gnn}$. For any two requests $r_i=(v_i,t_i)$ and $r_j=(v_j,t_j)$
in $\pi^w_{\gnn}$ where $r_i$ is scheduled immediately before $r_j$, we see $e=(r_i,r_j)$ as a directed edge
where $r_j$ points to $r_i$. This edge is actually
simulated by the direct path---by \Cref{le:uniquePath}, a message always finds the node of its predecessor
using a direct path on $H$---between the leaves $v_i$ and $v_j$ that is traversed by the message sent from $v_j$ to $v_i$.
Let $F^w_{\alg}$ denote the graph constructed by the messages of all requests in $\R^w_{\alg}$.
 
\begin{lemma}\label{le:queuDirectPath}
$F^w_{\alg}$ is a directed path towards the head of the schedule, that is, $r^w_0=r_{\pi^w_{\gnn}(0)}$. 
\end{lemma}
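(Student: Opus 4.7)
The plan is to establish three structural facts about $F^w_{\gnn}$: (i) every non-dummy request in $\R^w_{\gnn}$ has exactly one outgoing edge; (ii) every request has at most one incoming edge; (iii) $F^w_{\gnn}$ is acyclic. Since $r^w_0$ is by construction the unique request in $\R^w_{\gnn}$ with no outgoing edge (it is the initial tail with a self-loop at $v^w_0$), these three facts together immediately imply that $F^w_{\gnn}$ is a single directed path terminating at $r^w_0$.

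Fact (i) is essentially the scheduling rule of \gnn: by inspection of \Cref{alg:newReq} and \Cref{alg:findReceived}, a non-dummy request $r_j = (v_j, t_j)$ is added to some schedule by being placed immediately behind exactly one prior request $r_i$. That $r_i$ is either the last request previously invoked at $v_j$ (when $v_j$ held a self-loop at the moment $r_j$ was invoked) or the last request previously invoked at the leaf $v_i$ where the find-predecessor message $\mu(v_j)$ terminates; by \Cref{le:uniquePath} and \Cref{th:scheduleGuarantee} such a leaf is well-defined and is reached in finite time. Hence $r_j$ contributes the unique edge $r_j \to r_i$ to $F^w_{\gnn}$. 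Fact (iii) is easy as well: every edge $r_j \to r_i$ has $r_j$ scheduled strictly after $r_i$, so the edges respect a strict scheduling-time order and no cycle can form.

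The crux is Fact (ii). The plan here is to track the invariant that a leaf $v_i$ holds $v_i.\pointer = \set{v_i}$ precisely while the most recent request $r_i$ it has scheduled is the current tail of its corresponding schedule $\pi^{z}_{\gnn}$. This tail status is consumed by the very next event occurring at $v_i$, which is executed atomically inside the ``do atomically'' blocks of \Cref{alg:newReq} and \Cref{alg:findReceived}: either $v_i$ invokes a new request (creating the unique incoming edge at $r_i$) or a find-predecessor message $\mu(v_j)$ arrives at $v_i$ (again creating a unique incoming edge at $r_i$). In both cases the self-loop is simultaneously reassigned, so $r_i$ cannot acquire a second successor.

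The main obstacle is justifying Fact (ii) under asynchrony: several find-predecessor messages could reach $v_i$ nearly simultaneously while it still holds a self-loop. Here the atomicity guarantees that only the first such message consumes the self-loop; any subsequent message sees an already-updated $v_i.\pointer$ (with $v_i$ removed and the previous sender inserted) and is therefore forwarded further along the tree via the non-self-loop branch of \Cref{alg:findReceived}. This is entirely consistent with the ``one link or one message per edge'' invariant from \Cref{le:edgeStates} and the non-emptiness invariant from \Cref{le:uniqueEdge} that were already used to prove \Cref{le:uniquePath}.
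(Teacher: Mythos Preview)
Your decomposition into (i), (ii), (iii) is logically equivalent to the paper's three parts (out-degree $1$ for non-dummy requests, in-degree $\leq 1$, connectedness): given (i) and (ii), a component not containing $r^w_0$ would have every vertex of out-degree exactly $1$ and in-degree $\leq 1$, hence would be a cycle, so acyclicity and connectedness are interchangeable. Your arguments for (i) and (ii) are fine and match the paper's.

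The gap is Fact (iii). Your justification---``every edge $r_j\to r_i$ has $r_j$ scheduled strictly after $r_i$, so the edges respect a strict scheduling-time order''---is either circular or false. If ``scheduled after'' refers to position in $\pi^w_{\gnn}$, that position is \emph{defined} by the edge relation you are trying to show is acyclic, so the argument assumes its conclusion. If it refers to the wall-clock time at which the edge is established (the moment the find-predecessor message is delivered), that ordering is not monotone along edges: a request $r_j$ can have its message delivered to $v_i$ (establishing $r_j\to r_i$) long before $r_i$'s own message reaches \emph{its} predecessor, simply because asynchronous latencies are arbitrary. Concretely, take leaves $a,b,c$ with the dummy at $a$; at time $1$, $b$ invokes $r_1$ and sends a slow message; at time $2$, $c$ invokes $r_2$ and sends a fast message reaching $b$ at time $3$ (edge $r_2\to r_1$); $r_1$'s message reaches $a$ only at time $10$ (edge $r_1\to r_0$). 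The edge-establishment times are $3$ for $r_2\to r_1$ and $10$ for $r_1\to r_0$, so they are not increasing toward the head.

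This is exactly why the paper's part 3 is the substantive step. The paper proves connectedness (equivalently, acyclicity) by assuming a cycle component $\R^{w,c}_{\alg}$, taking the lowest common ancestor $z$ in $T$ of the invoking leaves, and analysing what happens to the \emph{first} cycle message that reaches $z$: it must either leave the subtree (and hence link to a request outside the cycle) or follow a downward link at $z$ that was created by a non-cycle message (again linking outside the cycle). If no cycle message ever reaches $z$, the cycle is split across two child subtrees of $z$ with no way to connect them. All of this relies on the tree structure of $T$ and the link-reversal invariants (\Cref{le:edgeStates}, \Cref{le:uniqueEdge}); there is no purely temporal shortcut. Your proposal needs to supply this argument (or an equivalent one) for (iii).
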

\begin{proof}
The proof has three parts.
\begin{enumerate}[1)]
\item \textbf{Any node of $F^w_{\alg}$, except the dummy request, has exactly one outgoing edge:} This is obvious since any node that invokes a request
sends exactly one message. 

\item \textbf{Any node in $F^w_{\alg}$ has at most one incoming edge:} For the sake of
contradiction, assume that there is a node contained in $F^w_{\alg}$ denoted by $r=(v,t)$ with at least two incoming edges in $F^w_{\alg}$. This implies that two
messages must reach $v$ in $H$ before $v$ invokes any other request after $r$. However, when the first message
reaches $v$---if any other message does not reach $v$ before these two messages---$v$ removes the link that points to
itself and adds a link that points to its parent w.r.t. Line \ref{le:startClaimUniqueEdge2} and Line \ref{le:endClaimUniqueEdge2} of \Cref{alg:findReceived}.
The second message cannot reach $v$ as long as at least one request is invoked by $v$ after invoking $r$. This contradicts our assumption in which
two messages reach $v$ before the time when $v$ invokes another request after invoking $r$.

\item \textbf{$F^w_{\alg}$ is connected:}
To obtain a contradiction, assume that the graph $F^w_{\alg}$ is not connected.
Hence, w.r.t. the first and second parts, we have at least one connected component with at least two requests in $\R^w_{\alg}$ that form a cycle,
and the connected component does not include the dummy request in $r^w_0$.
Let $\R^{w,c}_{\alg}$ denote the requests in the connected component $F^{w,c}_{\alg}$ that forms a cycle.
Consider the node $z$ in $V_T$ that is the lowest common ancestor of those leaves of $H$ that invoke the requests in $\R^{w,c}_{\alg}$.
Further, let the subtree $H^{w,c}$ of $H$ denote the tree rooted at $z$.
All messages of requests in $\R^{w,c}_{\alg}$ must traverse inside $H^{w,c}$ since $F^{w,c}_{\alg}$ is disconnected with any request in
$\R^w_{\alg} \setminus \R^{w,c}_{\alg}$.

Assume that at least one message of requests in $\R^{w,c}_{\alg}$ reaches $z$.
Consider the first message $\mu$ by $r$ that reaches $z$ at time $t$.
If there is not any downward link at $z$ at $t$, then $\mu$ is forwarded to the parent of $z$.
This is a contradiction with the fact that $F^{w,c}_{\alg}$ is disconnected with any request in $\R^w_{\alg} \setminus \R^{w,c}_{\alg}$.
Hence, there must be at least one downward link at $z$ at $t$.
On the other hand, since $\mu$ is the first message of requests in $\R^{w,c}_{\alg}$ that reaches $z$,
all downward links at $z$ at time $t$ must have been created by some messages of requests in $H^{w,c}$ that are not in $\R^{w,c}_{\alg}$.
Note that if a downward link at $z$ is there since the beginning, then we assume that, w.l.o.g.,
it has been created by a ``virtual message'' sent by the node of the corresponding dummy request.
Suppose $\mu$ is forwarded through one of these downward links that was created by the message of
$r'$---as mentioned, $r'$ can be a dummy request---that is in $H^{w,c}$ but not in $\R^{w,c}_{\alg}$.
The original downward path from $z$ to the leaf node of $r'$ can be changed by the message of a request in
$H^{w,c}$---can be a request in $\R^{w,c}_{\alg}$.   
Thus, either $r$ is scheduled immediately behind some request in $H^{w,c}$ that is not in $\R^{w,c}_{\alg}$
or some other request in $\R^{w,c}_{\alg}$.
In either case, we get a contradiction with our assumption in which $F^{w,c}_{\alg}$ is disconnected with any request in $\R^w_{\alg} \setminus \R^{w,c}_{\alg}$. 

If there is not any message of a request in $\R^{w,c}_{\alg}$ that can reach $z$, then there must be at least two downward links during the execution at $z$
that have been created by some messages of requests that are not in $\R^{w,c}_{\alg}$---this holds because if there is at most one downward link
at $z$, then a message of some request in $\R^{w,c}_{\alg}$ must reach $z$ w.r.t. the definition of $z$.
However, the existence of at least two downward links at $z$ implies that $F^{w,c}_{\alg}$ is not connected.
This is true because there are at least two downward paths that partition the requests in $\R^{w,c}_{\alg}$
into two disjoint components in $F^w_{\alg}$ w.r.t. the definition of $z$ and our assumption in which there is not
any message of request in $\R^{w,c}_{\alg}$ that can reach $z$.
This is a contradiction with our assumption in which $F^{w,c}_{\alg}$ is a connected component.
\end{enumerate}
The above three parts all altogether show that $F^w_{\alg}$ is indeed a directed path that points towards the dummy request in $\R^w_{\alg}$.
\end{proof}

\begin{proof}[\textbf{Proof of \Cref{th:servingGuarantee}}]
Consider any of $k$ global schedules that is resulted by \gnn, say $\pi^w_{\gnn}$. If there is only one request in $\pi^w_{\gnn}$---there must
be at least one request, that is the dummy request $r^w_0$---then we are done. Otherwise, w.r.t. \Cref{le:queuDirectPath}
there is a path of directed edges such as $e=(r_i,r_j)$ over the requests in $\R^w_{\alg}$. When $v_i$ obtains a server, and after $r_i$ is served,
$v_i$ sends the server to $v_j$ for serving $r_j$ using an underlying routing scheme. Consequently, all requests in $\R^w_{\alg}$ are served.
\end{proof}

\begin{proof}[\textbf{Proof of \Cref{th:correctness}}]
\Cref{th:scheduleGuarantee} and \Cref{th:servingGuarantee} both together prove the claim of the theorem. 
\end{proof}
\section{Analysis in a Nutshell}
\label{sec:compactanalysis}

From a technical point of view, we achieve our main result on HSTs. In this section,
we provide an analysis of \gnn\ on HSTs in a nutshell.
The complete analysis, including all proofs, appears in \Cref{sec:analysis}.
Our analysis of \gnn\ for general networks appears in \Cref{sec:DSMSonGeneralNetworks}.
The lower bound claimed in \Cref{thm:LB} is proved in \Cref{sec:LB}.

Let \alg\ denote a particular distributed \dsms\ protocol
that sends a unique message from the node of a request to the node of the predecessor request for scheduling the request
(the message can be forwarded by many nodes on the path between the two nodes of the predecessor and successor requests).
Consider a one-shot execution of \alg\ where requests are invoked at the same time $0$.
Let $G=(V,E)$ denote the input graph.
Further, let $B=\big(V_B=\R,E_B={\R \choose 2}\big)$ be the complete graph, and consider two requests $r=(v,0)$ and $r'=(v',0)$ in $\R$ where $v,v' \in V$\label{no:B}.
Assume that $r'$ is scheduled as the successor of $r$ by \alg\ in the global schedule, and w.r.t. the \dsms\ problem definition
\alg\ informs $v$ by sending the (find-predecessor) message $\mu'$ from $v'$ to $v$.
Therefore, the communication cost for scheduling $r'$ equals the latency of $\mu'$. Formally,
\begin{equation}\label{eq:delayOneshot}
	\delay_{\alg}(r,r')=\latency_{\alg}(\mu').
\end{equation}
Let $r_{\src}(\mu')=r'$ denote the request corresponding with $\mu'$. Further, let $r_{\des}(\mu')=r$
denote the predecessor request $r$ in the global schedule\label{no:endpointsMSG}. We see $e=(r,r')$ as an edge in $E_B$ that
is constructed by $\mu'$.
Let us add $\mu(e)$ and $e(\mu)$ to the notation where $\mu(e)$ is the message that constructs the edge $e$ and $e(\mu)$ is the edge
that is constructed by $\mu$\label{no:messageEdge}.
For instance, here, $\mu(e)$ refers to $\mu'$ and $e(\mu')$ refers to the edge $(r,r')$.

\para{Representing solution of ALG as a forest:}
We observe that any of the $k$ resulted schedules $\pi^1_{\alg},\dots,\pi^k_{\alg}$
can be seen as a \textbf{TSP path} that spans all requests in the corresponding schedule as follows (see \Cref{le:queuDirectPath}).
The TSP path $F^z_{\alg}$ starts with the dummy request $r^z_0$ that is the head of $\pi^z_{\alg}$, and a request on the TSP path $F^z_{\alg}$
is connected using an edge to its predecessor in the schedule $\pi^z_{\alg}$.
As mentioned, the edge is constructed by the message sent by the requesting node to the node of its predecessor request.
Therefore, an edge of any TSP path---that is an edge in $E_B$---is actually a path on the input graph that is traversed by
the corresponding message.
For any $F \subseteq \forest_{\alg}$, we define the \textbf{total communication cost of $F$} as follows.
\begin{equation}\label{eq:totalLatencyF}
	\Latency_{\alg}(F) := \sum_{e \in F} \latency_{\alg}\big(\mu(e)\big).
\end{equation}
Therefore, the \textbf{total communication cost of a TSP path} equals the sum of latencies of all messages that construct the TSP path.
The $k$ TSP paths represent a forest of $B$.
Let $\forest_{\alg}$ be the forest that consists of the $k$ TSP paths $F^1_{\alg},F^2_{\alg}, \dots, F^k_{\alg}$ constructed by \alg\label{no:resultedForest}.
We slightly abuse notation and identify a subgraph $F$ of $B=\big(\R,{\R \choose 2}\big)$ with the set of edges contained in $F$.
The \textbf{total communication cost of $\forest_{ALG}$} equals the sum of total costs of the $k$ TSP paths $F^1_{\alg},F^2_{\alg}, \dots, F^k_{\alg}$.
For the input graph $G=(V,E)$, we denote the \textbf{weight of edge} $e=(r,r') \in E_B$ by $w_G(e):=d_G(v,v')$
where $v,v' \in V$\label{no:weightEdge} (recall $r=(v,t)$ and $r'=(v',t')$).
Note that that $d_G(v,v')$ is the weight of the shortest path between $v$ and $v'$ on the input graph $G$.
Generally, \textbf{the total weight of the subgraph} $F$ of $B$ w.r.t. the input graph $G$ equals the sum of weights of all edges in
$F$. Formally,
\begin{equation}\label{eq:forestAlgLength}
	\Weight_G(F) := \sum_{e \in F} \weight_G(e).
\end{equation}
\begin{definition}[\textbf{$S$-Respecting $m$-Forest}]\label{de:SRespectingLForest}
Let $G=(V,E)$ be a graph and $m \leq |V|$. A forest \forest\ of $G$ is called an $m$-forest
if \forest\ consists of $m$ trees. Further, let $S \subseteq V$ , $|S| \leq m$ be a set of at most
$m$ nodes. An $m$-forest \forest\ of $G$ is $S$-respecting if the nodes in
$S$ appear in different trees of \forest.
\end{definition}
Let $\R_D$ denote the set of $k$ dummy requests in $\R$\label{no:RD}. W.r.t. the \Cref{de:SRespectingLForest}, $\forest_{\alg}$ is an
$\R_D$-respecting spanning $k$-forest of $B=\big(\R,{\R \choose 2}\big)$. \textit{From now on, we consider the HST $T$
as the input graph\label{no:hstT}.}

\para{Locality-based forest:}
For any subtree $T'$ of $T$ and any subgraph $F$ of $B$, let $F(T')$
denote the subgraph of $F$ that is induced by those requests contained in $F$ that are also in $T'$\label{no:subgraphTree}. 
Further, let $F^1,F^2, \dots, F^k$ denote the $k$ trees of the spanning $k$-forest \forest\ of $B$.
Consider any $\R_D$-respecting spanning $k$-forest of $B$ with the following basic \textbf{locality properties}\label{no:LBF}.
\begin{enumerate}[I.] 
\item \label{pr:ForestIntracomponent} [\textbf{Intra-Component Property}] For any subtree $T'$ of $T$
and for any $w \in \set{1,\dots,k}$, the component $F^w_{\grd}(T')$ is a tree.
\item \label{pr:ForestIntercomponent} [\textbf{Inter-Component Property}] For any subtree $T'$ of $T$, suppose that
there are at least two non-empty components $F^z_{\grd}(T')$ and $F^w_{\grd}(T')$ where $w \neq z$ and $w,z \in \set{1,\dots,k}$.
Any of these components includes a dummy request.
\end{enumerate} 
We call such a forest a locality-based forest. Any locality-based forest is denoted by $\forest_{\grd}$.
The following theorem provides a general version of \Cref{th:hstOptimal}.

\begin{theorem}\label{th:genericResult}
Let $I$ denote an instance of \dsms\ that consists of an HST $T$ where the communication is asynchronous and
a set $\R$ of requests that are simultaneously invoked at leaves of $T$.
The protocol \alg\ is optimal if
the total cost of the resulted forest by \alg\ is upper bounded by the total weight of $\forest_{\grd}$. 
\end{theorem}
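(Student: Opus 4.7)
The proof reduces to showing the matching lower bound $\Weight_T(\forest_{\grd}) \leq \Latency_{\opt}(\forest_{\opt})$. Combined with the hypothesis $\Latency_{\alg}(\forest_{\alg}) \leq \Weight_T(\forest_{\grd})$, this immediately yields $\Latency_{\alg}(\forest_{\alg}) \leq \Latency_{\opt}(\forest_{\opt})$, i.e., $\alg$ is optimal. Therefore, the task is to establish this lower bound on $\opt$'s cost.

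The main ingredient is the combinatorial claim that, among all $\R_D$-respecting spanning $k$-forests of $B$, a minimum-weight one (with respect to the tree metric $d_T$) can be chosen to be locality-based. I would prove it by starting from an arbitrary minimum-weight forest $F$ and applying a sequence of weight-non-increasing exchange steps that transform it into some $\forest_{\grd}$. Processing the subtrees of $T$ from the leaves up, for each subtree $T'$ violating a locality property I would perform the following rewiring. If the intra-component property fails---some $F^w(T')$ is disconnected---then the TSP path $F^w$ enters and exits $T'$ more than once; I would reroute $F^w$ so that its restriction to $T'$ is contiguous, eliminating a pair of crossings of $T'$'s root edge. If the inter-component property fails---some $F^w(T')$ lacks a dummy while another $F^z(T')$ contains one---I would merge the dummy-less component into $F^z$ and reattach the outside portion of $F^w$ elsewhere, preserving the $\R_D$-respecting structure. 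The $\alpha$-HST geometric edge-weight separation is what ensures that the weight saved at $T'$'s root edge outweighs any local increase inside $T'$.

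For the cost--weight relationship, I would appeal to the worst-case asynchronous execution where each tree edge's latency equals its weight. In this case $\Latency_{\opt}(\forest_{\opt}) = \Weight_T(\forest_{\opt})$, so $\Latency_{\opt}(\forest_{\opt}) \geq \min_{F} \Weight_T(F) = \Weight_T(\forest_{\grd})$. Since competitive ratios are measured against worst-case executions and both $\alg$'s and $\opt$'s costs scale identically with the latency assignment, this bound carries to the general asynchronous setting. The main obstacle is the combinatorial step, in particular the inter-component exchange: merging a dummy-less component into another while retaining valid TSP-path structure and the $\R_D$-respecting property requires a careful reassignment of requests and rewiring of paths, and verifying that each step is weight-non-increasing depends critically on the HST's geometric edge-weight hierarchy---the savings at $T'$'s root edge must pay for every rewired edge inside $T'$.
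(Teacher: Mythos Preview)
Your overall structure matches the paper's: both reduce to showing $\Delay_{\opt} \geq \Weight_T(\forest_{\grd})$ via (i) the combinatorial claim that any locality-based forest has minimum weight among $\R_D$-respecting spanning $k$-forests of $B$, and (ii) lower-bounding the cost of \opt\ by the weight of its forest via the synchronous execution (\Cref{re:asynchOpt} together with \eqref{eq:optAlgCostLB} and \eqref{eq:optForestCostLB}).

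Where you diverge is in the proof of (i). You propose to start from a minimum-weight forest and push it towards a locality-based one by bottom-up, weight-non-increasing edge exchanges. The paper goes in the opposite direction: it shows directly that $\forest_{\grd}$ is \emph{locally optimal}---for every edge $e\in\forest_{\grd}$, any edge $e^*$ that can be validly swapped for $e$ satisfies $\weight_T(e^*)\geq\weight_T(e)$ (\Cref{le:localPropertiesAnalysis})---and then invokes a generic matroid-style exchange theorem (\Cref{appthm:MSFapprox} with $\lambda=1$) to conclude global minimality (\Cref{le:optimalGreedyForest}). The paper's route is cleaner: the local cut condition follows almost immediately from the HST's hierarchical distance structure (the least common ancestor of the endpoints determines the edge weight), and the global step is a one-line appeal. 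Your route would also work in principle, but the inter-component exchange you flag as the main obstacle really is delicate---your description slips into treating the components $F^w$ of the minimum-weight forest as TSP paths (they are general trees), and the case where \emph{no} component in $T'$ contains a dummy is not covered by ``merge into the dummy-containing $F^z$''. The local-to-global argument in the paper sidesteps all of this case analysis.
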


\subsection{Optimality of GNN on HSTs}
\label{sec:gnnHST}

Consider a one-shot execution of \gnn, and suppose that $\forest_{\gnn}$ is the resulted forest
when running \gnn\ on the given HST $T$ w.r.t. the input sequence $\R$. With respect to \Cref{th:genericResult},
and the fact that \gnn\ only sends one uniques message for scheduling a request to its predecessor,
it is sufficient to show that the forest $\forest_{\gnn}$ can be transformed into a locality-based forest such that the total cost of $\forest_{\gnn}$
is upper bounded by the total weight of $\forest_{\grd}$.
During an execution of \gnn, the Intra or Inter-Component property can be violated (see \Cref{fig:complication}).
Consider the following situations:
\begin{enumerate}
\item A server goes back to a subtree after the time when it leaves the subtree.
\item A request in a subtree of $T$ that initially hosts
at least one server is served by a server that is not initially in the subtree.
\item Two requests in a subtree of $T$ that does not initially host any server, are served by different servers.
\end{enumerate}
The first situation violates the Intra-Component property.
Any of the second and the third situation violates the Inter-Component property.
In the following, we characterize the Intra-Component and the Inter-Component properties by considering a timeline for the
messages that enter and leave a subtree of $T$.
Consider a message $\mu$ that enters the subtree $T'$ of $T$.
Another message can enter $T'$ only after some message $\mu'$ has left $T'$ after $\mu$ entered $T'$---
the arrival times of messages $\mu$ and $\mu'$ at the root of $T'$ can be the same (see \Cref{le:processTime} and \Cref{le:timeLine}).
Similarly, a message can leave $T'$ after $\mu'$ left $T'$ only after some message has entered $T'$ after $\mu'$ left $T'$.
We refer to \Cref{le:timeLine} for more details.
Consider a message $\mu$ that enters $T'$.
The fact that $\mu$ enters $T'$ implies that a server will leave $T'$ for serving $r_{\src}(\mu)$.
Let $\mu'$ denote the first message that leaves $T'$ after $\mu$ entered $T'$.
Leaving $\mu'$ from $T'$ implies that a server will enter $T'$ for serving $r_{\src}(\mu')$.
If $r_{\src}(\mu')$ is in the same TSP path of $\forest_{\gnn}$ with $r_{\des}(\mu)$, then
the server that had served $r_{\des}(\mu)$ goes back to $T'$ for serving $r_{\src}(\mu')$ after it left $T'$, and therefore the Intra-Component property is violated.
Otherwise, the Inter-Component property is violated since two requests in $T'$ are served by two different
servers in which at least one of the servers is initially outside of $T'$. We say \gnn\ makes an \textbf{Inter-Component gap}
$(\mu,\mu')$ on $T'$ in the latter case and an \textbf{Intra-Component gap} $(\mu,\mu')$ on $T'$ in the former case\label{no:gap}.

\para{Transformation:} We transform $\forest_{\gnn}$ through \textit{closing the gaps} that are made by \gnn\ on all subtrees of $T$.
A message $\mu'$ can leave from several subtrees of $T$ such that different messages enter the subtrees before $\mu'$.
Therefore, \gnn\ can make different gaps with the same message $\mu'$ on this set of subtrees of $T$.
We especially refer to \Cref{le:gapLowest} and \Cref{le:gapWindow} for more details on the gaps of the subtrees of $T$.
We consider the lowest subtree in this set and let $(\mu,\mu')$ be a gap on that.
We \textbf{close the gap} $(\mu,\mu')$ by removing $e(\mu')$ and by adding the new edge $\big(r_{\des}(\mu),r_{\src}(\mu')\big)$.
In the example of \Cref{fig:complication}, for instance, the red edges are removed and the new edges $(r^1_0,r_b)$ and $(r_b,r_c)$ are added.
When we close the gap $(\mu,\mu')$, all other gaps $(\mu'',\mu')$ that are on higher subtrees are also closed.
Therefore, we transform $\forest_{\gnn}$ into a new forest $\forest_{mdf}$ by means of closing all gaps\label{no:transformedForest}.
The following lemma shows that $\forest_{mdf}$ is indeed the locality-based forest.

\begin{lemma}\label{le:transformation}
$\forest_{mdf}$ is an $\R_D$-respecting spanning $k$-forest of $B$ that satisfies the Intra-Component
and the Inter-Component properties.
\end{lemma}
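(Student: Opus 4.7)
The plan is to establish the lemma by induction on the sequence of gap-closings used to build $\forest_{mdf}$ from $\forest_{\gnn}$. For the base case, \Cref{le:queuDirectPath} already shows that $\forest_{\gnn}$ decomposes into $k$ disjoint TSP paths $F^1_{\gnn},\ldots,F^k_{\gnn}$, with $r^z_0$ contained in $F^z_{\gnn}$, so $\forest_{\gnn}$ is an $\R_D$-respecting spanning $k$-forest of $B$. The remainder of the argument is to verify that (i) each closure of a lowest-subtree gap preserves this structural invariant, and (ii) after every such gap has been closed, $\forest_{mdf}$ satisfies both the Intra-Component and the Inter-Component properties.

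For (i), consider closing a gap $(\mu,\mu')$ on the lowest subtree $T'$ on which $\mu'$ leaves. The operation removes $e(\mu') = (r_{\des}(\mu'),r_{\src}(\mu'))$, with $r_{\des}(\mu')$ outside $T'$ and $r_{\src}(\mu')$ inside, and adds the edge $(r_{\des}(\mu),r_{\src}(\mu'))$, whose endpoints both lie inside $T'$. Removing $e(\mu')$ splits exactly one component of the current forest into two pieces: $A \ni r_{\src}(\mu')$ and $B \ni r_{\des}(\mu')$. The crucial step is to check that $r_{\des}(\mu)\notin A$. In the Inter-Component case this is immediate, since $r_{\des}(\mu)$ originally lies in a different TSP path from $r_{\src}(\mu')$ and remains there throughout the transformation. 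In the Intra-Component case, I will argue from the timeline properties (\Cref{le:timeLine} and \Cref{le:gapWindow}) that along the shared TSP path the server visits $r_{\des}(\mu)$, then leaves $T'$ to serve $r_{\src}(\mu)$, and only re-enters $T'$ later to serve $r_{\src}(\mu')$, so $r_{\des}(\mu)$ lies strictly before $r_{\src}(\mu')$ in the queue and hence sits in $B$ after removing $e(\mu')$. In both cases the added edge joins two distinct current components without creating a cycle, keeps the total count at $k$, and preserves the distribution of dummies across components.

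For (ii), I argue by contradiction. If $\forest_{mdf}$ still violated the Intra-Component property on some subtree $T''$, a component would enter $T''$, leave, and re-enter, witnessing an unclosed Intra-Component gap on $T''$ or on some descendant; but by construction every lowest-subtree gap has been processed, and closing a lowest gap for a given $\mu'$ simultaneously eliminates all higher gaps involving the same $\mu'$ because the replacement edge sits entirely inside $T'$ and removes the boundary crossing that those higher gaps recorded. A symmetric contradiction handles the Inter-Component property: any subtree $T''$ touched by two components in $\forest_{mdf}$ with one lacking an internal dummy would force an unclosed Inter-Component gap at $T''$ or at a descendant. The main technical obstacle is the placement claim in the Intra-Component case: showing that $r_{\des}(\mu)\in B$ rather than $A$ requires aligning the physical arrival order of $\mu$ and $\mu'$ at the root of $T'$ with the queue-order along the TSP path, and fully exploiting the ``first message to leave after $\mu$ entered'' clause in the definition of a gap—which is precisely what the auxiliary timing lemmas are designed to support.
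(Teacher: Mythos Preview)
Your per-step induction is a genuinely different strategy from the paper's, and it has a real gap in part~(i). You assert that after removing $e(\mu')$ the vertex $r_{\des}(\mu)$ lies in the component $B$ containing $r_{\des}(\mu')$ (Intra case) or in a different component altogether (Inter case), but both claims refer to the \emph{original} TSP-path structure of $\forest_{\gnn}$, not to the current forest after earlier gap-closings. In the Inter-Component case the assertion ``$r_{\des}(\mu)$ \ldots\ remains there throughout the transformation'' is exactly what is at stake: previous Inter-Component closings move pieces between components, so $r_{\des}(\mu)$ and $r_{\src}(\mu')$ may already sit in the same current component, and then the new edge creates a cycle. You also never verify that the piece $A$ being detached contains no dummy; if it did, the Inter-Component closing would place two dummies in one component and the $\R_D$-respecting invariant fails. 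In the Intra case, ``$r_{\des}(\mu)$ precedes $r_{\src}(\mu')$ in the queue'' tells you where it sits in $\forest_{\gnn}$, not where it sits after earlier edge swaps have rewired that path.

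The paper avoids all of this by \emph{not} maintaining a per-step invariant. It processes gaps bottom-up by height and proves a stronger statement directly about the restriction $\forest^{i+1}_{mdf}(T')$ for every subtree $T'$ of height $i$: this restriction has exactly $\max\{1,m\}$ components (where $m$ is the number of dummies in $T'$), each a tree with at most one dummy (this is \Cref{le:transformationInduction}). The three parts of that lemma---at most one dummy per component, acyclicity, and the exact component count---are argued globally using the ordering of messages at the root of $T'$ (via \Cref{le:timeLine}) rather than by tracking individual swaps. Your part~(ii) would also need this kind of argument: a violation of the Intra-Component property in $\forest_{mdf}$ means some $F^w_{mdf}(T')$ is disconnected, but $F^w_{mdf}$ is no longer a TSP path, so ``a component entering and leaving $T''$'' has no direct meaning, and you cannot read off an unclosed gap from it without essentially reproving \Cref{le:transformationInduction}.
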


It remains to show that the total cost of $\forest_{\gnn}$ is upper bounded by the total weight of the new forest $\forest_{mdf}$. Formally,
we want to show that  $\Latency_{\gnn}(\forest_{\gnn}) \leq \Weight_T(\forest_{mdf})$.
Using \Cref{le:uniquePath}, a message always finds the node of its predecessor
using a direct path on $T$ in any execution of \gnn. Regarding to our communication model described in \Cref{sec:communicationModel},
therefore, for every edge $e \in \forest_{\gnn}$ we have
\begin{equation}\label{eq:latencyUB}
	\latency_{\gnn}\big(\mu(e)\big) \leq \weight_T(e)
\end{equation}
Let $(\mu,\mu')$ be the gap on the lowest subtree of $T$ among all subtrees of $T$ with gaps $(\mu'',\mu')$
for any message $\mu''$ that makes a gap with $\mu'$.
By closing the gap $(\mu,\mu')$, we remove $e^{old}:=\big(r_{\src}(\mu'),r_{\des}(\mu')\big)$ and add the new edge
$e^{new}:=\big(r_{\src}(\mu'),r_{\des}(\mu)\big)$. Using \eqref{eq:latencyUB}, we are immediately done if the latency of $\mu'$ is upper bounded by the weight of
$e^{new}$. However, the latency of $\mu'$ can be larger than the weight of $e^{new}$.
By contrast, the weight of $e^{new}$ is lower bounded by the latency of $\mu$
(see \Cref{co:gapLowest} and \Cref{le:greedyNature}). This lower bound gives us the go-ahead to show that the weight of $e^{new}$ can
be seen as an ``amortized'' upper bound for $\latency_{\gnn}(\mu')$. In the following, we provide \textbf{an overview of our amortized
analysis} that appears in \Cref{sec:gnnUB}. 
Let $E^{new}:=\forest_{mdf} \setminus \forest_{\gnn}$ and $E^{old}:=\forest_{\gnn} \setminus \forest_{mdf}$
be the sets of all edges that are added and removed during the transformation of $\forest_{\gnn}$, respectively\label{no:newOldEdges}.
Further, we consider a set of edges that provides enough ``potential'' for our amortization\label{no:potEdges}. 
\[ 
	E^{pot}:=\set{e \in \forest_{\gnn} :\big(\mu(e),\mu(e')\big) \ \text{is a gap for some} \ e' \in E^{old}}.
\]
For every edge $e \in E^{old}$, let $E^{pot}(e):=\set{e' \in E^{pot} : \big(\mu(e'),\mu(e)\big) \ \text{is a gap}}$\label{no:filterEPOT}.
Further, for every edge $e \in E^{pot}$, let $E^{old}(e):=\set{e' \in E^{old} : \big(\mu(e),\mu(e')\big) \ \text{is a gap}}$\label{no:filterEOLD}.
In this overview, we consider the \textbf{simple case} where 1) $|E^{old}(e)|=1$ for every edge
$e \in E^{pot}$ and $|E^{pot}(e)|=1$ for every edge $e \in E^{old}$. Further, 2) the sets
$E^{old}$ and $E^{pot}$ do not share any edge.
The execution provided by \Cref{fig:complication} represents an example of the above simple case.
We define the \textbf{potential function} $\Phi(F)$ for a subset $F$ of $\forest_{\gnn}$ as follows\label{no:potential}
$\Phi(F) := \Weight_T(F) - \Latency_{\gnn}(F)$.
W.l.o.g., we assume that the edges in $E^{old}$ are sequentially replaced with the edges in $E^{new}$.
Hence, assume that $e^{old}_i$ is replaced with $e^{new}_i$ during the $i$-th replacement. Let also $e^{pot}_i$ be the only edge in $E^{pot}(e^{old}_i)$.

\begin{lemma}\label{le:amortizedSimple}
If $|E^{old}(e)|=1$ for every edge $e \in E^{pot}$, $|E^{pot}(e)|=1$ for every edge $e \in E^{old}$, and $E^{old} \cap E^{pot} = \emptyset$,
then 
\begin{equation}\label{eq:amortizedSimple}
	\weight_T(e^{old}_i) \leq \weight_T(e^{new}_i) + \Phi\left(E^{pot}\setminus \set{e^{pot}_1,\dots,e^{pot}_{i-1}}\right) -
	\Phi\left(E^{pot}\setminus \set{e^{pot}_1,\dots,e^{pot}_i}\right)
\end{equation}
for every $i \geq 1$.
\end{lemma}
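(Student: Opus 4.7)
The plan is to telescope the potential difference and then reduce the claim to a comparison of three edge weights using the tree structure of $T$. Since
\[
\Phi(F) \;=\; \sum_{e \in F}\bigl(\weight_T(e)-\latency_{\gnn}(\mu(e))\bigr)
\]
is additive, and the simple-case hypotheses $|E^{old}(e)|=|E^{pot}(e)|=1$ induce a bijection between $E^{old}$ and $E^{pot}$, the edges $e^{pot}_1,\dots,e^{pot}_{|E^{old}|}$ are pairwise distinct. Hence removing $e^{pot}_i$ from $E^{pot}\setminus\{e^{pot}_1,\dots,e^{pot}_{i-1}\}$ decreases $\Phi$ by exactly $\weight_T(e^{pot}_i)-\latency_{\gnn}(\mu(e^{pot}_i))$. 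Writing $\mu_i:=\mu(e^{pot}_i)$ and $\mu'_i:=\mu(e^{old}_i)$, so that $(\mu_i,\mu'_i)$ is the gap whose closure is executed at step $i$, the inequality \eqref{eq:amortizedSimple} is equivalent to
\[
\weight_T(e^{old}_i) + \latency_{\gnn}(\mu_i) \;\leq\; \weight_T(e^{new}_i) + \weight_T(e^{pot}_i).
\]

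Next I would apply the triangle inequality on $T$ to the three points $r_{\src}(\mu'_i)$, $r_{\des}(\mu_i)$, $r_{\des}(\mu'_i)$, obtaining $\weight_T(e^{old}_i) \leq \weight_T(e^{new}_i) + d_T\bigl(r_{\des}(\mu_i),r_{\des}(\mu'_i)\bigr)$, after which the target reduces to
\[
\latency_{\gnn}(\mu_i) + d_T\bigl(r_{\des}(\mu_i),r_{\des}(\mu'_i)\bigr) \;\leq\; d_T\bigl(r_{\src}(\mu_i),r_{\des}(\mu_i)\bigr).
\]
Let $T'$ denote the lowest subtree on which $(\mu_i,\mu'_i)$ is a gap, and let $z$ be the root of $T'$. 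By definition of a gap, $r_{\des}(\mu_i)$ and $r_{\src}(\mu'_i)$ lie in $T'$ while $r_{\src}(\mu_i)$ and $r_{\des}(\mu'_i)$ lie outside $T'$, so every path in $T$ between an inside and an outside leaf of $T'$ traverses $z$. Splitting both distances at $z$ then reduces the displayed inequality to $\latency_{\gnn}(\mu_i) \leq d_T\bigl(r_{\src}(\mu_i),z\bigr) - d_T\bigl(z,r_{\des}(\mu'_i)\bigr)$.

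The main obstacle is establishing this last reduced inequality, which is not a direct consequence of the triangle inequality. It requires invoking \Cref{co:gapLowest} and \Cref{le:greedyNature}, which pin down the actual route taken by $\mu_i$ under \gnn\ (greedy descent through downward links once the message enters $T'$), together with the $\alpha$-HST property that the edge from $z$ to its parent dominates every within-$T'$ distance. These tools together should show that the within-$T'$ contribution to $\latency_{\gnn}(\mu_i)$ is already covered by $\weight_T(e^{new}_i)$ on the other side of the target, while its outside-$T'$ contribution is bounded by the slack $d_T(r_{\src}(\mu_i),z) - d_T(z,r_{\des}(\mu'_i))$. With this geometric/structural step in hand, the telescoping and the triangle inequality from the first two paragraphs complete the proof of \eqref{eq:amortizedSimple}.
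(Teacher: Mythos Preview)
Your telescoping of the potential is correct, and the reduction to
\[
\weight_T(e^{old}_i) + \latency_{\gnn}(\mu_i) \;\leq\; \weight_T(e^{new}_i) + \weight_T(e^{pot}_i)
\]
matches the paper exactly. The trouble starts at the triangle-inequality step. On a tree the inequality
\(\weight_T(e^{old}_i) \leq \weight_T(e^{new}_i) + d_T\bigl(r_{\des}(\mu_i),r_{\des}(\mu'_i)\bigr)\)
is strict whenever $r_{\des}(\mu_i)$ is not on the tree path from $r_{\src}(\mu'_i)$ to $r_{\des}(\mu'_i)$, and here it never is: by \Cref{le:gapLowest} the two leaves $r_{\src}(\mu'_i)$ and $r_{\des}(\mu_i)$ sit in different child subtrees of $T'$, so the slack you throw away is exactly $\delta(T')=\weight_T(e^{new}_i)$. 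That slack is precisely what is needed to absorb $\latency_{\gnn}(\mu_i)$, and once it is gone the remaining inequality is false. Concretely, take $r_{\src}(\mu_i)$ to lie in $T''$ but outside the child $T''_j\supseteq T'$ (nothing in the hypotheses forbids this). Then $\weight_T(e^{pot}_i)=\delta(T'')$, and because $r_{\des}(\mu'_i)$ also lies in $T''\setminus T''_j$, a direct computation gives
\[
d_T\bigl(r_{\src}(\mu_i),z\bigr) \;=\; d_T\bigl(z,r_{\des}(\mu'_i)\bigr),
\]
so your ``main obstacle'' inequality reads $\latency_{\gnn}(\mu_i)\leq 0$, which is impossible.

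The paper avoids the detour entirely: it splits the target into the two independent comparisons $\latency_{\gnn}(\mu_i)\leq \delta(T')=\weight_T(e^{new}_i)$, which is exactly \Cref{le:greedyNature}, and $\weight_T(e^{old}_i)\leq \weight_T(e^{pot}_i)$. For the latter it uses the simple-case hypothesis $|E^{pot}(e^{old}_i)|=1$ in an essential way: since $\mu_i$ is the \emph{only} message making a gap with $\mu'_i$, the highest subtree on which $(\mu_i,\mu'_i)$ is a gap has some parent $T''$ that $\mu'_i$ does not leave (otherwise $\mu'_i$ would form a second gap on $T''$). This forces $\weight_T(e^{old}_i)=\delta(T'')$, while the fact that $\mu_i$ enters the child $T''_j$ gives $\weight_T(e^{pot}_i)\geq\delta(T'')$. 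Your proposal never invokes $|E^{pot}(e^{old}_i)|=1$, which is why it cannot control $\weight_T(e^{old}_i)$ against $\weight_T(e^{pot}_i)$ directly.
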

\begin{proof}
Using the definition of the potential function $\Phi$ and the definitions of the total weight and the total communication cost
of a subset of edges in $\forest_{\gnn}$,
we have
\[
	\Phi\left(E^{pot}\setminus \set{e^{pot}_1,\dots,e^{pot}_{i-1}}\right) - \Phi\left(E^{pot}\setminus \set{e^{pot}_1,\dots,e^{pot}_i}\right)
	=\weight_T(e^{pot}_i) - \latency_{\gnn}(e^{pot}_i).
\]
Therefore, we need to show that
$\weight_T(e^{old}_i) \leq \weight_T(e^{new}_i) + \weight_T(e^{pot}_i) - \latency_{\gnn}(e^{pot}_i)$.
Let the subtree $T'$ of $T$ be the lowest subtree such that $\big(\mu(e^{pot}_i),\mu(e^{old}_i)\big)$ is a gap on $T'$. This implies that
$\weight_T(e^{new}_i)=\delta(T')$. On the other hand, using \Cref{le:greedyNature} we have
$\latency_{\gnn}(e^{pot}_i) \leq \delta(T')=\weight_T(e^{new}_i)$.
It remains to show that $\weight_T(e^{old}_i) \leq \weight_T(e^{pot}_i)$.
Let $T''_j$ be the highest subtree of $T$ such that $\big(\mu(e^{pot}_i),\mu(e^{old}_i)\big)$ is a gap on $T''_j$ and $T''_j$ is
a child subtree of $T''$. The message $\mu(e^{old}_i)$ does not leave $T''$ since $E^{pot}(e^{old}_i)=\set{e^{pot}_i}$.
Hence, $\weight_T(e^{old}_i) = \delta(T'')$. On the other hand, the fact that the message $\mu(e^{pot}_i)$ enters $T''_j$ indicates that
$\weight_T(e^{pot}_i) \geq \delta(T'')$. Consequently, $\weight_T(e^{pot}_i) \geq \weight_T(e^{old}_i)$ and we are done.
\end{proof}
When we sum up \eqref{eq:amortizedSimple} for all $i$, we get
\begin{equation}\label{eq:cumulativeAmortization}
	\Weight_T(E^{old}) \leq \Weight_T(E^{new}) + \Phi\left(E^{pot}\right).
\end{equation}
Using the definition of the potential function $\Phi$ and using $\Latency_{\gnn}(E^{old}) \leq \Weight_T(E^{old})$ w.r.t \eqref{eq:latencyUB},
therefore we get $\Latency_{\gnn}(E^{pot}) + \Latency_{\gnn}(E^{old}) \leq \Weight_T(E^{new}) + \Weight_T(E^{pot})$.
Hence, we have $\Latency_{\gnn}(\forest_{\gnn}) \leq \Weight_T(\forest_{mdf})$ since $\forest_{mdf}=\forest_{\gnn} \setminus E^{old} \cup E^{new}$ and
$\Latency_{\gnn}\big(\forest_{\gnn} \setminus (E^{old} \cup E^{pot})\big) \leq \Weight_T\big(\forest_{\gnn} \setminus (E^{old} \cup E^{pot})\big)$
 w.r.t \eqref{eq:latencyUB}.

\begin{lemma}\label{le:UBGNNForest}
The total cost of $\forest_{\gnn}$ is upper bounded by the total weight of $\forest_{mdf}$.
\end{lemma}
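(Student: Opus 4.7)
The plan is to extend the amortized argument sketched in the simple case of \Cref{le:amortizedSimple} to the fully general situation, where a single message in $E^{pot}$ can form gaps with several messages in $E^{old}$, a single message in $E^{old}$ can have several ``witnesses'' in $E^{pot}$, and the sets $E^{old}$ and $E^{pot}$ need not be disjoint. The target inequality to prove is $\Latency_{\gnn}(\forest_{\gnn}) \leq \Weight_T(\forest_{mdf})$. Equivalently, after splitting $\forest_{\gnn}$ into $E^{old}$, $E^{pot}$, and the remainder $E^{rest}:=\forest_{\gnn}\setminus(E^{old}\cup E^{pot})$, and noting that $\latency_{\gnn}(\mu(e))\le \weight_T(e)$ for every $e\in \forest_{\gnn}$ by \eqref{eq:latencyUB}, it suffices to reprove the cumulative bound $\Weight_T(E^{old}) \leq \Weight_T(E^{new}) + \Phi(E^{pot})$ of \eqref{eq:cumulativeAmortization} with the same potential $\Phi(F):=\Weight_T(F)-\Latency_{\gnn}(F)$.

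I would order the replacements bottom-up: for each edge $e^{old}\in E^{old}$ let $T(e^{old})$ denote the lowest subtree of $T$ on which some gap $\bigl(\mu(e^{pot}),\mu(e^{old})\bigr)$ with $e^{pot}\in E^{pot}(e^{old})$ occurs, and process the $e^{old}_i$'s in increasing order of the height of $T(e^{old}_i)$. For each step I would isolate a single ``charged'' witness $e^{pot}_i\in E^{pot}(e^{old}_i)$, namely one whose gap is realized on the lowest subtree. The two structural inequalities that drove the simple case remain valid here: by \Cref{le:greedyNature} (together with \Cref{co:gapLowest}) the latency of the charged witness is at most $\delta(T(e^{old}_i))=\weight_T(e^{new}_i)$, and by the definition of the ``highest gap subtree'' combined with the HST geometry, we obtain $\weight_T(e^{old}_i)\leq \weight_T(e^{pot}_i)$. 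Summing gives the same telescoping identity once we can argue that each $e^{pot}_i$ contributes its full potential $\weight_T(e^{pot}_i)-\latency_{\gnn}(e^{pot}_i)$ exactly once.

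The main obstacle is therefore the combinatorial matching between $E^{old}$ and $E^{pot}$ in the general case. I would address it by constructing an explicit assignment $\sigma:E^{old}\to E^{pot}$ with two properties: (i) $\sigma(e^{old})\in E^{pot}(e^{old})$, and (ii) $\sigma$ is injective up to multiplicity in the sense that, for each $e^{pot}\in E^{pot}$, at most one $e^{old}\in \sigma^{-1}(e^{pot})$ is matched against it by the lowest-subtree rule (the others having some strictly lower witness). This is essentially a Hall-type argument built from \Cref{le:gapLowest,le:gapWindow}: if a single message $\mu(e^{pot})$ formed gaps with several messages $\mu(e^{old}_1),\ldots,\mu(e^{old}_\ell)$ on the same subtree, then all but the earliest of these messages must, by the timeline property, admit a strictly lower witness in $E^{pot}$. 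Iterating this reassignment produces an assignment $\sigma$ in which each $e^{pot}$ is charged by at most one $e^{old}$, which is exactly what is needed to make the bottom-up telescoping go through.

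Finally I would handle the overlap $E^{old}\cap E^{pot}\ne\emptyset$ by observing that such an edge $e$ only ever contributes the quantity $\weight_T(e)-\latency_{\gnn}(e)\ge 0$ on the potential side and $\weight_T(e)$ on the ``$e^{old}$'' side, and since the potential of $e$ is released before $e$ is itself removed (because our order processes lowest gaps first, and any gap $(\mu(e),\mu(e'))$ involves a strictly lower subtree than the gap responsible for removing $e$), no edge's potential is double-counted. Combining the per-step inequality with this careful accounting yields \eqref{eq:cumulativeAmortization} in the general case, and adding $\Latency_{\gnn}(E^{rest})\le \Weight_T(E^{rest})$ from \eqref{eq:latencyUB} on both sides then gives $\Latency_{\gnn}(\forest_{\gnn})\le \Weight_T(\forest_{mdf})$ as required.
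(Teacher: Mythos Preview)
Your overall scaffolding (split into $E^{old}$, $E^{pot}$, $E^{rest}$, aim for the cumulative inequality \eqref{eq:cumulativeAmortization}, then finish with \eqref{eq:latencyUB}) matches the paper. The problem is the per-edge step you carry over from the simple case.

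You pick, for each $e^{old}_i$, the witness $e^{pot}_i\in E^{pot}(e^{old}_i)$ realizing the \emph{lowest} gap, and then assert $\weight_T(e^{old}_i)\le \weight_T(e^{pot}_i)$. In the simple-case proof this inequality came from the fact that $\mu(e^{old}_i)$ does not leave the parent $T''$ of the highest subtree carrying the gap, which was justified precisely by $|E^{pot}(e^{old}_i)|=1$. In the general case $\mu(e^{old}_i)$ can keep climbing and make further gaps with other messages on strictly higher subtrees, so $\weight_T(e^{old}_i)$ can exceed $\weight_T(e^{pot}_i)$ for the lowest-gap witness. (Choosing the highest-gap witness instead repairs this inequality but breaks the other one, since $e^{new}_i$ is by definition tied to the \emph{lowest} gap.) So a single-witness charge cannot work; the injectivity argument for $\sigma$ is therefore moot, and incidentally your justification for it is also off: a fixed $\mu(e^{pot})$ never makes gaps with several $e^{old}$'s on the \emph{same} subtree---the multiplicity comes from the nested subtrees it enters---so the ``all but the earliest admit a strictly lower witness'' reassignment has no footing.

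What the paper does in \Cref{le:amortization} is exactly to avoid single-witness charging. For a fixed $e^{old}_z$ with $E^{pot}(e^{old}_z)=\{e^{pot}_1,\dots,e^{pot}_i\}$ ordered by increasing gap size, it writes
\[
\weight_T(e^{old}_z)=\weight_T(e^{new}_z)+2\Bigl(d_T(v_i,v)+\sum_{p=1}^{i-1}d_T(v_p,v_{p+1})\Bigr),
\]
where $v_p$ is the root of the lowest subtree carrying the gap with $e^{pot}_p$ and $v$ is the root of the subtree giving $\weight_T(e^{old}_z)=\delta(T')$. Each segment $d_T(v_p,v_{p+1})$ (resp.\ $d_T(v_i,v)$) is a sub-path of the path traversed by $\mu(e^{pot}_p)$, and \Cref{le:gapWindow} guarantees that no such segment is reused when summing over all $e^{old}_z$. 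Summing yields $\Weight_T(E^{old})\le \Weight_T(E^{new})+\Weight_T(E^{pot})-\sum_{e\in E^{pot}}\delta(T_1(e))$, and then \Cref{le:greedyNature} turns the last sum into $\Latency_{\gnn}(E^{pot})$, giving \eqref{eq:cumulativeAmortization}. Your treatment of the overlap $E^{old}\cap E^{pot}$ by subtracting it out afterwards is fine and is exactly how the paper finishes; the gap is entirely in how you reach \eqref{eq:cumulativeAmortization}.
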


\begin{theorem}\label{th:gnn}
The forest $\forest_{\gnn}$ can be transformed into the locality-based forest $\forest_{\grd}$ such that the total cot of $\forest_{\gnn}$
is upper bounded by the total weight of $\forest_{\grd}$. 
\end{theorem}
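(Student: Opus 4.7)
The plan is to take $\forest_{\grd} := \forest_{mdf}$, the forest that is obtained from $\forest_{\gnn}$ by closing every gap that \gnn\ makes on the subtrees of $T$. Once this choice is fixed, the theorem is the direct composition of the two preceding lemmas: Lemma~\ref{le:transformation} asserts that $\forest_{mdf}$ is an $\R_D$-respecting spanning $k$-forest of $B$ satisfying the Intra- and Inter-Component properties, hence it is a locality-based forest in the sense used by Theorem~\ref{th:genericResult}; and Lemma~\ref{le:UBGNNForest} provides exactly the inequality $\Latency_{\gnn}(\forest_{\gnn})\leq \Weight_T(\forest_{mdf})$ that the theorem requires. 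So the proof is essentially a one-line appeal to these two lemmas, and the only real work left is making sure both of them are available in full generality.

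For Lemma~\ref{le:UBGNNForest} the overview only treats the simple case where $|E^{old}(e)|=|E^{pot}(e)|=1$ and $E^{old}\cap E^{pot}=\emptyset$, so the step I would carry out before appealing to it is to lift the argument of Lemma~\ref{le:amortizedSimple} to the general setting. I would keep the same potential $\Phi(F)=\Weight_T(F)-\Latency_{\gnn}(F)$, but order the replacements so that the gaps are processed bottom-up on $T$; this is safe because, as noted in the paragraph introducing the transformation, closing the lowest gap $(\mu,\mu')$ on a subtree automatically closes every higher gap $(\mu'',\mu')$ associated with the same leaving message. Then, when $e^{old}_i$ is replaced by $e^{new}_i$, I would charge the weight difference to the sum of the $\latency_{\gnn}$--$\weight_T$ slacks of the edges in $E^{pot}(e^{old}_i)$ (rather than a single $e^{pot}_i$), invoking Lemma~\ref{le:greedyNature} and the lowest/highest-subtree characterizations from Lemma~\ref{le:gapLowest} and Lemma~\ref{le:gapWindow} exactly as in the simple case. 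Summing and telescoping yields $\Weight_T(E^{old})\leq \Weight_T(E^{new})+\Phi(E^{pot})$, and combining with $\latency_{\gnn}(\mu(e))\leq \weight_T(e)$ from \eqref{eq:latencyUB} on the remaining edges $\forest_{\gnn}\setminus (E^{old}\cup E^{pot})$ gives $\Latency_{\gnn}(\forest_{\gnn})\leq \Weight_T(\forest_{mdf})$.

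The main obstacle will be the case where an edge lies simultaneously in $E^{old}$ and $E^{pot}$, i.e., an edge of $\forest_{\gnn}$ that first plays the role of a potential witness for one gap and is later removed when the gap it itself generated is closed. The key to handling this cleanly is that the potential function is defined on the global set $E^{pot}$, not on single edges, so the slack released by such an edge in its witness role is still recorded when, one step later, the same edge is deleted in its $E^{old}$ role; one only needs to check that in the bottom-up order the witness step strictly precedes the deletion step, which follows from the definition of ``lowest subtree with a gap'' used to identify $E^{new}$. Once this bookkeeping is carried out, Lemma~\ref{le:UBGNNForest} applies without restriction and Theorem~\ref{th:gnn} is immediate.
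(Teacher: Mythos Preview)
Your proposal is correct and matches the paper exactly at the level of Theorem~\ref{th:gnn}: the paper's proof is literally the one-line appeal to Lemma~\ref{le:transformation} and Lemma~\ref{le:UBGNNForest} that you identify.

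Your extended sketch of how to lift Lemma~\ref{le:UBGNNForest} to the general case is in the same spirit as the paper but differs in the bookkeeping device. You propose a bottom-up replacement order on the subtrees of $T$ and argue that for an edge in $E^{old}\cap E^{pot}$ the witness step precedes the deletion step. The paper instead builds a \emph{priority directed graph} on $E^{old}\cup E^{pot}$ (an arc from $e$ to $e'$ whenever $(\mu(e),\mu(e'))$ is a gap), proves it is acyclic, and then---rather than actually relying on the induced order---establishes the key inequality $\Weight_T(E^{old})\leq \Weight_T(E^{new})+\Phi(E^{pot})$ by a direct per-edge decomposition (splitting the path of $\mu(e^{old}_z)$ at the roots of the nested subtrees $T'_1\subset\cdots\subset T'_i$ where its gaps sit, via Lemma~\ref{le:gapWindow}) and handles the overlap $E^{old}\cap E^{pot}$ purely algebraically by subtracting $\Weight_T(E^{old}\cap E^{pot})$ from both sides. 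Your ordering argument would also work, but the paper's algebraic cancellation is cleaner because it sidesteps having to verify that a bottom-up schedule on $T$ really respects all witness/deletion dependencies.
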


\section{Analysis}
\label{sec:analysis} 
In this section, we provide a complete version of our analysis given in \Cref{sec:compactanalysis}.
At the end of this section, we consider the general graphs and show that the claim of \Cref{th:graph} holds. 
We use the abbreviation $[i]:=\set{0,\dots,i}$ and $[i,j]:=\set{i,\dots,j}$ for non-negative integers $i$ and $j$.
Recall from \Cref{sec:compactanalysis} that \alg\ is a particular distributed \dsms\ protocol
that sends a unique message from the node of a request to the node of the predecessor request for scheduling the request. 
In a one-shot execution of \alg, the total communication cost of the schedule $\pi^z_{\alg}$ for any $z \in [1,k]$ equals the
total communication cost of the corresponding TSP path in the resulted forest by \alg\ w.r.t. \eqref{eq:componentTotalCost} and \eqref{eq:totalLatencyF}.
\begin{equation}\label{eq:componentLatencyDelay}
	\Delay_{\alg}(\pi^z_{\alg})=\Latency_{\alg}(F^z_{\alg}).
\end{equation}
Using \eqref{eq:totalCost} and \eqref{eq:componentLatencyDelay}, therefore, the total cost of \alg\
equals the total cost of the resulted forest $\forest_{\alg}$ that is sum of the total costs of the $k$ TSP paths.

\begin{equation}\label{eq:forestAlgWeight}
	\Delay_{\alg}=\Latency_{\alg}(\forest_{\alg}) := \sum_{z=1}^{k}\Latency_{\alg}(F^z_{\alg}).
\end{equation}

\subsection{Optimal Distributed \dsms\ Protocols}
\label{sec:optimalProtocol}
When studying the cost of an optimal offline \dsms\ protocol \opt, we assume that \opt\ knows the whole sequence of requests in advance. However, \opt\ still
needs to send messages from each request to its predecessor request. In \Cref{sec:communicationModel}, we explained that the message latencies are not
even under control of an optimal distributed \dsms\ protocol, denoted by \opt.

\begin{remark}\label{re:asynchOpt}
For lower bounding the cost of an optimal protocol that solves a distributed problem, one can assume that all communication is
synchronous even in an asynchronous execution since a synchronous execution is a possible strategy of the asynchronous scheduler.
\end{remark} 

For lower bounding the total cost of \opt, we assume that all communication is synchronous w.r.t. \Cref{re:asynchOpt}.
Let $\forest_{\opt}$ denote the resulted forest by \opt\ in a synchronous execution that includes
$k$ TSP paths denoted by $F^1_{\opt},F^2_{\opt}, \dots, F^k_{\opt}$.
Note that \opt\ only sends one message for scheduling of a request.
Regarding to \eqref{eq:delayOneshot},
the scheduling cost of any request $r'=(v',0)$ as the successor request of $r=(v,0)$ equals $d_G(v,v')$ for the input graph $G=(V,E)$
in a synchronous system w.r.t. \Cref{sec:communicationModel} if the corresponding find-predecessor
message is sent through a direct path from $v'$ to $v$.
Therefore, the total cost of \opt\ is at least the total weight of $\forest_{\opt}$ w.r.t measurements
of the input graph. Formally,

\begin{equation}\label{eq:optAlgCostLB}
	\Delay_{\opt} \geq \Weight_G(\forest_{\opt}).
\end{equation}

\begin{remark}\label{re:resultedForestAlg}
With respect to \Cref{de:SRespectingLForest}, the resulted forest by any distributed protocol that solves
the \dsms\ problem is an $\R_D$-respecting spanning $k$-forest of $B=\big(\R,{\R \choose 2}\big)$.       
\end{remark}

Regarding to \Cref{re:resultedForestAlg}, the total weight of $\forest_{\opt}$ can be lower bounded
by the total weight of a minimum $\R_D$-respecting spanning $k$-forest of $B=\big(\R,{\R \choose 2}\big)$
w.r.t. the measurements of the input graph $G$. Let $\forest_{\min}$ denote any minimum weight
$\R_D$-respecting spanning $k$-forest of $B$ that includes
$k$ trees denoted by $F^1_{\min},F^2_{\min}, \dots, F^k_{\min}$\label{no:minForest}. Formally,

\begin{equation}\label{eq:optForestCostLB}
	\Weight_G(\forest_{\opt}) \geq \Weight_G(\forest_{\min}).
\end{equation}

\subsection{Optimal Distributed \dsms\ Protocols on HSTs}
\label{sec:optimalProtocolHST}
We consider the HST $T$
(see \Cref{de:hst}) as the input graph in this section. A set of requests $\R$ including the $k \geq 1$ dummy
requests $\R_D$ as well as $k$ identical servers are initially located at some leaves of $T$. As explained in
the problem definition in \Cref{sec:dsms}, we assume w.l.o.g. that there is a dummy request
at each leaf of $T$ that initially hosts a server.
We observe that if \alg\ satisfies the following basic \textbf{locality properties} on $T$,
then it outputs an optimal solution w.r.t. $T$. Later, we will formally show that this observation is
indeed correct.

\begin{enumerate}[1.] 
\item \label{pr:ALGIntracomponent} A server never goes back to a subtree after the time
when it leaves the subtree.
\item \label{pr:ALGIntercomponent1} All requests in any subtree of $T$ that initially hosts
at least one server, are served by the servers inside the subtree.
\item \label{pr:ALGIntercomponent2} All requests in any subtree of $T$ that does not initially host any server,
are served by the same server.
\end{enumerate} 

However, w.r.t. \eqref{eq:optForestCostLB} we would like to characterize the properties
of $\forest_{\min}$ instead of working with $\forest_{\opt}$. While any component
of $\forest_{\opt}$ is a TSP path that represents the movement of the corresponding server,
any component of $\forest_{\min}$ is a tree and not necessarily a TSP path.
Hence, Property \ref{pr:ALGIntracomponent}, Property \ref{pr:ALGIntercomponent1}, and Property \ref{pr:ALGIntercomponent2} are adapted
for any $\R_D$-respecting spanning $k$-forest of $B$ that has a minimum total weight as follows.

Let the $k$ components of any $\R_D$-respecting spanning $k$-forest $\forest$ of $B$
denoted by $F^1,F^2, \dots, F^k$. Further, let $\R^w$ denote the request set of the component
$F^w$ for any $w \in [1,k]$. For any subtree $T'$ of $T$, let $F^w(T')$ denote the subgraph of
$F^w$ that is induced by those requests in $\R^w$ that are also in $T'$ and let $\R^w(T')$
denote the request set of the component $F^w(T')$. Let $\forest_{\grd}$ denote any $\R_D$-respecting
spanning $k$-forest of $B$ that has the following basic \textbf{locality properties}.

\begin{enumerate}[I.] 
\item \label{pr:ForestIntracomponent} [\textbf{Intra-Component Property}] For any subtree $T'$ of $T$
and for any $w \in [1,k]$, the component $F^w_{\grd}(T')$ is a tree.
The Intra-Component property is adapted from the Property \ref{pr:ALGIntracomponent}.
\item \label{pr:ForestIntercomponent} [\textbf{Inter-Component Property}] For any subtree $T'$ of $T$, suppose that
there are at least two non-empty components $F^z_{\grd}(T')$ and $F^w_{\grd}(T')$ where $w \neq z$ and $w,z \in [1,k]$.
Any of these components includes a dummy request.
This property combines Property \ref{pr:ALGIntercomponent1} and Property \ref{pr:ALGIntercomponent2}
into one property.
\end{enumerate} 

Note that the Property \ref{pr:ForestIntercomponent} implies that for any subtree $T'$ of $T$ that does not
initially host a server, all requests in $T'$ are included in a unique component $F^w_{\grd}(T')$ for some $w \in [1,k]$.
We generally say that removing an edge from any graph $G=(V,E)$ provides an
$m$-cut if the removal decomposes the graph into $m$ connected components.
The following lemma elaborates more formally the Property
\ref{pr:ForestIntracomponent} and Property \ref{pr:ForestIntercomponent}.    
 
\begin{lemma}\label{le:localPropertiesAnalysis}
For every edge $e$ in $\forest_{\grd}$, consider the shortest weight edge $e^*$ in
$E_B$ crossing the $(k+1)$-cut induced by removing $e$ from $\forest_{\grd}$ such
that $\forest_{\grd} \setminus \set{e} \cup \set{e^*}$ is again an $\R_D$-respecting spanning
$k$-forest of $B$. Then, the weight of $e$ equals the weight of $e^*$.
\end{lemma}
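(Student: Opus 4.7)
The plan is to prove the equality by sandwich: since $e$ itself crosses its own $(k{+}1)$-cut and reinserting it yields $\forest_{\grd}$, we get $\weight_T(e^*)\le \weight_T(e)$ for free, so the entire content of the lemma is the opposite inequality. I will argue it by contradiction, assuming some $e'\in E_B$ crosses the cut with $\weight_T(e')<\weight_T(e)$ while $\forest_{\grd}\setminus\set{e}\cup\set{e'}$ is still an $\R_D$-respecting spanning $k$-forest, and derive a contradiction from the Intra-Component and Inter-Component properties together with the HST structure.

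The key HST fact is that the $B$-weight of an edge depends only on the depth in $T$ of the LCA of its two endpoints. Let $T_e$ be the lowest subtree of $T$ whose leaves contain both endpoints of $e$; then $\weight_T(e')<\weight_T(e)$ forces the LCA subtree $T_{e'}$ of $e'$'s endpoints to sit strictly below $T_e$, i.e.\ $T_{e'}$ is either contained in a proper child of $T_e$ or is disjoint from $T_e$. Let $e\in F^w_{\grd}$ and write $X,Y$ for the two subtrees of $F^w_{\grd}$ produced by removing $e$, with the unique dummy $r_0^w\in X$ WLOG. The $\R_D$-respecting condition rules out any merger of two dummies, so $e'$ is of exactly one of two types: (a) $e'$ reconnects $X$ with $Y$, in which case both endpoints of $e'$ lie in $\R^w$; or (b) $e'$ connects $Y$ to some other component $F^z_{\grd}$ with $z\ne w$.

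The rest is a uniform connectivity argument. For each choice of type (a)/(b) and each possible position of $T_{e'}$, I pick a subtree $T''\subseteq T$ that contains both endpoints of $e'$ (namely $T''=T_{e'}$ when $T_{e'}$ is disjoint from $T_e$, and $T''$ the relevant child subtree of $T_e$ containing $T_{e'}$ otherwise), together with a ``witness'' request $r^\dagger\in\R^w(T'')$ whose side of the cut is already known. In case (a) both endpoints of $e'$ already lie in $\R^w(T'')$, so Intra-Component applied to $T''$ gives a path between them inside the tree $F^w_{\grd}(T'')$; since $e$'s endpoints lie in two different children of $T_e$ while $T''$ is either contained in one child of $T_e$ or disjoint from $T_e$, this path avoids $e$ and places the two endpoints of $e'$ on the same side of the cut, contradicting that $e'$ crosses it. In case (b), $F^w_{\grd}(T'')$ and $F^z_{\grd}(T'')$ are both non-empty, so Inter-Component applied to $T''$ forces $r_0^w\in T''$; I take $r^\dagger=r_0^w$ and run the same Intra-Component path argument inside $T''$ to connect $r_0^w\in X$ to the $\R^w$-endpoint of $e'$, contradicting that this endpoint lies in $Y$. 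When $T''$ happens to share a child of $T_e$ with an endpoint of $e$, that endpoint of $e$ serves equally well as the witness, bypassing the Inter-Component step.

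The main obstacle I anticipate is bookkeeping: for each sub-case one must correctly identify $T''$ from among $T_{e'}$ and the child subtrees of $T_e$, check which side of the cut contains $r_0^w$, and pick the witness from $\{r_1,r_2,r_0^w\}$ accordingly, especially in case (b) where the choice depends on whether $T_{e'}$ sits in the same child of $T_e$ as an endpoint of $e$ or in a third child, or is disjoint from $T_e$. Once the witness is fixed in each sub-case, every branch collapses to the same one-line step ``$F^w_{\grd}(T'')$ is a tree whose edges all stay inside $T''$ and therefore avoid $e$'', which delivers the contradiction and hence the lemma.
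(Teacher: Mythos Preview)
Your approach is essentially the same as the paper's: the paper also splits into the two cases (a)~$e^*$ reconnects the two halves of $F^w_{\grd}$ and (b)~$e^*$ joins $Y$ to some other component $F^z_{\grd}$, handling (a) via the Intra-Component property applied to the LCA subtree and (b) via the Inter-Component property to force a dummy into $T''$, exactly as you do. Your write-up is in fact slightly more explicit than the paper's in case~(b): the paper asserts in one line that $\R^w_{\grd}(T')$ ``does not have any dummy request since $T'$ is the lowest subtree that includes $e^*$, and the weight of $e$ is larger than $e^*$,'' whereas you spell out the underlying step (Intra-Component inside $T''$ would connect $r_0^w\in X$ to the $Y$-endpoint while avoiding $e$) that makes this claim work.
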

\begin{proof}
Let $e=(r_p,r_q)$ be in the component $F^w_{\grd}$ for some $w \in [1,k]$. We consider two different cases
as follows. 1) The edge $e^*$ crosses the cut---or $2$-cut w.r.t. our definition of an $m$-cut---over $F^w_{\grd}$ resulted by removing $e$.
2) The edge $e^*$ crosses the $(k+1)$-cut over $\forest_{\grd}$ resulted by removing $e$
and connects two different components of $\forest_{\grd}$.

When we remove $e$ from $F^w_{\grd}$, let $(\R^{w_1}_{\grd},\R^{w_2}_{\grd})$ be the resulted
$2$-cut. First, we show that the weight of $e^*=(r_x,r_y)$ is not smaller than weight of $e$
in the first case using the Property \ref{pr:ForestIntracomponent}. 
Regarding to the Property \ref{pr:ForestIntracomponent}, the
tree $F^w_{\grd}(T')$ is connected for every subtree $T'$ of $T$.
Let $T'$ be the lowest subtree of $T$ that consists of both $r_p$ and $r_q$ as
shown in \Cref{fig:intracomponentEdgeAnalysis}. Hence, removing $e$ from $F^w_{\grd}(T')$
does not decompose the tree $F^w_{\grd}(T'')$ for any subtree $T''$ of $T$
whose root is not an ancestor of the root of $T'$---in this paper, we assume that a node is an ancestor of itself.
Thus, the request set $\R^w_{\grd}(T'')$ is completely either in $\R^{w_1}_{\grd}$ or $\R^{w_2}_{\grd}$
(this decomposition is shown with blue and gray colors in \Cref{fig:intracomponentEdgeAnalysis}).
Therefore, the minimum depth of $T''$ is at least the depth of a child subtree of $T'$.   
We recall that the two requests $r_x$ and $r_y$ are on different sides of
the cut $(\R^{w_1}_{\grd},\R^{w_2}_{\grd})$. Hence, the least common ancestor
of $v_x$ and $v_y$ has to be an ancestor of $T'$. Thus, we get $d_T(v_p,v_q) \leq d_T(v_x,v_y)$.

\begin{figure}[H]
  \center	
  \includegraphics[width=0.4\textwidth]{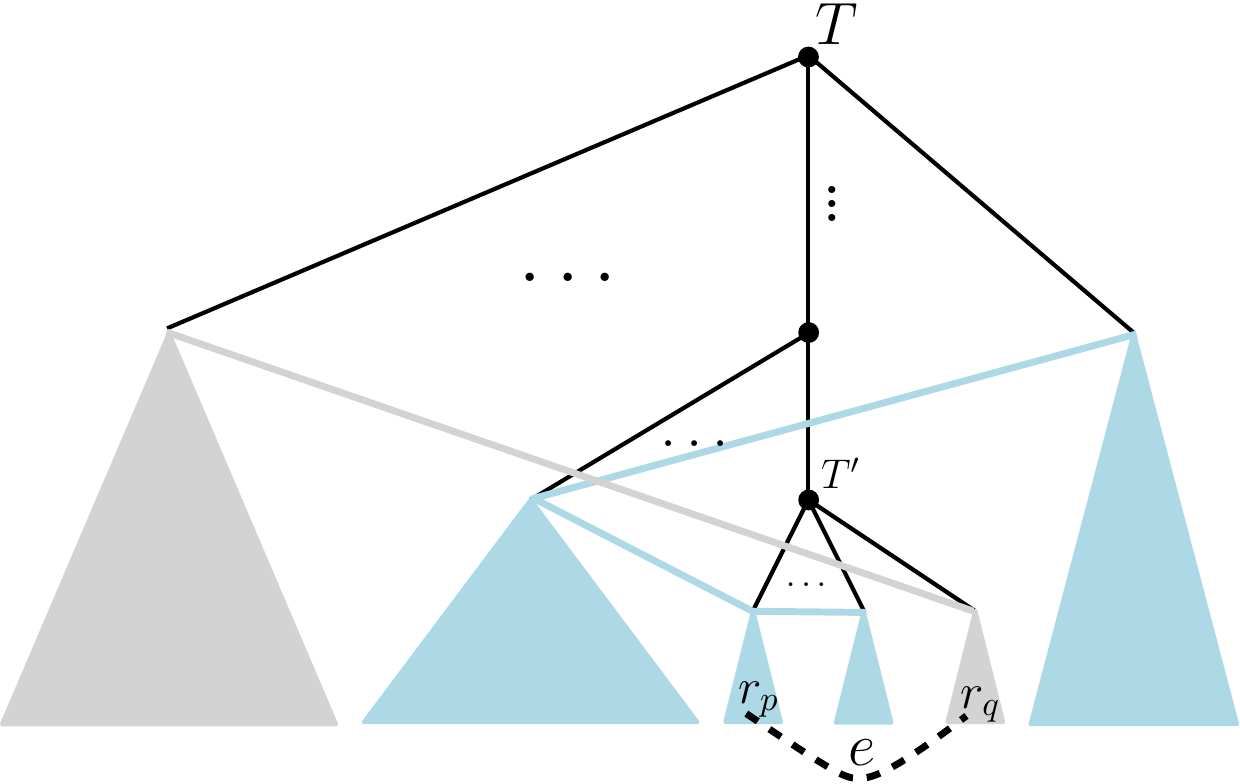}
  \caption{The component $F^w_{\grd}$:
  The connected component corresponding to the request set $\R^{w_1}_{\grd}$ is colored with blue color and
  the one corresponding to the request set $\R^{w_2}_{\grd}$ is colored with gray color. For any edge
  $e^*=(r_x,r_y)$ that connects two components with different colors, the requests $r_x$ and $r_y$ both
  together can be only included in a subtree whose depth is not smaller than the depth of $T'$.}
  \label{fig:intracomponentEdgeAnalysis}
\end{figure}

Now, we show that the weight of $e^*=(r_x,r_y)$ is not smaller than the weight of $e$
in the second case using the Property \ref{pr:ForestIntercomponent}. W.l.o.g., assume that the dummy
request is in $\R^{w_1}_{\grd}$. Hence, one of the endpoints of $e^*$ must be
in $\R^{w_2}_{\grd}$ and the other endpoint must belong to some different component $F^z_{\grd}$.
Let the subtree $T'$ of $T$ be the lowest subtree that consists of both $r_x$ and $r_y$.
For the sake of contradiction, assume that the weight of $e^*$ is smaller than the weight
of $e$. Thus, the request set $\R^w_{\grd}(T')$ that has one of the endpoints
of $e^*$, does not have any dummy request since $T'$ is the lowest subtree that
includes $e^*$, and the weight of $e$ is larger than $e^*$.
However, using the Property \ref{pr:ForestIntercomponent}, the component $F^w_{\grd}(T')$
must include a dummy request since there is another component $F^z_{\grd}(T')$ that includes
another endpoint of $e^*$. This is a contradiction, and therefore the weight of $e^*$ is not
smaller than the weight of $e$.
\end{proof}
\begin{lemma}\label{le:optimalGreedyForest}
Any $\R_D$-respecting spanning $k$-forest of $B$ that has the Property
\ref{pr:ForestIntracomponent} and Property \ref{pr:ForestIntercomponent}
is indeed a minimum weight $\R_D$-respecting spanning $k$-forest of $B$. Consequently,
\[
	\Weight_T(\forest_{\min}) = \Weight_T(\forest_{\grd}).
\]
\end{lemma}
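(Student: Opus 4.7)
The plan is to establish $W_T(\forest_{\grd}) \leq W_T(\forest_{\min})$ by an inductive exchange argument; the reverse inequality holds automatically since $\forest_{\min}$ is of minimum weight. The key tool is \Cref{le:localPropertiesAnalysis}, which acts as the classical cut property of minimum spanning trees. The induction is on $d := |\forest_{\grd} \setminus \forest_{\min}|$; the base case $d=0$ forces $\forest_{\grd} = \forest_{\min}$ since both are $\R_D$-respecting spanning $k$-forests of $B$ and hence have the same edge count $|\R|-k$.

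For the inductive step, I would fix any $e \in \forest_{\grd} \setminus \forest_{\min}$. Removing $e$ from $\forest_{\grd}$ splits one of its $k$ trees into a piece $C_1$ (still containing that tree's dummy) and a dummy-free piece $C_2$. Since every component of $\forest_{\min}$ contains exactly one dummy while $C_2$ contains none, $\forest_{\min}$ must contain at least one edge crossing the cut $(C_2, V_B \setminus C_2)$. The plan is to choose such an edge $e''$ with the additional property that both $\forest_{\grd} \setminus \set{e} \cup \set{e''}$ and $\forest_{\min} \setminus \set{e''} \cup \set{e}$ remain $\R_D$-respecting spanning $k$-forests (see the next paragraph). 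Applying \Cref{le:localPropertiesAnalysis} to $e$ in $\forest_{\grd}$ then yields $w_T(e'') \geq w_T(e^*) = w_T(e)$, because $e''$ is an admissible swap candidate and $e^*$ denotes the minimum-weight admissible one. Consequently
\[
	W_T\!\left(\forest_{\min} \setminus \set{e''} \cup \set{e}\right) = W_T(\forest_{\min}) + w_T(e) - w_T(e'') \leq W_T(\forest_{\min}),
\]
so $\forest'_{\min} := \forest_{\min} \setminus \set{e''} \cup \set{e}$ is again a minimum-weight $\R_D$-respecting spanning $k$-forest. Moreover, $e$ is the unique $\forest_{\grd}$-edge crossing the cut and $e \notin \forest_{\min}$ forces $e'' \neq e$ and hence $e'' \notin \forest_{\grd}$, so $|\forest_{\grd} \setminus \forest'_{\min}| = d - 1$ and the induction closes.

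The main obstacle is producing such an $e''$, i.e., guaranteeing the ``strong'' simultaneous exchange. I would argue this most cleanly by introducing the auxiliary graph $B'$ obtained from $B$ by attaching a super-vertex $\star$ joined to each dummy $r^z_0$ via a zero-weight edge. The $\R_D$-respecting spanning $k$-forests of $B$ correspond, weight-preservingly, to spanning trees of $B'$ containing all $\star$-edges, equivalently to spanning trees of the contracted graph $\tilde B$ obtained by identifying all dummies with $\star$. Under this correspondence the desired joint swap reduces to the classical strong basis-exchange property for the graphic matroid of $\tilde B$: taking $e''$ to be any edge on the unique $\forest_{\min}$-path in $\tilde B$ between the endpoints of $e$ that crosses the cut $(C_2, V_B \setminus C_2)$, both swaps preserve the spanning-tree structure in $\tilde B$ and hence the $\R_D$-respecting $k$-forest structure in $B$.
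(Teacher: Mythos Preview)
Your proof is correct and follows the same high-level idea as the paper---an inductive exchange argument driven by \Cref{le:localPropertiesAnalysis}---but the details differ in an interesting way. The paper's proof simply invokes \Cref{le:localPropertiesAnalysis} together with the generic approximation result \Cref{appthm:MSFapprox} (with $\lambda=1$). That appendix theorem also proceeds by induction on the symmetric difference, but it swaps on the $\forest_{\grd}$ side: it removes a \emph{maximum}-weight edge $e\in\forest_{\grd}\setminus\forest^*$ and must then re-verify, via a hands-on case analysis on which components contain nodes of $S$, that the modified forest still satisfies the cut hypothesis. Your argument swaps on the $\forest_{\min}$ side instead, which sidesteps the need to preserve the locality properties through the induction; the super-vertex/contraction trick reducing $\R_D$-respecting $k$-forests to spanning trees of $\tilde B$ is a clean way to import the strong basis-exchange property of graphic matroids and avoid the paper's case analysis. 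What the paper's route buys is generality (\Cref{appthm:MSFapprox} gives a $\lambda$-approximation statement reusable elsewhere); what your route buys is a shorter, more conceptual argument tailored to the $\lambda=1$ case at hand.
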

\begin{proof}
Together with \Cref{le:localPropertiesAnalysis}, \Cref{appthm:MSFapprox} implies that the total
weight of $\forest_{\grd}$ equals the total weight of $\forest_{\min}$.
\end{proof}

In an asynchronous system, the message latencies are unpredictable. Hence, the resulted
forest by some optimal \dsms\ protocol on $T$ might violate the Property
\ref{pr:ALGIntracomponent}, Property \ref{pr:ALGIntercomponent1}, and Property \ref{pr:ALGIntercomponent2}.

\begin{proof}[\textbf{Proof of \Cref{th:genericResult}}]
Let $\forest_{\alg}$ denote the resulted forest by \alg.
Using the assumptions of the theorem, we have
\[
	\Delay_{\alg}=\Latency_{\alg}(\forest_{\alg}) \leq \Weight_T(\forest_{\grd}).
\]
On the other hand, using \eqref{eq:optAlgCostLB}, \eqref{eq:optForestCostLB}, and \Cref{le:optimalGreedyForest}
we have
\[
	\Delay_{\opt} \geq \Weight_T(\forest_{\opt}) \geq \Weight_T(\forest_{\grd}).
\]
\end{proof}

\subsection{Optimality of GNN on HSTs}
\label{sec:gnnHST}

In this section, we provide a proof for \Cref{th:gnn}.
In an execution of \gnn\ protocol, we can have the situation where the Property
\ref{pr:ALGIntracomponent}, Property \ref{pr:ALGIntercomponent1}, and Property \ref{pr:ALGIntercomponent2} are violated. Consider an example provided by \Cref{fig:complication}.
In a one-shot execution of \gnn, there are initially two dummy $r^1_0$ and $r^2_0$ requests and four
other requests $r_a,r_b,r_c,r_d$. The result of this execution is as follows.
The find-predecessor message of $r_a$ reaches the root of $T''$ not later than the time when the
find-predecessor message of $r_b$ reaches the root of $T''$. This implies that the
server that is initially in $T''$ leaves $T''$ for serving the request $r_a$ and it goes back again into $T''$
for serving $r_b$. Therefore, the Property \ref{pr:ALGIntracomponent}---or the Intra-Component property---is violated.
Further, the find-predecessor message of $r_d$ reaches the root of $T'$ not earlier than the
find-predecessor messages of $r_a$ and $r_b$ and reaches the root of $T''$ not later than the
find-predecessor messages of $r_c$. Hence, the server that is not initially in $T''$ enters $T''$
for serving $r_c$. Therefore, the Property \ref{pr:ALGIntercomponent1}---or the Inter-Component property---is violated.

\subsubsection{Transforming $\forest_{GNN}$}
\label{sec:modification}

Although $\forest_{\gnn}$ does not necessarily satisfy the Intra-Component and Inter-Component properties
as it is shown by \Cref{fig:complication} as an example, the ``greedy nature'' of \gnn\ helps us to transform $\forest_{\gnn}$ into an
$\R_D$-respecting spanning $k$-forest of $B$ that satisfies the Property
\ref{pr:ForestIntracomponent} and Property \ref{pr:ForestIntercomponent} where $\R_D \subseteq \R$
such that the total cost of $\forest_{\gnn}$ is upper bounded by the total weight of the new
forest. Before we delve into details of the transformation, some preliminaries are provided in the following.

Let $M^{\uparrow}(T'):=\{\mu^{\uparrow}_1(T'),\mu^{\uparrow}_2(T'),\dots\}$ denote the set of all
find-predecessor messages of requests in $T'$ that leave $T'$\label{no:messagesUP}.
We assume that $\mu^{\uparrow}_1(T')$ is the first message that leaves $T'$.
In general, the messages in $M^{\uparrow}(T')$ are indexed in the order they leave $T'$. 
We also consider the set of all find-predecessor messages that enter the subtree $T'$.
Let $M^{\downarrow}(T'):=\{\mu^{\downarrow}_1(T'),\mu^{\downarrow}_2(T'),\dots\}$ denote the
set of all find-predecessor messages of requests in $T'$ that enter the subtree $T'$\label{no:messagesDown}.
The message $\mu^{\downarrow}_1(T')$ is assumed to be the first message that enters $T'$.
Similarly, the messages in $M^{\downarrow}(T')$ are indexed in the order they enter $T'$.
In the following, we provide a timeline from the first time when a message
that is either in $M^{\uparrow}(T')$ or in $M^{\downarrow}(T')$ reaches the root of $T'$ for
any subtree $T'$ of $T$ to the time when the last message among all messages in these two sets
reaches the root of $T'$.

If the subtree $T'$ only includes non-dummy requests, then the first message that reaches
the root of $T'$ must leave $T'$ w.r.t. the description of \gnn. By contrast, if $T'$ includes a
dummy request, w.l.o.g., we assume that $\mu^{\uparrow}_1(T')$ is a ``virtual message''
that reaches the root of $T'$ at time $0$. Therefore, in either case, $M^{\uparrow}(T')$
includes $\mu^{\uparrow}_1(T')$ and $|M^{\uparrow}(T')| \geq 1$. The set $M^{\downarrow}(T')$, however,
can be empty if there is not any request in $T'$ that has a successor request outside of $T'$. 
Regarding the description of \gnn, note that the root of $T'$ arbitrarily processes the messages
that are received at the same time and \gnn\ does not take into account the processing time
at the root of $T'$.    

\begin{lemma}\label{le:processTime}
Consider any subtree $T'$ of $T$ and any two messages that reach the root $v$ of $T'$ at the same time $t$
in which one of them is from inside of $T'$ denoted by $\mu$ and the other one is from outside of $T'$ denoted by $\mu'$.
The node $v$ processes $\mu'$ before processing the message $\mu$ if the message $\mu$ leaves $T'$.
\end{lemma}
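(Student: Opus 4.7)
The plan is to proceed by contradiction: assume that $v$ processes $\mu$ strictly before $\mu'$, and derive a contradiction with the hypothesis that $\mu$ leaves $T'$. The reason I expect contradiction is that, in order for $\mu$ to be forwarded upward out of $T'$, the node $v$ must, at the moment it processes $\mu$, have an upward link to its parent---but the simultaneous arrival of $\mu'$ from the parent rules this out via Lemma~\ref{le:edgeStates}.

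More concretely, first I would look at Algorithm~\ref{alg:findReceived} and observe that, when $v$ processes an incoming message from a child, it chooses the destination $z$ exclusively from the current set $v.\pointer$, preferring a child if one exists. Hence, for $\mu$ (coming from some child $u \in T'$) to be forwarded to the parent of $v$, the set $v.\pointer$ must contain the parent and no child at the instant $v$ begins processing $\mu$. In particular, the edge $(v,\text{parent}(v))$ must carry an upward link from $v$ at that instant.

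Second, I would analyze the edge $(v,\text{parent}(v))$ right before time $t$. Since $\mu'$ arrives at $v$ from outside $T'$ at time $t$, $\mu'$ was in transit on this edge during some interval ending at $t$. By Lemma~\ref{le:edgeStates}, while $\mu'$ is in transit there can be no link on that edge, from either side. Thus immediately before $t$, parent$(v) \notin v.\pointer$. If $v$ were to process $\mu$ before $\mu'$ at time $t$, the state $v.\pointer$ would be exactly this pre-$t$ state (processing $\mu$ from $u$ only removes some node and inserts $u$; it does not add the parent). Therefore $\mu$ could not be forwarded upward, contradicting the assumption that $\mu$ leaves $T'$.

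The only delicate point---and the step I expect to require the most care---is making sure that ``the state at the moment $v$ processes $\mu$'' is well defined under the atomicity convention of the protocol, and that the two simultaneously arriving messages do not interact in some other way (e.g., being both ``in transit'' in the model's sense at instant $t$). I would handle this by appealing to the atomic processing rule stated in Algorithms~\ref{alg:newReq} and~\ref{alg:findReceived}: messages received simultaneously are serialized into an arbitrary but total order, and each is processed start-to-finish before the next. Under this convention, the argument above shows that any serialization placing $\mu$ before $\mu'$ forces $v$ to keep $\mu$ inside $T'$, so if the hypothesis ``$\mu$ leaves $T'$'' holds, the only consistent serialization is the one in which $\mu'$ is processed first, which is precisely the claim.
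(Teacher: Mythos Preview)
Your proposal is correct and follows essentially the same approach as the paper: both argue by contradiction, using Lemma~\ref{le:edgeStates} on the edge between $v$ and its parent to conclude that while $\mu'$ is in transit there can be no upward link at $v$, so processing $\mu$ first would force it to stay inside $T'$. The only cosmetic difference is that the paper additionally invokes Lemma~\ref{le:uniqueEdge} to note that $v$ must then point to some child (so $\mu$ is actively forwarded downward), whereas you simply observe that the absent parent link already prevents $\mu$ from being forwarded upward; either formulation suffices.
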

\begin{proof}
The node $v$ does not point to its parent at time $t$ before it processes the message $\mu'$. Further, at the same
time, $v$ must point to at least one of its children w.r.t. \Cref{le:uniqueEdge}. 
Therefore, if $v$ processes $\mu$ before processing the message $\mu'$, then $\mu$ is forwarded inside $T'$
w.r.t. the description of \gnn.
This is a contradiction with the assumption of the lemma
and consequently $v$ processes $\mu'$ before processing the message $\mu$.
\end{proof}

\begin{lemma}\label{le:upwardArrow}
Consider any subtree $T'$ of $T$ in which $M^{\downarrow}(T')$ is not empty.
The root $v$ of $T'$ has an upward link since when $\mu^{\downarrow}_i(T')$ reaches $v$ and it
is processed by $v$ until the first time after processing $\mu^{\downarrow}_i(T')$
when a find-predecessor message that leaves $T'$ is processed by $v$
for any $i \in [1,|M^{\downarrow}(T')|]$.
\end{lemma}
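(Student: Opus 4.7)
The plan is to derive the claim in two stages: first, to verify that the state of $v$ immediately after processing $\mu^{\downarrow}_i(T')$ contains an upward link to $v$'s parent $p$; and second, to show that this link is preserved until the first time $v$ forwards a message to $p$ after $t_i$.

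For the first stage, I would appeal directly to \Cref{alg:findReceived}. Since $\mu^{\downarrow}_i(T')$ enters $T'$, it reaches $v$ along the edge from $p$. The algorithm updates $v.\pointer$ by removing some node $z$ and inserting the sender---here, $p$---so $v$ indeed ends up with $p \in v.\pointer$, which is an upward link. For the second stage, the key observation is that the upward link to $p$ can disappear from $v.\pointer$ only if $v$ later processes a message and selects $p$ as the target $z$ in \Cref{alg:findReceived}. By the structure of that algorithm, $z=p$ is chosen only when $v.\pointer$ has no child entry (downward links are preferred), and in that case the message is then forwarded to $p$, i.e., leaves $T'$. Hence the first time the upward link could be discarded coincides with the first time after $t_i$ that $v$ forwards a message out of $T'$, which is exactly $t_i'$.

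The subtle case I must address is whether the upward link to $p$ could be removed at some time $s \in (t_i,t_i')$ and immediately re-added by a later incoming message $\mu^{\downarrow}_j(T')$ with $j > i$. I plan to rule this out using \Cref{le:edgeStates}: after $\mu^{\downarrow}_i(T')$ traverses the edge $(v,p)$, that edge carries neither a link nor a message directed from $p$ to $v$. For another message to arrive at $v$ from $p$, first $p$ must again acquire a link towards $v$, which by \Cref{alg:findReceived} occurs only when $p$ processes some message sent from $v$. Any such message from $v$ to $p$ leaves $T'$ and, by the minimality of $t_i'$, cannot occur in the interval $(t_i,t_i')$. Therefore no $\mu^{\downarrow}_j$ with $j>i$ reaches $v$ before $t_i'$, so there is no opportunity to re-add the upward link in a potentially removed-then-restored fashion, and the link persists throughout the stated interval.

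I expect this last step---turning the at-most-one-of-link-or-message-per-edge invariant into the statement that no new incoming message from $p$ can reach $v$ during $(t_i, t_i')$---to be the main obstacle, since it is where the asynchrony of \gnn\ must be controlled without appealing to timing assumptions. Beyond that, both stages reduce to careful bookkeeping against the atomic rules of \Cref{alg:newReq} and \Cref{alg:findReceived}, which is routine.
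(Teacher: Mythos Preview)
Your proposal is correct and follows essentially the same line as the paper's proof: establish that processing $\mu^{\downarrow}_i(T')$ inserts the parent $p$ into $v.\pointer$ via \Cref{alg:findReceived}, and then observe that $p$ can only be removed from $v.\pointer$ when $v$ forwards a message to $p$, i.e., when a message leaves $T'$. Your ``subtle case'' discussion is sound but redundant---your Stage~2 already rules out any removal of the upward link before $t_i'$, so there is nothing to re-add; the paper's version is terser and invokes \Cref{le:processTime} to pin down the processing order when arrivals coincide, whereas you work directly from the post-processing state, which is equally valid.
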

\begin{proof}
Let $\mu^{\uparrow}_j(T')$ denote the first message that leaves $T'$ not earlier than $t^{\downarrow}_i(T')$.
Hence, we have $t^{\uparrow}_j(T') \geq t^{\downarrow}_i(T')$.
The message $\mu^{\downarrow}_i(T')$ is processed by $v$ before processing $\mu^{\uparrow}_j(T')$ 
even if $\mu^{\downarrow}_i(T')$ and $\mu^{\uparrow}_j(T')$ reach $v$ at the same time using \Cref{le:processTime}.
Therefore, $v$ forwards $\mu^{\downarrow}_i(T')$ to $T'$ and points to its parent w.r.t. the description of \gnn.
The node $v$ has the pointer to its parent until when $v$ processes $\mu^{\uparrow}_j(T')$. The node $v$ forwards
$\mu^{\uparrow}_j(T')$ to its parent and the link that points to its parent is removed.
\end{proof}

\begin{lemma}\label{le:DownwardArrow}
Consider any subtree $T'$ of $T$ that includes at least one request.
The root $v$ of $T'$ has a downward link since when $\mu^{\uparrow}_i(T')$ reaches $v$ and it
is processed by $v$ until the first time $t$ after processing $\mu^{\uparrow}_i(T')$
when a find-predecessor message that enters $T'$ is processed by $v$ for any $i \in [1,|M^{\uparrow}(T')|]$.
Further, $t$ is strictly larger than $t^{\uparrow}_i(T')$.
\end{lemma}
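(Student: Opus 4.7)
My plan is to mirror the structure of the proof of \Cref{le:upwardArrow}, which established the dual statement about upward links, and exploit the "prefer a child" tie-breaking rule of \Cref{alg:findReceived}.

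First, I would establish the base case: immediately after $v$ processes $\mu^{\uparrow}_i(T')$, the root $v$ has a downward link. Since $\mu^{\uparrow}_i(T')$ leaves $T'$, by definition $v$ forwards it to its parent, which by the algorithm means the chosen $z$ is the parent (no child was present in $v.\pointer$ at that moment). The message itself was sent to $v$ from a child $c$ inside $T'$. By Lines~\ref{le:startClaimUniqueEdge2}--\ref{le:endClaimUniqueEdge2} of \Cref{alg:findReceived}, $v.\pointer$ loses $z$ (the parent) and gains the sender $c$, so right after processing, $v.\pointer$ contains a child of $v$, i.e.\ a downward link.

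Next, I would establish the inductive step by walking forward through every message that $v$ processes strictly after $t^{\uparrow}_i(T')$ but strictly before the first entering message is processed. Each such intermediate message either enters $T'$ (ruled out by the choice of the interval) or is sent to $v$ from one of $v$'s children. Assume inductively that $v$ has a downward link (a child in $v.\pointer$) at the start of this processing step. Since a child is present, the tie-breaking rule forces the algorithm to remove some child $z$ from $v.\pointer$ and add the sender, which is again a child of $v$. Therefore $v.\pointer$ still contains at least one child after the step, preserving the downward link. This induction shows the downward link persists up to (but not through) the first entering-message processing event.

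Finally, I would argue $t > t^{\uparrow}_i(T')$ strictly using \Cref{le:processTime}. Suppose for contradiction that an entering message $\mu'$ were processed by $v$ at the same instant $t^{\uparrow}_i(T')$. Then $\mu^{\uparrow}_i(T')$ (leaving $T'$, hence from inside) and $\mu'$ (entering $T'$, hence from outside) would both reach $v$ at that instant; \Cref{le:processTime} forces $\mu'$ to be processed first, so it would have been processed before $t^{\uparrow}_i(T')$ and could not be the first entering message after processing $\mu^{\uparrow}_i(T')$. Thus the first entering message after $\mu^{\uparrow}_i(T')$ must be processed at some time strictly greater than $t^{\uparrow}_i(T')$.

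The main obstacle is bookkeeping in the inductive step: one must rule out the possibility that a processed non-entering message removes the last child from $v.\pointer$. This is where the "prefer a child" rule is essential; without it, a message from a child could in principle consume an upward pointer and add a downward pointer equal to a child already removed, never actually increasing the downward content. I would make this explicit by noting that because the inductive hypothesis guarantees a child in $v.\pointer$ at entry, the algorithm \emph{must} pick a child as $z$, and the sender contributed to $v.\pointer$ is necessarily a child since the message originated inside $T'$.
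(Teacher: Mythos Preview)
Your approach is essentially the paper's: show $v$ has a downward link right after processing $\mu^{\uparrow}_i(T')$, then argue that every subsequent message arriving from inside $T'$ preserves this via the child-preference rule of \Cref{alg:findReceived}, and invoke \Cref{le:processTime} for the strict inequality $t>t^{\uparrow}_i(T')$.

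There is one genuine omission in your base case. You assume $\mu^{\uparrow}_i(T')$ is an actual message that some child $c$ sent to $v$, so that after processing $v.\pointer$ gains $c$. But by the convention introduced just before \Cref{le:processTime}, when $T'$ contains a dummy request the message $\mu^{\uparrow}_1(T')$ is a \emph{virtual} message with $t^{\uparrow}_1(T')=0$; nothing is actually forwarded by $v$ and there is no sender $c$ to add to $v.\pointer$. Your argument does not cover this case (and your strict-inequality argument via \Cref{le:processTime} is also vacuous here, since there is no real arrival of $\mu^{\uparrow}_1(T')$ at $v$). The paper handles this separately: if $t^{\uparrow}_i(T')=0$ then $T'$ contains a dummy request, so by the initialization of \gnn\ there is a downward directed path from $v$ to the leaf hosting that server, hence $v$ already has a downward link at time $0$; and any actual entering message necessarily arrives at a strictly positive time. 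Once you add this case distinction for $t^{\uparrow}_i(T')=0$, your proof coincides with the paper's.
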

\begin{proof}
Let $\mu^{\downarrow}_j(T')$ denote the first find-predecessor message that is processed by $v$ after processing $\mu^{\uparrow}_i(T')$.
The messages $\mu^{\downarrow}_j(T')$ and $\mu^{\uparrow}_i(T')$ cannot reach $v$ at the same time as otherwise
$\mu^{\downarrow}_j(T')$ is processed before $\mu^{\uparrow}_i(T')$ by $v$ using \Cref{le:processTime}. Hence, we have
$t^{\uparrow}_i(T') < t=t^{\downarrow}_j(T')$.

If $t^{\uparrow}_i(T')=0$, then there is a downward directed path from $v$ at the beginning of the execution since there is a dummy
request in $T'$ and therefore $v$ has a downward link at time $t^{\uparrow}_i(T')$.
Otherwise, $t^{\uparrow}_i(T')$ must be larger than $0$. Suppose that the child node
$u$ of $v$ sends the message $\mu^{\uparrow}_i(T')$ to $v$.
The node $v$ atomically forwards the message $\mu^{\uparrow}_i(T')$ at time $t^{\uparrow}_i(T')$ to its parent and set a link that points from $v$
to $u$ w.r.t. the description of \gnn. 
Therefore, in either case, $v$ has a downward link at time $t^{\uparrow}_i(T')$.
The node $v$ points to some other child $w$ if it receives another find-predecessor message from
$w$ and forwards it either to $u$ or some other child node. Hence, $v$ has a downward link since $t^{\uparrow}_i(T')$ as long as
$v$ receives a find-predecessor message of some request that is inside $T'$. As soon as $\mu^{\downarrow}_j(T')$ reaches
$v$, the node $v$ forwards it to $T'$. If $v$ has only one downward link immediately before $\mu^{\downarrow}_j(T')$ reaches $v$,
the downward link is removed when the message $\mu^{\downarrow}_j(T')$ is forwarded to $T'$.
Therefore, the node $v$ has a downward link from time $t^{\uparrow}_i(T')$ until at least $t^{\downarrow}_j(T')$.
\end{proof}

Finally, we are ready to provide a timeline from the time when the first message from the sets $M^{\downarrow}(T')$ and
$M^{\uparrow}(T')$ reaches the root of $T'$ to the time when the last message among all messages in these two sets reaches
the root of $T'$.

\begin{lemma}\label{le:timeLine}
Consider any subtree $T'$ of $T$ that includes at least one request. Let $v$ denote the root of $T'$.
We have
\[
	t^{\uparrow}_1(T') < t^{\downarrow}_1(T') \leq
	t^{\uparrow}_2(T') < t^{\downarrow}_2(T') \leq 
	t^{\uparrow}_3(T') < \cdots.
\]
\end{lemma}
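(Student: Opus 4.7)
The plan is to establish the strict alternation claimed by the lemma by induction on $i$, using Lemma \ref{le:upwardArrow} and Lemma \ref{le:DownwardArrow} as the key ingredients. The two types of inequalities to track are of distinct origin: the strict ones $t^{\uparrow}_i(T') < t^{\downarrow}_i(T')$ come from Lemma \ref{le:DownwardArrow}, which has strict inequality built in; the weak ones $t^{\downarrow}_i(T') \leq t^{\uparrow}_{i+1}(T')$ come from Lemma \ref{le:upwardArrow}, where equality is possible because two messages can reach the root $v$ of $T'$ simultaneously.

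For the base case $i=1$, I would split on whether $T'$ initially hosts a server. If it does, then $\mu^{\uparrow}_1(T')$ is the virtual message at time $0$, so $t^{\uparrow}_1(T')=0$; since every physical message requires positive time to be delivered (edge weights are at least $1$), any existing $\mu^{\downarrow}_1(T')$ satisfies $t^{\downarrow}_1(T')>0$. If $T'$ hosts no server, then the parent $p$ of $v$ initially does not point to $v$ (its pointer is directed towards servers outside $T'$), so the very first message processed at $v$ must originate inside $T'$ and be forwarded through $v$'s initial upward link; hence the first event at $v$ is $\mu^{\uparrow}_1(T')$, and Lemma \ref{le:DownwardArrow} with $i=1$ then yields $t^{\uparrow}_1(T')<t^{\downarrow}_1(T')$ whenever $M^{\downarrow}(T')$ is non-empty.

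For the inductive step, assuming the timeline is established through $t^{\downarrow}_i(T')$, I would invoke Lemma \ref{le:upwardArrow} to argue that $v$ holds an upward link continuously from the processing of $\mu^{\downarrow}_i(T')$ until some later leaving message is processed. During this interval, $p$ does not point to $v$ (because $p$ just forwarded $\mu^{\downarrow}_i(T')$ to $v$, thereby removing $v$ from $p.\pointer$), so no further entering message can reach $v$; hence the next leaving message is precisely $\mu^{\uparrow}_{i+1}(T')$ and $t^{\downarrow}_i(T') \leq t^{\uparrow}_{i+1}(T')$. Lemma \ref{le:DownwardArrow} applied with index $i+1$ then gives that $v$ holds a downward link continuously after processing $\mu^{\uparrow}_{i+1}(T')$; during such an interval Algorithm \ref{alg:findReceived} forwards any child message through the downward link rather than upward, so no new leaving message is generated before the next entering message is processed, which yields $t^{\uparrow}_{i+1}(T')<t^{\downarrow}_{i+1}(T')$ with strict inequality.

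The main obstacle I anticipate is confirming that the index counting aligns cleanly with the claimed chain, since Lemmas \ref{le:upwardArrow} and \ref{le:DownwardArrow} only speak about ``the next'' message in each direction. The plan is to lean on the two auxiliary invariants stated above: while $v$ has an upward link but no downward link, the parent of $v$ cannot forward anything into $T'$, so entering messages are blocked until $v$ sends another message upward; and while $v$ has a downward link, the protocol always forwards a received message through that downward link, so no new leaving message can be produced. Together these two observations enforce a strict interleaving $\mu^{\uparrow}_1(T'),\mu^{\downarrow}_1(T'),\mu^{\uparrow}_2(T'),\mu^{\downarrow}_2(T'),\ldots$ in processing order at $v$, which matches exactly the inequality chain of the lemma.
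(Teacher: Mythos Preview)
Your proposal is correct and follows essentially the same approach as the paper: induction on the index $i$, with Lemma~\ref{le:DownwardArrow} supplying the strict inequalities $t^{\uparrow}_i(T')<t^{\downarrow}_i(T')$ and Lemma~\ref{le:upwardArrow} supplying the weak ones $t^{\downarrow}_i(T')\leq t^{\uparrow}_{i+1}(T')$, together with the same case split in the base case on whether $T'$ contains a dummy request. Your added side remarks about the parent's pointer state are consistent with (and essentially unpack) the implicit use of Lemma~\ref{le:edgeStates} in the paper's argument.
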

\begin{proof}
If the subtree $T'$ includes only non-dummy requests, then the first message that reaches $v$ must be $\mu^{\uparrow}_1(T')$
w.r.t. the description of \gnn. Even if the subtree $T'$ includes a dummy request, then $\mu^{\uparrow}_1(T')$ is again
the first message that reaches $v$ since $\mu^{\uparrow}_1(T')$ is a virtual message in this case and $t^{\uparrow}_1(T')=0$.

First, we show that
\[
	t^{\uparrow}_1(T') < t^{\downarrow}_1(T') \leq t^{\uparrow}_2(T').
\]
We know that if $\mu^{\uparrow}_1(T')$ is a virtual message, then $t^{\uparrow}_1(T')=0$.
Therefore, the message $\mu^{\downarrow}_1(T')$ can only reach $v$ after time $0$ in this case.
Otherwise, the subtree $T'$ does not include any dummy request if $\mu^{\uparrow}_1(T')$ is an actual
message. Therefore, w.r.t. the initialization of \gnn, the root
$v$ of $T'$ has an upward link since there is not any dummy request in $T'$ 
until time $t^{\uparrow}_1(T')$ when $\mu^{\uparrow}_1(T')$ reaches $v$.
Hence, the messages $\mu^{\downarrow}_1(T')$ can only reach $v$ after $t^{\uparrow}_1(T')$
since $v$ has an upward link immediately before $t^{\uparrow}_1(T')$ and the message $\mu^{\downarrow}_1(T')$
cannot be in transit on that.
Consequently, in either case the message $\mu^{\downarrow}_1(T')$ can only reach $v$ after $t^{\uparrow}_1(T')$. 
On the other hand, $\mu^{\downarrow}_1(T')$ is the first message among all messages in $M^{\downarrow}(T')$
that reaches $v$ and therefore $v$ has a downward link from $t^{\uparrow}_1(T')$ to $t^{\downarrow}_1(T')$
using \Cref{le:DownwardArrow}. Therefore, $\mu^{\uparrow}_2(T')$ cannot reach $v$ before $t^{\downarrow}_1(T')$ as otherwise
it cannot leave $T'$. Consequently, $t^{\uparrow}_1(T') < t^{\downarrow}_1(T') \leq t^{\uparrow}_2(T')$. 

Assume that the following holds,
\[
	t^{\uparrow}_1(T') < t^{\downarrow}_1(T') \leq t^{\uparrow}_2(T') <
	\cdots \leq t^{\uparrow}_{j-1}(T') < t^{\downarrow}_{j-1}(T') \leq t^{\uparrow}_{j}(T').
\]
We now show that the claim of lemma also holds for the following.
\[
	t^{\uparrow}_1(T') < t^{\downarrow}_1(T') \leq t^{\uparrow}_2(T') <
	\cdots \leq t^{\uparrow}_{j}(T') < t^{\downarrow}_{j}(T') \leq t^{\uparrow}_{j+1}(T').
\]
The message $\mu^{\downarrow}_{j}(T')$ can only reach $v$ after $t^{\uparrow}_j(T')$ since $v$ has an upward link
from $t^{\downarrow}_{j-1}(T')$ to $t^{\uparrow}_{j}(T')$ using \Cref{le:upwardArrow}. On the other hand, $\mu^{\uparrow}_{j+1}(T')$
cannot reach $v$ before time $t^{\downarrow}_j(T')$ as otherwise $\mu^{\uparrow}_{j+1}(T')$ finds a downward link at $v$
using \Cref{le:DownwardArrow} and therefore it cannot leave $T'$. Consequently, the calim
of the lemma holds.
\end{proof}

\begin{corollary}\label{co:timeLine}
Consider any subtree $T'$ of $T$ that includes at least one request. We have 
\[
	|M^{\downarrow}(T')| \leq |M^{\uparrow}(T')| \leq |M^{\downarrow}(T')|+1.
\]
\end{corollary}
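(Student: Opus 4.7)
The plan is to derive the corollary as a nearly immediate consequence of Lemma~\ref{le:timeLine}, which establishes that the arrival times at the root $v$ of $T'$ of the messages in $M^{\uparrow}(T') \cup M^{\downarrow}(T')$ strictly interleave, starting with an upward message:
\[
  t^{\uparrow}_1(T') < t^{\downarrow}_1(T') \leq t^{\uparrow}_2(T') < t^{\downarrow}_2(T') \leq t^{\uparrow}_3(T') < \cdots.
\]
The key observation is that this chain of inequalities implies a bijection between the $\mu^{\downarrow}_j(T')$ messages and the messages $\mu^{\uparrow}_j(T')$ immediately preceding them, so that every downward message is witnessed by a preceding upward message with the same index.

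First, I would argue the left inequality $|M^{\downarrow}(T')| \leq |M^{\uparrow}(T')|$ as follows. For each $j \in \{1,\dots,|M^{\downarrow}(T')|\}$, Lemma~\ref{le:timeLine} guarantees that $\mu^{\uparrow}_j(T')$ exists (since $t^{\uparrow}_j(T') < t^{\downarrow}_j(T')$ forces its prior arrival at $v$). Hence $M^{\uparrow}(T')$ contains at least $|M^{\downarrow}(T')|$ distinct messages.

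Next, for the right inequality $|M^{\uparrow}(T')| \leq |M^{\downarrow}(T')| + 1$, I would suppose for contradiction that $|M^{\uparrow}(T')| \geq |M^{\downarrow}(T')| + 2$. Setting $j := |M^{\downarrow}(T')|+1$, both $\mu^{\uparrow}_j(T')$ and $\mu^{\uparrow}_{j+1}(T')$ then exist and by Lemma~\ref{le:timeLine} we would have $t^{\uparrow}_j(T') < t^{\downarrow}_j(T') \leq t^{\uparrow}_{j+1}(T')$, contradicting the fact that only $|M^{\downarrow}(T')| = j-1$ downward messages ever reach $v$. Combining the two bounds yields the statement.

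I do not anticipate any real obstacle here, since the corollary is essentially a counting consequence of the strict alternation already proved; the only subtlety to handle carefully is the boundary case where $T'$ contains a dummy request, so that $\mu^{\uparrow}_1(T')$ is the virtual message with $t^{\uparrow}_1(T')=0$ — but this case is already incorporated into Lemma~\ref{le:timeLine} and therefore requires no separate treatment.
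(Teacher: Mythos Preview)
Your proposal is correct and follows essentially the same approach as the paper: both derive the corollary directly from the alternation pattern of Lemma~\ref{le:timeLine} together with the fact that $\mu^{\uparrow}_1(T') \in M^{\uparrow}(T')$. The paper's proof is a one-line remark, while you have simply made the counting argument explicit.
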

\begin{proof}
The claim of the corollary holds using \Cref{le:timeLine} since $\mu^{\uparrow}_1(T') \in M^{\uparrow}(T')$.
\end{proof}

\para{Gaps:} Consider any subtree $T'$ of $T$. Assume that $|M^{\uparrow}(T')|=m$.
We say that $\forest_{\gnn}$ has the gap $\big(\mu^{\downarrow}_i(T'),\mu^{\uparrow}_{i+1}(T')\big)$
on $T'$ for any $i \in [1,m-1]$.

We may refer to a gap as an \textbf{Intra-Component} gap if the messages of the gap
are corresponding with the requests that are in the same component of $\forest_{\gnn}$.
If they are corresponding with the requests
that are in two different components of $\forest_{\gnn}$, then the gap is called an \textbf{Inter-Component} gap.
In fact, the Intra-Component and Inter-Component properties are violated when the Intra-Component and Inter-Component
gaps are made, respectively.

Let $\delta(T')$ denote the diameter of any subtree $T'$ of $T$ that is the weight of the longest
direct path between any two leaves in $T'$\label{no:treeDiameter}. Further, let $v_{\src}(\mu)$ and $v_{\des}(\mu)$
the source and destination of the message $\mu$, respectively. 
\textit{In general, when we refer to the gap $(\mu,\mu')$, we assume that
it is the gap on any subtree $T'$ of $T$ in which $\mu$ enters $T'$ as the $i$-th message in $M^{\downarrow}(T')$
and $\mu'$ leaves $T'$ as the $(i+1)$-th message in $M^{\uparrow}(T')$ for any $i \geq 1$.} In the following, we characterize
the situations where a subtree of $T$ has a gap.

\para{Size of a gap:} The size of the gap $(\mu,\mu')$ equals $d_T\big(v_{\des}(\mu),v_{\src}(\mu')\big)$. 

We say that a subtree $T'$ of $T$ is \textbf{isolated} if no find-predecessor message can enter or exit $T'$.
\begin{lemma}\label{le:isolatedTree}
A subtree $T'$ of $T$ is isolated if the root of $T'$ has an upward and a downward links.
\end{lemma}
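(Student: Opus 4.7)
The plan is to establish an invariant: once the root $v$ of $T'$ has both an upward link (to its parent $p$) and at least one downward link, this configuration persists throughout the remainder of the execution. Isolation of $T'$ will then follow immediately, since every find-predecessor message entering or exiting $T'$ must traverse the edge $(v,p)$.

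First I would invoke \Cref{le:edgeStates} to handle the present moment: since $p \in v.\pointer$, the edge $(v,p)$ currently carries exactly one link (the upward one from $v$ to $p$) and no message is in transit on it, and in particular $v \notin p.\pointer$ at this instant. This rules out any boundary message currently en route in either direction, so only future events could possibly threaten isolation.

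For the inductive step, I would inspect the only event that can alter $v.\pointer$, namely $v$ processing an incoming message (\Cref{alg:findReceived}, together with the atomic treatment justified in \Cref{le:uniqueEdge}). The sender must itself point to $v$ in order to have been able to transmit along the incident edge, so by \Cref{le:edgeStates} combined with the inductive hypothesis $p \in v.\pointer$, the sender cannot be $p$; it must be some child $c'$ of $v$. Because $v.\pointer$ already contains a child by hypothesis, \Cref{alg:findReceived} selects $z$ to be a child, forwards the message downward, removes $z$ from $v.\pointer$, and inserts $c'$. Consequently $v.\pointer$ still contains $p$ (untouched) and still contains a child (at least $c'$), so both links are preserved.

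The invariant immediately yields both halves of isolation: no message is ever forwarded from $v$ to $p$, because $z$ is always chosen as a child whenever $v.\pointer$ contains one; and no message is ever sent from $p$ to $v$, because $p$ could only acquire $v \in p.\pointer$ by first receiving a message from $v$ (\Cref{alg:findReceived}), which by the preceding argument never occurs. The subtle point I anticipate is making sure $v$'s downward side cannot be silently exhausted by a burst of child messages arriving in quick succession; the key mechanism is that the removed child $z$ is immediately replaced by the sender, which itself must be a child, so the downward side is automatically replenished at every step of the induction.
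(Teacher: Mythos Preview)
Your proposal is correct and follows essentially the same approach as the paper: both argue that the configuration (upward link to the parent plus at least one downward link) is an invariant, because any message arriving at $v$ must come from a child and is routed to an existing child while the sender replaces the consumed downward link. Your version is more explicit about invoking \Cref{le:edgeStates} to rule out an in-transit message on $(v,p)$ at the outset and about why $p$ can never subsequently acquire a pointer to $v$, but the substance of the argument is the same.
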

\begin{proof}
Let $v$ denote the root of $T'$. The upward link implies that no find-predecessor
message can enter $T'$ as long as $v$ has an upward link. On the other hand,
when the first find-predecessor message reaches $v$ from some node $u$, a new downward
link from $v$ to $u$ is created while the find-predecessor is forwarded to an already existed downward link w.r.t. the
\gnn\ protocol. The same happens for all other find-predecessor messages when reaching $v$ afterward.
Thus, $v$ always has a downward link, and the upward link remains unchanged.
Consequently, $T'$ is isolated since any other find-predecessor message cannot enter or exit $T'$.
\end{proof}

\begin{lemma}\label{le:gapLowest}
Consider the lowest subtree $T'$ of $T$ in which $\forest_{\gnn}$ has the gap $(\mu_i,\mu_j)$ on $T'$. The message
$\mu_i$ does not visit any node on the path between $v_{\src}(\mu_j)$ and the root of $T'$ (excluding the root of $T'$).
\end{lemma}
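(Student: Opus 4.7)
The plan is to argue by contradiction using the minimality of $T'$. Let $u$ denote the root of $T'$, and suppose $\mu_i$ visits some node $v \neq u$ on the path from $v_{\src}(\mu_j)$ to $u$. Because message trajectories in \gnn\ are simple paths in $T$ (by \Cref{le:acyclic} and the argument used in the proof of \Cref{th:scheduleGuarantee}) and the tree-path from $u$ to $v$ is unique, $\mu_i$ must also visit $v^*$, the child of $u$ lying on the path from $u$ to $v_{\src}(\mu_j)$. Hence when $u$ processes $\mu_i$ at time $t^{\downarrow}_\ell(T')$ it forwards $\mu_i$ down the edge $(u,v^*)$, so $\mu_i$ enters the child subtree $T_1$ of $T'$ rooted at $v^*$ and we can write $\mu_i = \mu^{\downarrow}_k(T_1)$. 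Since $v_{\src}(\mu_j) \in T_1$ and $\mu_j$ must exit $T_1$ upward through $v^*$ on its way out of $T'$, we also have $\mu_j = \mu^{\uparrow}_m(T_1)$ for some $m \geq k+1$ (by \Cref{le:timeLine} applied to $T_1$). The goal is to show $m = k+1$, for then $(\mu_i,\mu_j)$ is a gap on the proper subtree $T_1 \subsetneq T'$, contradicting the choice of $T'$.

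The main obstacle is to control $u.\pointer$ throughout the open window $W:=(t^{\downarrow}_\ell(T'),t^{\uparrow}_{\ell+1}(T'))$, during which no message crosses the boundary of $T'$ at $u$ by \Cref{le:timeLine}. Two boundary snapshots anchor the analysis: just before $u$ processes $\mu_i$ the edge to $\mathrm{parent}(u)$ is occupied by $\mu_i$ in transit, so by \Cref{le:edgeStates} the parent is not in $u.\pointer$ and $u.\pointer$ consists only of child links; just before $u$ processes $\mu_j$, \Cref{alg:findReceived} forces $u.\pointer=\{\mathrm{parent}(u)\}$, for otherwise $\mu_j$ would be relayed to a child instead of being forwarded upward. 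Each event at $u$ inside $W$ can change the number of child links in $u.\pointer$ only in one of two ways: a child-to-child relay (net change zero) or an upward forward (net change $+1$, consuming the parent link and installing the sender as a child). But an upward forward strictly inside $W$ would be an extra $T'$-exit event forbidden by \Cref{le:timeLine}, so every event inside $W$ must be a relay, the child-count is constant across $W$, and reconciling the two snapshots shows that $u.\pointer$ contained exactly one child link immediately before the entry event, which must therefore be $v^*$. After the entry event $u.\pointer=\{\mathrm{parent}(u)\}$, so any later reception by $u$ of an upward message from a child inside $W$ would again require the forbidden upward forward. Consequently $u$ receives no upward message from any child during $W$ except $\mu_j$ itself at the closing instant $t^{\uparrow}_{\ell+1}(T')$.

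The end-game is then immediate. If $m \geq k+2$, then $\mu^{\uparrow}_{k+1}(T_1)$ is an actual message, sent by $v^*$ at time at least $t^{\downarrow}_k(T_1) > t^{\downarrow}_\ell(T')$, and by \Cref{le:edgeStates} together with the edge-alternation on $(u,v^*)$ coming from \Cref{le:timeLine} applied to $T_1$, it must arrive at $u$ strictly before $\mu_j$ does. That arrival is a reception by $u$ from the child $v^*$ occurring strictly inside $W$ and distinct from the final arrival of $\mu_j$, contradicting the conclusion of the previous paragraph. Hence $m=k+1$, so $(\mu_i,\mu_j)$ is a gap on the proper subtree $T_1$ of $T'$, contradicting the minimality of $T'$ and completing the proof.
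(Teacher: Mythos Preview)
Your proof is correct and reaches the same contradiction (the gap would descend to the child subtree $T_1$, violating minimality of $T'$), but the mechanics differ from the paper's argument. The paper assumes $\mu_i$ reaches the root $v'_z$ of the child subtree and then splits into two cases depending on whether $\mu_i$ enters $T'_z$ before or after $\mu_j$ leaves $T'_z$; in each case it invokes the isolation lemma (\Cref{le:isolatedTree}) to obtain a contradiction by exhibiting a second downward link at the root of $T'$ coexisting with the upward link created by $\mu_i$. You instead track $u.\pointer$ directly through the window $W$: from the two boundary snapshots and the observation that the cardinality $|u.\pointer|$ is invariant under \Cref{alg:findReceived}, you deduce $|u.\pointer|=1$, hence $u.\pointer=\{\mathrm{parent}(u)\}$ throughout $W$, so no message from a child can reach $u$ inside $W$. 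This immediately forces $\mu_j$ to be the very next message leaving $T_1$ after $\mu_i$ enters, i.e.\ $m=k+1$. Your route avoids both the case split and the appeal to \Cref{le:isolatedTree}; the paper's route is more structural and reusable (isolation is a standalone lemma). One small point worth making explicit in your write-up: the inequality $m\ge k+1$ is not a direct consequence of \Cref{le:timeLine} alone but also uses the edge-alternation on $(u,v^*)$ from \Cref{le:edgeStates} to guarantee that $\mu_j$ is sent from $v^*$ only after $\mu_i$ has arrived there; you use exactly this alternation later, so the ingredient is present, just cited a step early.
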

\begin{proof}
Let $v'$ denote the root of $T'$ and $T'_z$ be the child subtree of $T'$ that includes the leaf node $v_{\src}(\mu_j)$.
Further, let $v'_z$ be the root of $T'_z$.
For the sake of contradiction, assume that $\mu_i$ reaches $v'_z$.
First, suppose that $\mu_i$ enters $T'_z$ before the time when $\mu_j$ leaves $T'_z$.
Since $T'$ is the lowest subtree that has the gap $(\mu_i,\mu_j)$ and w.r.t. \Cref{le:timeLine}, we must have
the gaps $(\mu_i,\mu_p)$ and $(\mu_q,\mu_j)$ on $T'_z$ for some messages $\mu_p \in M^{\uparrow}(T'_z)$ and
$\mu_q \in M^{\downarrow}(T'_z)$.
Using \Cref{le:timeLine} the time when $\mu_p$ reaches $v'_z$ is earlier than the time $\mu_j$
reaches $v'_z$.
Hence, $\mu_p$ reaches $v'$ earlier than the time when $\mu_j$ reaches $v'$ since $\mu_j$ cannot overtake $\mu_p$ using \Cref{le:edgeStates}.
If $\mu_p$ leaves $T'$, then we cannot have the gap $(\mu_i,\mu_j)$ since $\mu_p$ makes the gap $(\mu_i,\mu_p)$ on $T'$.
Consider the case where $\mu_p$ does not leave $T'$. This implies that $\mu_p$ must find a downward link at $v'$
that points to a different child than $v'_z$
when it reaches $v'$ w.r.t. \Cref{le:uniqueEdge} while $v'$ has also an upward link since the time when $\mu_i$ enters $T'$.
Hence, the subtree $T'$ is isolated using \Cref{le:isolatedTree} and $\mu_j$ can never leave $T'$.
Therefore, in either case, the forest $\forest_{\gnn}$ cannot have $(\mu_i,\mu_j)$ as the gap on $T'$ that is a contradiction.

Now, assume that $\mu_i$ enters $T'_z$ after the time when $\mu_j$ leaves $T'_z$. Therefore, $\mu_i$ reaches $v'_z$
after the time when $\mu_j$ reaches $v'_z$.
Since only one message can be in transit on the edge $(v',v'_z)$ w.r.t. \Cref{le:edgeStates}, $\mu_j$
is forwarded to $(v',v'_z)$ by $v'_z$ earlier than the time when $\mu_i$ is forwarded in $(v',v'_z)$ by $v'$.
Therefore $\mu_j$ cannot reach $v'$ later than the time when $\mu_i$ reaches $v'$. On the other hand, it cannot reach $v'$
earlier than the time when $\mu_i$ reaches $v'$ since $(\mu_i,\mu_j)$ is a gap. Hence, they must reach $v'$ at the same time.
This implies that $v'$ must have a downward link using \Cref{le:uniqueEdge} that points to a different child than $v'_z$
and therefore $T'$ must be isolated w.r.t. \Cref{le:isolatedTree} since $v'$ has had also an upward link since $\mu_i$ enters $T'$.
This is a contradiction with the fact that $\mu_j$ must leave $T'$.
Consequently, the claim of the lemma holds.
\end{proof}

\begin{corollary}\label{co:gapLowest}
Consider the lowest subtree $T'$ of $T$ in which $\forest_{\gnn}$ has the gap $(\mu_i,\mu_j)$ on $T'$.
The size of the gap $(\mu_i,\mu_j)$ equals the diameter of $T'$.
\end{corollary}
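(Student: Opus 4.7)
The plan is to chain Lemma \ref{le:gapLowest} with the HST geometry. First I would invoke Lemma \ref{le:gapLowest} to deduce that $\mu_i$ avoids every node on the upward path from $v_{\src}(\mu_j)$ to the root of $T'$, the root itself excluded. Letting $T'_z$ denote the child subtree of $T'$ that contains $v_{\src}(\mu_j)$, its root $v'_z$ lies on that forbidden path, so $\mu_i$ never visits $v'_z$ and therefore never enters $T'_z$ at all.

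Next I would argue that $v_{\des}(\mu_i)$ is in fact a leaf of $T'$ sitting in some child subtree distinct from $T'_z$. Because $\mu_i \in M^{\downarrow}(T')$, the message enters $T'$ through the root; by Lemma \ref{le:uniquePath} together with Lemma \ref{le:acyclic}, its entire trajectory from source to terminating leaf is a simple directed path in the tree $T$. A simple path that enters $T'$ through the root cannot leave again through the same root without revisiting it, so the terminating leaf $v_{\des}(\mu_i)$ must lie inside $T'$. Combined with the previous step, $v_{\des}(\mu_i)$ belongs to a child subtree of $T'$ other than $T'_z$.

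Finally I would close the argument using the HST structure: in an $\alpha$-HST all leaves of $T'$ sit at the same distance from its root, and any two leaves in distinct child subtrees of $T'$ realize the maximum possible leaf-to-leaf distance, because the unique connecting path ascends to the root of $T'$ along one heaviest edge and descends along another. That maximum is precisely $\delta(T')$. Applying this observation to $v_{\des}(\mu_i)$ and $v_{\src}(\mu_j)$ yields $d_T\big(v_{\des}(\mu_i),v_{\src}(\mu_j)\big)=\delta(T')$, which by the definition at the start of this subsection is exactly the size of the gap $(\mu_i,\mu_j)$. The only step that requires any care beyond invoking Lemma \ref{le:gapLowest} is verifying that $v_{\des}(\mu_i)$ truly lives inside $T'$; everything else is routine HST bookkeeping.
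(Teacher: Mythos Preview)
Your argument is correct and follows the same route as the paper's proof, which simply observes that \Cref{le:gapLowest} forces $v_{\src}(\mu_j)$ and $v_{\des}(\mu_i)$ into distinct child subtrees of $T'$ and hence at distance $\delta(T')$. You are just more explicit than the paper about why $v_{\des}(\mu_i)$ actually lies inside $T'$, invoking the simple-path property of \gnn\ messages; the paper leaves this implicit.
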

\begin{proof}
\Cref{le:gapLowest} implies that $v_{\src}(\mu_j)$ and $v_{\des}(\mu_i)$ are in two different children subtrees of $T'$
and therefore the claim of the corollary holds.
\end{proof}

\begin{lemma}\label{le:gapWindow}
Let $T'$ be the lowest subtree of $T$ in which the message $\mu_z$ traverses one of the longest direct paths in $T'$.
Suppose $T'_i$ is the lowest subtree of $T'$ that has the gap $(\mu_i,\mu_z)$ for some message $\mu_i$ that enters $T'_i$.
The forest $\forest_{\gnn}$ has the gap $(\mu_i,\mu_z)$ on all subtrees of $T'$ that are rooted at the nodes
on the direct path between the root of $T'_i$ and either the root of $T'$---excluding the root of $T'$---or the root of
the lowest subtree $T'_j$ of $T'$---excluding the root of $T'_j$---that has the larger gap $(\mu_j,\mu_z)$
for some message $\mu_j$.
\end{lemma}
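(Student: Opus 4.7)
The plan is to proceed by induction up the chain of ancestor subtrees of $T'_i$ within $T'$. Let $T'_i = T''_0 \subsetneq T''_1 \subsetneq \cdots \subsetneq T''_L$ denote the sequence of subtrees rooted at the nodes on the specified direct path, stopping strictly below whichever of $T'$ or $T'_j$ is reached first. The base case $\ell = 0$ is exactly the hypothesis. For the inductive step, I would assume gap $(\mu_i,\mu_z)$ holds on $T''_{\ell-1}$ and show it holds on $T''_\ell$, where $T''_\ell$ is the parent subtree of $T''_{\ell-1}$.

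I would first verify that $\mu_z$ leaves $T''_\ell$. Since $T'$ is by definition the lowest subtree in which $\mu_z$ realizes one of its longest direct paths, $v_{\src}(\mu_z)$ and $v_{\des}(\mu_z)$ lie in distinct children of $T'$. As $T''_\ell \subsetneq T'$ is contained in a single child of $T'$ and already contains $v_{\src}(\mu_z) \in T'_i$, the destination $v_{\des}(\mu_z)$ lies outside $T''_\ell$, so $\mu_z \in M^\uparrow(T''_\ell)$.

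Next I would show that $\mu_i$ remains the immediate downward predecessor of $\mu_z$ at the root $v_\ell$ of $T''_\ell$ in the alternating sequence of Lemma \ref{le:timeLine}. Let $v_{\ell-1}$ be the root of $T''_{\ell-1}$, a child of $v_\ell$ in $T$. Lemma \ref{le:edgeStates} forces messages on the edge $(v_\ell,v_{\ell-1})$ to be sequential, so the relative order of any two messages that traverse this edge is preserved between $v_\ell$ and $v_{\ell-1}$; in particular the positions of $\mu_i$ (going down) and $\mu_z$ (going up) agree at the two roots. Suppose for contradiction some event $\mu^*$ is processed at $v_\ell$ strictly between $\mu_i$ and $\mu_z$. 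Either $\mu^*$ also crosses $(v_\ell,v_{\ell-1})$, in which case the order-preservation places it at $v_{\ell-1}$ between $\mu_i$ and $\mu_z$, contradicting the inductive hypothesis; or $\mu^*$ is confined to a different child subtree of $v_\ell$. In the latter case, chasing the alternation at $v_\ell$ yields a bona fide gap $(\mu_j,\mu_z)$ on $T''_\ell$ with $\mu_j \neq \mu_i$ and $v_{\src}(\mu_j) \notin T''_\ell$, so its size is at least the weight of a direct path crossing between two children of $T''_\ell$; by the HST property this equals $\delta(T''_\ell)$ which strictly exceeds $\delta(T'_i)$ and hence the size of $(\mu_i,\mu_z)$. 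But then $T'_j \subseteq T''_\ell$, contradicting $T''_\ell$ being strictly below $T'_j$. The remaining case $v_{\src}(\mu_i) \in T''_\ell \setminus T''_{\ell-1}$ (so $\mu_i \notin M^\downarrow(T''_\ell)$) is handled symmetrically: the actual partner of $\mu_z$ on $T''_\ell$ must then be some $\mu_j$ with source strictly outside $T''_\ell$, again producing a larger gap and the same contradiction.

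The main obstacle will be the careful accounting of message orderings at the two consecutive roots $v_{\ell-1}$ and $v_\ell$ in the asynchronous regime, where messages confined to children of $v_\ell$ other than $v_{\ell-1}$ never visit $v_{\ell-1}$ and so are not directly ruled out by the inductive hypothesis. The key conceptual point is that every such confined new event at $v_\ell$ has its other endpoint in a different child of $T''_\ell$, and the HST's exponential level spacing forces any alternative gap partner for $\mu_z$ on $T''_\ell$ to be geometrically larger than $(\mu_i,\mu_z)$, which is precisely the quantitative content needed to invoke the $T'_j$ stopping condition.
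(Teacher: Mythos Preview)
Your approach is essentially the paper's: both argue that if $(\mu_i,\mu_z)$ fails to be the gap on some intermediate subtree $T''$, then $\mu_z$ (which still leaves $T''$) must pair with a different $\mu_j\in M^{\downarrow}(T'')$, and this alternative gap is strictly larger---contradicting that $T''$ lies strictly below $T'_j$. The paper phrases this as a single contradiction at the first failure point while you unroll it as an induction, but the content is identical; one minor slip in your write-up: the size bound $\delta(T''_\ell)$ follows because $v_{\des}(\mu_j)$ (not $v_{\src}(\mu_j)$) lies in a different child of $T''_\ell$ than $v_{\src}(\mu_z)$, which is precisely what your non-crossing argument for the edge $(v_\ell,v_{\ell-1})$ actually establishes.
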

\begin{proof}
Let $v'$ denote the root of $T'$ and $v'_i$ be the root of $T'_i$.
Further, if there is not a larger gap than $(\mu_i,\mu_z)$, then let $v''$ be $v'$. Otherwise, let $v''$
be the root of $T'_j$. 
The subtree $T'_j$ must have a larger diameter than $T'_i$ w.r.t. \Cref{co:gapLowest}.
Hence, in either case, the subtree rooted at $v''$ has a larger height than $T'_i$.
Assume that $v''_p \neq v''$ is the closest node with $v'_i$ on the direct path between $v''$ and $v'_i$ in which
$(\mu_i,\mu_z)$ is not a gap on the subtree rooted at $v''_p$.
Let $T''_p$ denote the subtree of $T$ that is rooted at $v''_p$.
The message $\mu_z$ must leave $T''_p$ since it visits $v'$ and $T'$ has a larger height than $T''_p$.
Since $\mu_z$ is an actual find-predecessor message, $\mu_z$ must make a gap on $T''_p$ that is larger
than the gap $(\mu_i,\mu_z)$.
This contradicts the fact that $T'_j$ is the lowest subtree that has a larger gap than $(\mu_i,\mu_z)$.
Thus, the claim of the lemma holds.
\end{proof}

\para{Local predecessor:}
Any non-dummy request $r_{\src}(\mu')$ has a unique local predecessor if there is a gap $(\mu,\mu')$
for any message $\mu$ that makes a gap with $\mu'$. The request $r_{\des}(\mu)$ is the local predecessor of $r_{\src}(\mu')$ if $(\mu,\mu')$
be the smallest gap among all gaps $(\mu'',\mu')$ for any message $\mu''$ that makes a gap with the massage $\mu'$.
We may refer to $r_{\des}(\mu)$ as the \textbf{Intra-Component} local predecessor of $r_{\src}(\mu')$
if $(\mu',\mu)$ is an Intra-Component gap. Otherwise, it is called an \textbf{Inter-Component} local predecessor
of $r_{\src}(\mu')$ if $(\mu,\mu')$ is an Inter-Component gap.

For instance, in \Cref{fig:complication}, the request $r^1_0$ is the Intra-Component local predecessor
of $r_b$ and the request $r_b$ is the Inter-Component local predecessor of $r_c$.
We can replace the edge $(r_a,r_b)$ with the new edge $(r^1_0, r_b)$ to satisfy the
Intra-Component property. Further, we can replace the edge $(r^2_0,r_c)$ with the new
edge $(r_b,r_c)$ to satisfy the Inter-Component property.

\begin{remark}\label{re:localPred}
Consider the Intra-Component local predecessor $r_{\des}(\mu)$ of request $r_{\src}(\mu')$. The request $r_{\des}(\mu)$
is indeed scheduled before the request $r_{\src}(\mu')$ in the global schedule. The reason is as follows. Let $T'$ be the lowest subtree of $T$
in which $\forest_{\gnn}$ has the gap $(\mu,\mu')$ on $T'$. Consider the directed path
that is constructed by \gnn\ in the directed graph $H$---the directed version of $T$---during moving $\mu$ from the node of the
successor request of $r_{\des}(\mu)$---that is, $r_{\src}(\mu)$---to the root of $T'$---it continues until $v_{\des}(\mu)$.
This directed path can be deflected towards some successor request of $r_{\src}(\mu)$ in the meantime when $\mu'$ leaves $T'$
until it finds the node of the predecessor request---that is, $r_{\des}(\mu')$. Since $(\mu,\mu')$ is an Intra-Component gap, then $r_{\des}(\mu')$
must be either $r_{\src}(\mu)$ or some request that is scheduled after $r_{\src}(\mu)$.
\end{remark}

\para{Transformation:} Now, we are ready to describe our transformation on $\forest_{\gnn}$ using some modifications.
In each modification, for every request $r$ that has a local predecessor,
we replace the edge in $\forest_{\gnn}$ whose endpoints are $r$ and its actual predecessor with
an edge in ${\R \choose 2}$ whose endpoints are $r$ and its local predecessor.

\begin{figure}[H]
  \center	
  \includegraphics[width=0.2\textwidth]{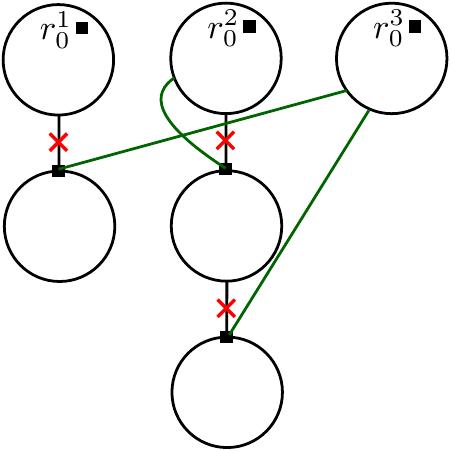}
   \caption{Transforming $\forest_{\gnn}$: The requests are shown with squares.
  There are initially $3$ servers and therefore $3$ dummy requests.
  Hence, the resulted forest by \gnn\ consists of $3$ TSP paths that are denoted in the figure by circles and
  black edges. Each circle is a sub-component of the corresponding TSP path. The black edges are
  replaced with the green edges during the transformation.}
  \label{fig:modification}
\end{figure}

\para{Gap closing:}
A gap $(\mu_i,\mu_j)$ is closed when the edge $e(\mu_j)$ is replaced with the new edge $\big(r_{\des}(\mu_i),r_{\src}(\mu_j)\big)$
in a modification. 

\subsubsection{The Locality-Based Structure of the New Resulted Forest}
\label{sec:mdfGreedy}

In this section, we show that $\forest_{\gnn}$ is transformed into an $\R_D$-respecting spanning $k$-forest of $B$ that satisfies the Property
\ref{pr:ForestIntracomponent} and Property \ref{pr:ForestIntercomponent} after all modifications.
We recall that $\forest_{mdf}$ is the forest that is resulted after the transformation of $\forest_{\gnn}$ that consists of
all possible replacements.

\para{Constructing $\forest_{mdf}$ in a button-up approach:}
We consider all subtrees of the HST $T$ and close the gaps in a button-up approach on each subtree.
We consider $h+1$ steps where $h$ is the height of $T$.
The leaves of $T$ are supposed to be subtrees of $T$ with height $0$.
Assume that the gaps on all subtrees with height less than $i$ are already closed in steps $1,\ldots,i$ for any $i \geq 0$.
Now, we consider any subtree $T'$ of $T$ with height $i$ in the $(i+1)$-th step and we close all gaps on $T'$.
Let $\forest^{i+1}_{mdf}$ denote the result of the transformation at the end of $(i+1)$-th step and
therefore, $\forest_{mdf}=\forest^{h+1}_{mdf}$. Let $\forest^{0}_{mdf}:=\forest_{\gnn}$.

\begin{remark}\label{re:gapclosingLS}
This button-up approach for constructing $\forest_{mdf}$ implies that when a gap is closed in the $(i+1)$-th step on $T'$,
$T'$ is the lowest subtree of $T$ in which $\forest_{\gnn}$ has the gap on $T'$ as otherwise the gap must have been
closed in the previous steps. 
\end{remark}

\begin{corollary}\label{co:gapclosingNewEdge}
Using \Cref{re:gapclosingLS} and \Cref{co:gapLowest}, a new edge that is added as a result of closing any
gap on $T'$ in the $(i+1)$-th step, the new edge connects two requests in two different children subtrees
of $T'$.
\end{corollary}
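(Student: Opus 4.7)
The plan is to chain the two cited results together and then make one simple observation about distances in an HST. First I would invoke Remark~\ref{re:gapclosingLS}: since the gap $(\mu_i,\mu_j)$ is closed precisely in the $(i+1)$-th step on $T'$, the bottom-up schedule guarantees that $T'$ is the \emph{lowest} subtree on which $\forest_{\gnn}$ exhibits this gap (otherwise it would already have been processed at an earlier step on a strict subtree of $T'$). This puts us exactly in the hypothesis of Corollary~\ref{co:gapLowest}.

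Next I would apply Corollary~\ref{co:gapLowest} to conclude that the size of $(\mu_i,\mu_j)$, which by definition is $d_T\!\left(v_{\des}(\mu_i),v_{\src}(\mu_j)\right)$, is equal to the diameter $\delta(T')$. The key HST observation is then that in an $\alpha$-HST with $\alpha>1$, the diameter of $T'$ is realized only by pairs of leaves lying in two distinct child subtrees of the root of $T'$: any pair of leaves contained in the same child subtree $T'_c$ has distance at most $\delta(T'_c)$, which is strictly smaller than $\delta(T')$ because the parent edge into $T'_c$ is longer by a factor $\alpha$. Hence $v_{\des}(\mu_i)$ and $v_{\src}(\mu_j)$ must lie in two different children subtrees of $T'$.

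Since the new edge added during the closing of the gap is precisely $\big(r_{\des}(\mu_i),r_{\src}(\mu_j)\big)$, and its two endpoints are hosted at leaves in distinct children subtrees of $T'$, the claim follows. The only step that is not completely mechanical is the HST-diameter argument in the middle paragraph; everything else is a direct application of the two referenced statements, so I do not anticipate a substantive obstacle.
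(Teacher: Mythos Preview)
Your proposal is correct and follows essentially the route the paper intends: the corollary is stated without proof in the paper, and your chaining of \Cref{re:gapclosingLS} with \Cref{co:gapLowest} plus the HST-diameter observation is exactly how the implicit argument unfolds. One minor shortcut you could take: the proof of \Cref{co:gapLowest} already records (via \Cref{le:gapLowest}) that $v_{\des}(\mu_i)$ and $v_{\src}(\mu_j)$ lie in different child subtrees of $T'$, so you could cite that directly and skip the diameter argument altogether.
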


\begin{corollary}\label{co:gapclosingOriginalEdges}
\Cref{co:gapclosingNewEdge} implies that closing the gaps on $T'$ in the $(i+1)$-th step does not change the structure of $\forest^{i}_{mdf}(T'')$
for any subtree $T''$ of $T'$. Formally, $\forest^{i+1}_{mdf}(T'')=\forest^{i}_{mdf}(T'')$ where $T''$ has height less than $i$.
\end{corollary}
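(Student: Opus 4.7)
The plan is to verify the claim by analyzing separately the two types of edge changes that occur when closing a gap on $T'$ at step $i+1$: edges that are removed and edges that are added. I will show that neither type of change can affect any edge with both endpoints in a proper subtree $T''$ of $T'$. Note first that since $T'$ has height exactly $i$, any subtree $T''$ of $T'$ with height strictly less than $i$ is a proper subtree of $T'$ and is therefore contained entirely within a single child subtree of $T'$.

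For the newly added edges, I would invoke \Cref{co:gapclosingNewEdge} directly: any edge added when closing a gap on $T'$ at step $i+1$ joins two requests residing in two \emph{distinct} children subtrees of $T'$. In particular, no such edge lies entirely within a single child subtree of $T'$, and therefore none can be contained in $T''$. For the removed edges, the argument is equally short: when the gap $(\mu_i, \mu_j)$ on $T'$ is closed, the edge removed from the current forest is $e(\mu_j) = \big(r_{\src}(\mu_j), r_{\des}(\mu_j)\big)$. By the definition of $M^{\uparrow}(T')$, the message $\mu_j$ leaves $T'$, which means its destination node $v_{\des}(\mu_j)$ lies outside $T'$, and hence outside $T''$ as well. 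Thus the removed edge has at most one endpoint in $T''$, and cannot belong to the induced subgraph $\forest^{i}_{mdf}(T'')$.

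Combining the two observations, every edge whose both endpoints lie in $T''$ is preserved by the modifications at step $i+1$, and no new edge with both endpoints in $T''$ is created; hence the induced subgraph on the requests in $T''$ is unchanged, i.e., $\forest^{i+1}_{mdf}(T'') = \forest^{i}_{mdf}(T'')$. I do not anticipate any real obstacle: the corollary is essentially unpacking the statement of \Cref{co:gapclosingNewEdge} together with the definition of the gap-closing operation and the definition of $M^{\uparrow}(T')$. The only point requiring mild care is making sure all edges removed at step $i+1$ are indeed of the form $e(\mu_j)$ for some $\mu_j \in M^{\uparrow}(T')$, which is precisely how gap closing was defined.
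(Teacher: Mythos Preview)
Your proposal is correct and matches the paper's reasoning. The paper states this corollary without an explicit proof, simply asserting it follows from \Cref{co:gapclosingNewEdge}; you have spelled out the two halves (added edges via \Cref{co:gapclosingNewEdge}, removed edges via the fact that any $\mu_j\in M^{\uparrow}(T')$ has its destination outside $T'$ since a message never revisits a node), which is exactly the intended argument.
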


\begin{remark}\label{re:gapclosingUPMsg}
As a result of closing all gaps on $T'$, the edge $e\big(\mu^{\uparrow}_p(T')\big)$ is removed
for any message $\mu^{\uparrow}_p(T') \in M^{\uparrow}(T')$ where $p > 1$.
If $T'$ does not have any dummy request, then $\mu^{\uparrow}_1(T')$ must be an actual message.
However, since $\mu^{\uparrow}_1(T')$ cannot make a gap with any other message in $M^{\downarrow}(T')$ w.r.t. \Cref{le:timeLine}
and the definition of a gap, then $e\big(\mu^{\uparrow}_1(T')\big)$ is not removed.
\end{remark}

\begin{lemma}\label{le:transformationInduction}
Consider any subtree $T'$ of $T$ with height $i$ for any $i \geq 0$ such that $\forest_{\gnn}(T')$ is not empty.
Let $v$ denote the root of $T'$.
Assume that $T'$ includes $m\geq0$ dummy requests. We claim that
\begin{enumerate}[1.]
\item \label{cl:transformationUnchanged} $\forest^{i+1}_{mdf}(T'')=\forest^{i}_{mdf}(T'')$ for any subtree $T''$ of $T'$ that has a smaller height than $T'$.
\item \label{cl:transformationWithDummy} $\forest^{i+1}_{mdf}(T')$ has $\max\{1,m\}$ components in which any of these $m$ components is a connected
tree that includes at most one dummy request.
\end{enumerate}
\end{lemma}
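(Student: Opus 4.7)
The plan is to proceed by induction on $i$, the height of $T'$. Claim \ref{cl:transformationUnchanged} is exactly the statement of Corollary \ref{co:gapclosingOriginalEdges}, so I focus on Claim \ref{cl:transformationWithDummy}. For the base case $i=0$, a single leaf hosts at most one server, so $m\in\{0,1\}$ and $\max\{1,m\}=1$; by the self-loop rule of Algorithm \ref{alg:newReq}, all requests invoked at a leaf are scheduled as a single chain (beginning at the dummy if one is present), and no modification acts inside a leaf. Thus $\forest^{1}_{mdf}(T')$ is one tree with at most one dummy.

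For the inductive step, fix $T'$ of height $i$ with children $T'_{1},\dots,T'_{d}$, and let $m_{j}$ denote the number of dummies in $T'_{j}$. By induction each $\forest^{i}_{mdf}(T'_{j})$ has $\max\{1,m_{j}\}$ components, each a tree with at most one dummy, and by Claim \ref{cl:transformationUnchanged} applied at the current height these pieces survive unchanged into $\forest^{i+1}_{mdf}(T'_{j})$. Consequently $\forest^{i+1}_{mdf}(T')$ decomposes into (i) the within-child forests, (ii) the cross-child edges of $\forest_{\gnn}(T')$ that were not rewired at any lower height, and (iii) the new edges added in step $i+1$ when the gaps whose lowest subtree is $T'$ are closed. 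By Corollary \ref{co:gapclosingNewEdge}, every edge of type (iii) has its endpoints in two distinct children of $T'$.

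It then remains to show (a) that adding the type-(iii) edges creates no cycles, and (b) that the resulting forest has exactly $\max\{1,m\}$ components, each containing at most one dummy. For (a), each closed gap $(\mu,\mu')$ removes an edge $e(\mu')$ that exits $T'$ and adds $\bigl(r_{\des}(\mu),r_{\src}(\mu')\bigr)$ spanning two different children (Lemma \ref{le:gapLowest}); since $e(\mu')$ is not part of $\forest^{i}_{mdf}(T')$, the new edge can only merge two previously disjoint components, never forming a cycle. For (b), Lemma \ref{le:timeLine} and Corollary \ref{co:timeLine} control $|M^{\uparrow}(T')|$ against $|M^{\downarrow}(T')|$, and Remark \ref{re:gapclosingUPMsg} pinpoints which outgoing edges of $T'$ are rewired. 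A careful count based on these, together with the identity $m=\sum_{j}m_{j}$, yields that the total number of cross-child edges in (ii)-(iii) equals $\bigl(\sum_{j}\max\{1,m_{j}\}\bigr)-\max\{1,m\}$, forcing the final component count to be $\max\{1,m\}$. For the at-most-one-dummy condition, each type-(iii) replacement simply reassigns a sub-chain from one TSP path of $\forest_{\gnn}$ to another (using Lemma \ref{le:queuDirectPath}, which identifies $\forest_{\gnn}$ as $k$ disjoint TSP paths rooted at the $k$ distinct dummies); combined with the inductive invariant that each child component carries at most one dummy, this ensures no merge ever produces a component with two dummies.

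The main obstacle will be the counting identity in step (b), because it requires converting the qualitative timeline and gap-window information into an exact equation relating $|M^{\uparrow}(T')|$ to the drop in the component count. Lemma \ref{le:gapWindow} is central here, since it guarantees that every outgoing message $\mu'$ has its smallest gap closed at a unique subtree, preventing double-counting of type-(iii) edges across the bottom-up sweep and thereby making the accounting of rewirings at $T'$ well-defined.
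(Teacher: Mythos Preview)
Your overall architecture (induction on height, Claim~\ref{cl:transformationUnchanged} via Corollary~\ref{co:gapclosingOriginalEdges}, base case via the self-loop rule) matches the paper. But the inductive step has a genuine gap in part~(a), and this gap propagates to~(b).

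For acyclicity you write: ``since $e(\mu')$ is not part of $\forest^{i}_{mdf}(T')$, the new edge can only merge two previously disjoint components.'' This is a non-sequitur. The edge $e(\mu')$ has one endpoint \emph{outside} $T'$, so it was never in $\forest^{i}_{mdf}(T')$ to begin with; its removal therefore disconnects nothing inside $T'$. Consequently, when you insert the new edge $\bigl(r_{\des}(\mu),r_{\src}(\mu')\bigr)$ between two children of $T'$, you have done nothing to preclude its two endpoints from already lying in the same component of the partially built $\forest^{i+1}_{mdf}(T')$ (via surviving type-(ii) edges or earlier type-(iii) edges). The assertion that it ``only merges'' is precisely the statement to be proved. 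The paper handles this by a completely different argument: assume a cycle exists among a set $J$ of child subtrees, note that no $T'_z\in J$ contains a dummy (using Remark~\ref{re:gapclosingUPMsg}), take the \emph{first} message among all $\mu^{\uparrow}_1(T'_z)$ for $T'_z\in J$ to be processed at $v$, and show via Lemma~\ref{le:timeLine} that this message is necessarily routed (or rewired by a gap-closing) to a child subtree outside $J$, breaking the cycle.

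Because your component-count formula in~(b) presupposes acyclicity (you need \#components $=$ \#super-vertices $-$ \#cross-edges), the gap in~(a) blocks~(b) as well. Independently, the counting identity $\bigl(\sum_j \max\{1,m_j\}\bigr)-\max\{1,m\}$ is asserted but not derived; it does not drop out of Corollary~\ref{co:timeLine} and Remark~\ref{re:gapclosingUPMsg} alone, because you must also account for the type-(ii) edges (original \gnn\ cross-child edges whose outgoing message happens to be $\mu^{\uparrow}_1$ of its child and hence is not rewired). The paper sidesteps the counting entirely: it bounds the number of components from below via the at-most-one-dummy property, and from above by a direct contradiction (any dummy-free component must send a first message out of $T'$, and closing the resulting gap attaches it to something outside itself).

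Finally, your at-most-one-dummy argument via Lemma~\ref{le:queuDirectPath} is too loose: after modification we no longer have the $k$ TSP paths of $\forest_{\gnn}$, so the ``reassigning a sub-chain'' picture does not directly apply to $\forest^{i+1}_{mdf}$. The paper's argument is local: if two children $T'_w,T'_z$ each contain a dummy, then every actual outgoing message $\mu^{\uparrow}_p(T'_w)$, $p\geq 2$, has already been rewired at a lower height (Remark~\ref{re:gapclosingUPMsg}) and $\mu^{\uparrow}_1$ is virtual, so no edge---original or new---can join components containing dummies in distinct children.
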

\begin{proof}
The Claim \ref{cl:transformationUnchanged} is true w.r.t. \Cref{co:gapclosingOriginalEdges} for all $i \geq 0$.
Let $i=0$ and hence consider any subtree $T'$ of $T$ with height $0$.
The subtree $T'$ is actually a leaf node.
The subtree $T'$ either does not host any dummy request or it hosts exactly one dummy request w.r.t. the \dsms\ problem definition.
From a theoretical point of view, a leaf node can invoke more than one request at the same time in a one-shot execution.
However, all requests are scheduled consecutively in a one-shot execution w.r.t. the description of \gnn.
Hence, $T'$ does not have any gap since there is only one message in $M^{\uparrow}(T')$.
Therefore, all requests on $u$ are connected as a TSP path that is a sub-component of some component of $\forest_{\gnn}$.
This TSP path remains unchanged at the end of the first step since $T'$ does not have any gap.
In fact, $\forest^{1}_{mdf}(T')=\forest_{\gnn}(T')$.
The tail of this TSP path is $r_{\src}\big(\mu^{\uparrow}_1(T')\big)$ that is either a dummy request if $T'$ hosts one dummy request
or is an actual request if $T'$ does not host any dummy request. Consequently, the Claim \ref{cl:transformationWithDummy}
holds for any leaf node. 

Assume that $i\geq1$ steps have been done and the Claim \ref{cl:transformationWithDummy} now holds for any subtree with
height less than $i$. Consider the $(i+1)$-th step and any subtree $T'$ of $T$ with height $i$ that includes $m \geq 0$ dummy requests
such that $\forest_{\gnn}(T')$ is not empty.
Some of the gaps on $T'$ might have been already closed in the previous steps.
We show that the Claim \ref{cl:transformationWithDummy} holds for $T'$ at the end of $(i+1)$-th step since a) any component
of $\forest^{i+1}_{mdf}(T')$ includes at most one dummy request, b) any component of $\forest^{i+1}_{mdf}(T')$ is a tree, and
c) $\forest^{i+1}_{mdf}(T')$ has $\max\{1,m\}$ components. 

\begin{enumerate}[a.]
\item Any component of $\forest^{i+1}_{mdf}(T')$ includes at most one dummy request:
Using \Cref{co:gapclosingOriginalEdges} and w.r.t. our assumption of the induction hypothesis,
any component of $\forest^{i+1}_{mdf}(T'_z)$ for any child subtree $T'_z$ of $T'$ includes
at most one dummy request. Therefore, w.r.t. \Cref{co:gapclosingNewEdge} we need to show that there is not any edge in $\forest^{i+1}_{mdf}(T')$
that connects one component of $\forest^{i+1}_{mdf}(T'_z)$ that includes a dummy request
and another component of $\forest^{i+1}_{mdf}(T'_w)$ that also includes a dummy request
where $T'_z$ and $T'_w$ are children subtrees of $T'$.
The subtrees $T'_z$ and $T'_w$ must include dummy requests.
Since $T'_w$ and $T'_z$ include dummy requests and have height $i-1$, all edges $e\big(\mu^{\uparrow}_p(T'_w)\big)$
and $e\big(\mu^{\uparrow}_p(T'_z)\big)$ for $p \geq 2$ are removed in steps $1,\dots,i$ w.r.t. \Cref{re:gapclosingUPMsg}
and the messages $\mu^{\uparrow}_1(T'_w)$ and $\mu^{\uparrow}_1(T'_z)$ are both virtual messages w.r.t. the definition
of $M^{\uparrow}(T'')$ for any subtree $T''$ of $T$. This implies that: First, any original edge---an edge in $\forest_{\gnn}$---that connects
two requests in $T'_w$ and $T'_z$ has been removed in previous steps. Second, there is not any open gap on $T'$ that is
made with a message in $M^{\uparrow}(T'_z)$ or in $M^{\uparrow}(T'_w)$. Thus, closing gaps on $T'$ in the $(i+1)$-th step
cannot add a new edge between $T'_w$ and $T'_z$.

\item Any component of $\forest^{i+1}_{mdf}(T')$ is a tree:
Using \Cref{co:gapclosingOriginalEdges} and w.r.t. our assumption of the induction hypothesis,
any component of $\forest^{i+1}_{mdf}(T'_z)$ for any child subtree $T'_z$ of $T'$ is a tree.
Therefore, w.r.t. \Cref{co:gapclosingNewEdge} we need to show that there is not any cycle that consists of some edges
in $\forest^{i+1}_{mdf}(T')$ such that every edge of the cycle connects two children subtrees of $T'$. For the sake of
contradiction, assume that there is such a cycle. We see the ``nodes'' of the cycle as a subset of children subtrees of $T'$.
Any child subtree of $T'$ that includes a dummy request, cannot be a node of the cycle since all edges corresponding with
the message that leave the subtree with height $i-1$ have been removed in previous steps w.r.t. \Cref{re:gapclosingUPMsg}.
Hence, any node of the cycle must be a child subtree of $T'$ that does not include any dummy request.
Let $J$ denote the set of those children subtrees of $T'$ that are seen as the nodes of the cycle.
Consider the first find-predecessor message $\mu$ that is processed by $v$---that is, the root of $T$---among all messages
that leaves the children subtrees in $J$.
The message $\mu$ either leaves $T'$ or finds a downward link on $v$.
In the latter case, the downward link does not point to any request that is in a subtree in $J$ since $\mu^{\uparrow}_1(T'_z)$
is an actual message for any $T'_z$ in $J$ and $\mu$ is the first message among all messages in $\cup_{\forall T'_z \in J} M^{\uparrow}(T'_z)$
that is processed by $v$, and therefore a message cannot enter any subtree in $J$ before $\mu$ reaches $v$ \Cref{le:timeLine}.
Therefore, the request $r_{\des}(\mu)$ is in a child subtree of $T'$ that is not in $J$ and consequently
the cycle is broken that is a contradiction.
Consider the former case where $\mu$ leaves $T'$.
Hence, using \Cref{le:timeLine} and w.r.t. the definition of a gap the message $\mu$ must make a gap $(\mu',\mu)$ on $T'$
for some message $\mu'$ that enters $T'$.
The gap $(\mu',\mu)$ is closed in $(i+1)$-th step.
Since $\mu$ is the first message among all messages in
$\cup_{\forall T'_z \in J} M^{\uparrow}(T_z')$ that reaches the root of $T'$ and processed by the root of $T'$, and
the time when $\mu$ reaches the root of $T'$ is not earlier than the time $\mu'$ reaches the root of $T'$ because
$(\mu',\mu)$ is a gap on $T'$, then $\mu'$ cannot enter any subtree in $J$.
As a result of closing the gap $(\mu',\mu)$ a new edge is added w.r.t. our transformation described in \Cref{sec:modification}.
The new edge must connect two requests $r_{\src}(\mu)$ and $r_{\des}(\mu')$ in two different children subtrees of $T'$
using \Cref{co:gapclosingNewEdge}.
Since $\mu'$ does not enter any subtree in $J$, the new edge must connect a subtree in $J$ with a child subtree of $T'$
that is not in $J$.
We again get a contradiction since the cycle is broken. Consequently, there is not such a cycle and any component
of $\forest^{i+1}_{mdf}(T')$ is indeed a tree.

\item $\forest^{i+1}_{mdf}(T')$ has exactly $\max\{1,m\}$ components:
If $\forest_{\gnn}(T')$ is not empty, $\forest^{i+1}_{mdf}(T')$ cannot have less than $\max\{1,m\}$ components
w.r.t. the fact---it is already proved---that any component of $\forest^{i+1}_{mdf}(T')$ includes at most one dummy request.
By contrast, assume that $\forest^{i+1}_{mdf}(T')$ has more than $\max\{1,m\}$ components for the sake of contradiction.
Therefore, there are at least two components in $\forest^{i+1}_{mdf}(T')$ and at least one of the components in $\forest^{i+1}_{mdf}(T')$
does not include any dummy request since $T'$ includes $m$ dummy requests. Consider any component $F^z$ of $\forest^{i+1}_{mdf}(T')$
that does not include any dummy request. The first message among all messages corresponding with the requests
in $F^z$ that is processed by $v$, must leave $T'$ as otherwise the message is forwarded
inside $T'$ towards some request that is not in $F^z$ w.r.t. \Cref{le:timeLine} and therefore $F^z$ must
also include some other request in $T'$. Let $F^z$ be the component of $\forest^{i+1}_{mdf}(T')$ that does not include
any dummy request such that the first message $\mu$ corresponding with any request included in $F^z$ leaves $T'$ after
$\mu^{\uparrow}_1(T')$. The message $\mu$ must make a gap $(\mu',\mu)$ on $T'$ w.r.t. \Cref{le:timeLine} and the definition
of a gap for some message $\mu'$ that enters $T'$. However, the gap $(\mu',\mu)$ is closed in the $(i+1)$-th step
and therefore the local predecessor $r_{\des}(\mu')$ must be in $F^z$. This is a contradiction, since $\mu$ is the first message
that leaves $T'$ among all messages corresponding with requests in $F^z$ and therefore $r_{\des}(\mu')$ cannot be in $F^z$
\Cref{le:timeLine}.
\end{enumerate}
\end{proof}

\begin{proof}[\textbf{Proof of \Cref{le:transformation}}]
The Claim \ref{cl:transformationWithDummy} of \Cref{le:transformationInduction} implies that at the end of $(h+1)$-th step of
our button-up constructing of $\forest_{mdf}$, $\forest^{h+1}_{mdf}(T)=\forest_{mdf}$ is an $\R_D$-respecting spanning $k$-forest of $B$
since the HST $T$ includes $k \geq 1$ dummy requests.

Further, the Property \ref{pr:ForestIntracomponent} and Property \ref{pr:ForestIntercomponent} are guaranteed on any subtree $T'$ of $T$ with height $i \geq 0$
using the Claim \ref{cl:transformationWithDummy} of \Cref{le:transformationInduction} at the end of $(i+1)$-th step. The two properties are not also violated
later in the $j$-th step where $j > i+1$ w.r.t. the Claim \ref{cl:transformationUnchanged} of \Cref{le:transformationInduction}.
\end{proof}

\subsubsection{Total Cost of GNN: An Upper Bound}
\label{sec:gnnUB}

In this section, we show that the total weight of $\forest_{mdf}$ is at least the total cost of $\forest_{\gnn}$. Formally, we want to show that

\begin{equation}\label{eq:GNNLatencyUB}
	\Latency_{\gnn}(\forest_{\gnn}) \leq \Weight_T(\forest_{mdf}).
\end{equation}

The latency of a message $\mu$ from $u$ to $v$ on $T$ is denoted by $\latency_{\gnn}(\mu,u,v)$. Hence,
$\latency_{\gnn}\big(e(\mu)\big):=\latency_{\gnn}(\mu)=\latency_{\gnn}(\mu,v_{\src}(\mu),v_{\des}(\mu))$.
Let $(\mu,\mu')$ be the smallest gap among all gaps $(\mu'',\mu')$ for any message $\mu''$ that makes a gap with $\mu'$.
Therefore, $e(\mu')$ is removed and replaced with a new edge that connects two requests $r_{\des}(\mu)$ and $r_{\src}(\mu')$.
Since we want to show that \eqref{eq:GNNLatencyUB} holds, we need to upper bound the latency of $\mu'$---that is, $\latency_{\gnn}(\mu')$---with
the weight of the new edge $e^{new}=\big(r_{\des}(\mu),r_{\src}(\mu')\big)$. However, the $\latency_{\gnn}(\mu')$ can be larger
than $\weight_T(e^{new})$. By contrast, the following lemma shows that the latency of $\mu$ is upper bounded by $\weight_T(e^{new})$.
This lemma gives us the go-ahead to show that $\weight_T(e^{new})$ can be seen as an ``amortized'' upper bound for
$\latency_{\gnn}(\mu')$. 

\begin{lemma}\label{le:greedyNature}
Consider any subtree $T'$ of $T$. Assume that $\forest_{\gnn}$ has any gap $(\mu,\mu')$ on $T'$.
The latency of the message $\mu$ is upper bounded by the diameter of $T'$. Formally, 
\[
	\latency_{\gnn}(\mu) \leq \delta(T').
\] 
\end{lemma}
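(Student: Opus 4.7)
The plan is to exploit the one-shot assumption: all requests, hence all find-predecessor messages, are generated at time $0$. Combined with the alternating timeline from \Cref{le:timeLine}, this forces $\mu$ to reach the root of $T'$ quickly and bounds its total latency by $\delta(T')$.

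First, I note that in the gap $(\mu,\mu')=(\mu^{\downarrow}_i(T'),\mu^{\uparrow}_{i+1}(T'))$ the index satisfies $i\geq 1$, so $\mu'$ has index $i+1\geq 2$ in $M^{\uparrow}(T')$. In particular, $\mu'$ is not the (possibly virtual) message $\mu^{\uparrow}_1(T')$ but an actual find-predecessor message sent at time $0$ from its requester $v_{\src}(\mu')\in T'$. By \Cref{le:uniquePath}, $\mu'$ follows the unique direct path from $v_{\src}(\mu')$ to the root of $T'$, which lies entirely inside $T'$; in an $\alpha$-HST the distance from any leaf to the root of a subtree equals exactly half of that subtree's diameter. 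Since each per-edge latency is upper bounded by the corresponding edge weight, the time at which $\mu'$ reaches the root of $T'$ satisfies $t^{\uparrow}_{i+1}(T')\leq \delta(T')/2$. Applying \Cref{le:timeLine} to the same gap gives $t^{\downarrow}_i(T')\leq t^{\uparrow}_{i+1}(T')\leq \delta(T')/2$; since $\mu$ was also sent at time $0$, its latency from $v_{\src}(\mu)$ to the root of $T'$ equals $t^{\downarrow}_i(T')$ and so is at most $\delta(T')/2$.

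It remains to bound the latency of $\mu$ after it enters $T'$. Once $\mu$ is processed at the root of $T'$, it is forwarded along a downward link into a child subtree. The acyclicity of $H$ (\Cref{le:acyclic}) together with the no-revisit property established in the proof of \Cref{th:scheduleGuarantee} prevents $\mu$ from climbing back above the root of $T'$, so the remainder of $\mu$'s route is a direct downward path ending at some leaf $v_{\des}(\mu)\in T'$ whose total weight is at most $\delta(T')/2$. Summing the two parts yields $\latency_{\gnn}(\mu)\leq \delta(T')/2+\delta(T')/2=\delta(T')$. The main delicate point is the very first step---guaranteeing that $\mu'$ is genuine rather than virtual---and this is immediate from the index condition $i\geq 1$ built into the definition of a gap.
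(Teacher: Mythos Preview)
Your proof is correct and follows essentially the same approach as the paper: both split $\latency_{\gnn}(\mu)$ at the root $v$ of $T'$, use the one-shot assumption together with \Cref{le:timeLine} to bound the pre-$v$ latency of $\mu$ by the pre-$v$ latency of $\mu'$ (which is at most $d_T(v_{\src}(\mu'),v)=\delta(T')/2$), and bound the post-$v$ latency by the root-to-leaf distance $d_T(v,v_{\des}(\mu))=\delta(T')/2$. The paper packages this as the inequality $\latency_{\gnn}(\mu,v_{\src}(\mu),v)\leq \latency_{\gnn}(\mu',v_{\src}(\mu'),v)$ followed by replacing latencies with edge weights, which is exactly your argument in slightly more compressed notation.
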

\begin{proof}
Let $v$ denote the root of $T'$.
Using \Cref{le:uniquePath}, a find-predecessor message always
finds the node of its predecessor using a direct path constructed by \gnn. Using \Cref{le:timeLine},
the message $\mu'$ reaches $v$ not earlier than the time when $\mu$ reaches $v$. Therefore, 
\begin{equation}\label{eq:localSucc}
	\latency_{\gnn}\big(\mu,v_{\src}(\mu),v\big) \leq \latency_{\gnn}\big(\mu',v_{\src}(\mu'),v\big).
\end{equation}
On the other hand we can have:
  \begin{eqnarray*}
    \latency_{\gnn}(\mu)
    & = &
          \latency_{\gnn}\big(\mu,v_{\src}(\mu),v\big) + \Big(\latency_{\gnn}(\mu) -
          \latency_{\gnn}\big(\mu,v_{\src}(\mu),v\big)\Big)\\
    & \stackrel{\eqref{eq:localSucc}}{\leq} &
             \latency_{\gnn}\big(\mu',v_{\src}(\mu'),v\big) + \Big(\latency_{\gnn}(\mu) -
          \latency_{\gnn}\big(\mu,v_{\src}(\mu),v\big)\Big)\\
    & \leq &
             d_T\big(v_{\src}(\mu'),v\big) + d_T\big(v,v_{\des}(\mu)\big)\\
    & = &
    	  \delta(T').
  \end{eqnarray*}
  The second inequality follows because the latency of an edge is at most the weight of the edge (see \Cref{sec:communicationModel}).
\end{proof}

\para{Amortized analysis:}
In the overview of our amortization that has been provided in \Cref{sec:gnnHST}
using the simple case, we use the potential $\weight_T(e^{pot}_i) - \latency_{\gnn}(e^{pot}_i)$
to take $\weight_T(e^{new}_i)$ into account as an amortized upper bound of $\latency_{\gnn}\big(\mu(e^{old}_i)\big)$. In general, the same approach can be used
to show that \eqref{eq:GNNLatencyUB} holds with a more complicated analysis. The complication appears in two directions: 1) when $|E^{old}(e)| >1$
for any edge $e \in E^{pot}$ or $|E^{pot}(e)| >1$ for any edge $e \in E^{old}$. Further, 2) the sets $E^{old}$ and $E^{pot}$ can share some edges that create
a \textbf{dependency graph} between the edges in $E^{old}$ and $E^{pot}$. Because of the latter situation, we cannot replace an edge
$e \in E^{pot} \cap E^{old}$ before replacing the edges in $E^{old}(e)$ since we require the potential of the edge $e$ for amortizing
the edges in $E^{old}(e)$.
Hence, we consider priorities for edges to be replaced. Regarding to the first complication, we will show that for any edge $e \in E^{old}$,
all edges in $E^{pot}(e)$ contribute enough potential for amortizing the weight of $e$. Further, we will guarantee that the maximum potential of any edge
$e \in E^{opt}$ that is distributed among all edges in $E^{old}(e)$ is $\weight_T(e) - \latency_{\gnn}(e)$.

\para{Priority directed graph (PDG):}
We consider the dependencies between the edges in $E^{old}$ and $E^{pot}$ as a priority directed graph in which any edge in $E^{old} \cup E^{pot}$
is represented by a node in PDG and there is a directed edge from the node $e$ to another node $e'$ in PDG if $e' \in E^{old}$ and $e \in E^{pot}(e')$.   

\begin{lemma}\label{le:PDG}
The priority directed graph (PDG) is acyclic.
\end{lemma}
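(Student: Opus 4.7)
My plan is to argue by contradiction. Assume PDG contains a directed cycle $e_1 \to e_2 \to \cdots \to e_k \to e_1$ (indices taken modulo $k$). By the definition of PDG, each arrow $e_i \to e_{i+1}$ means that $(\mu(e_i), \mu(e_{i+1}))$ is a gap on some subtree of $T$; for each $i$ fix such a subtree and call it $T_i$. I will combine a structural fact about the $T_i$ with the temporal ordering supplied by \Cref{le:timeLine} to derive an impossible cyclic chain of strict inequalities.

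The first step is to show that the consecutive subtrees $T_i$ and $T_{i+1}$ are node-disjoint. The gap on $T_i$ forces $\mu(e_{i+1}) \in M^{\uparrow}(T_i)$, so $v_{\src}(\mu(e_{i+1})) \in T_i$ and $v_{\des}(\mu(e_{i+1})) \notin T_i$. Symmetrically, the gap on $T_{i+1}$ forces $\mu(e_{i+1}) \in M^{\downarrow}(T_{i+1})$, hence $v_{\src}(\mu(e_{i+1})) \notin T_{i+1}$ and $v_{\des}(\mu(e_{i+1})) \in T_{i+1}$. Therefore $T_i \not\subseteq T_{i+1}$ and $T_{i+1} \not\subseteq T_i$; since two subtrees of a tree are either nested or node-disjoint, $T_i$ and $T_{i+1}$ must be node-disjoint.

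The second step tracks the times at which the cycle's messages cross the roots. Let $a_i$ denote the time at which $\mu(e_i)$ is processed at the root of $T_i$ while entering $T_i$, and let $b_i$ denote the time at which $\mu(e_{i+1})$ is processed at the root of $T_i$ while leaving $T_i$. Applying \Cref{le:timeLine} to $T_i$ gives $a_i \leq b_i$. By \Cref{le:uniquePath} the message $\mu(e_{i+1})$ continues along the unique direct path in $T$ from the root of $T_i$ to the root of $T_{i+1}$, arriving there at time $a_{i+1}$. Disjointness of $T_i$ and $T_{i+1}$ forces these two roots to be distinct nodes, so this leg of the path traverses at least one edge of $T$; since edge weights are at least $1$ and each latency is strictly positive, this yields $a_{i+1} > b_i$. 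Chaining the inequalities around the cycle produces
\[
a_1 \leq b_1 < a_2 \leq b_2 < \cdots < a_k \leq b_k < a_1,
\]
which is a contradiction and completes the proof.

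The delicate point---and the main obstacle I expect---is maintaining strictness in the inequality $a_{i+1} > b_i$; with only a weak inequality there the chain would merely collapse to equalities rather than producing a contradiction. The communication model from \Cref{sec:communicationModel} is what saves us: every edge has weight at least $1$ and every edge traversal consumes a strictly positive latency, so traversing the at-least-one edge separating the roots of the disjoint subtrees $T_i$ and $T_{i+1}$ takes nonzero time. A secondary check is that $\mu(e_{i+1})$ legitimately participates in two consecutive gaps---outgoing from $T_i$ and subsequently incoming to $T_{i+1}$---which follows immediately from the definitions of $M^{\uparrow}$ and $M^{\downarrow}$ together with the fact that the message follows a single direct path from its source to its destination.
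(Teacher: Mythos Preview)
Your argument is correct and follows essentially the same route as the paper: assume a directed cycle, attach to each arrow a subtree witnessing the corresponding gap, and derive a cyclic chain of time inequalities at the roots that yields a contradiction. Your version is in fact slightly more explicit than the paper's, since you spell out why consecutive witness subtrees $T_i$ and $T_{i+1}$ must be node-disjoint (the paper simply asserts the analogous strict inequality $t^{\uparrow}_p < t^{\downarrow}_p$ without isolating this step).
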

\begin{proof}
For the sake of contradiction, assume that there is a cycle in PDG consists of $e_1, e_2, \dots,e_i$ such that $e_p$ points to $e_{p+1}$ where $p \in [1,i-1]$
and $e_i$ points to $e_1$. When an edge $e$ points to another edge $e'$ in PDG, it implies that there is a gap $\big(\mu(e),\mu(e')\big)$ in $\forest_{\gnn}$.
Let $T_p$ be any subtree of $T$ in which $\forest_{\gnn}$ has the gap $\big(\mu(e_{p-1}),\mu(e_p)\big)$ on it for $p \in [2,i]$ and let $T_1$
be any subtree of $T$ such that $\forest_{\gnn}$ has the gap $\big(\mu(e_i),\mu(e_1)\big)$ on it. Let $t^{\uparrow}_p$ denote the time
when $\mu(e_p)$ leaves $T_p$ for all $p \in [1,i]$ and $t^{\downarrow}_p$ denote the time when it enters $T_{p+1}$ for all $p \in [1,i-1]$.
Finally, let $t^{\downarrow}_i$ denote the time when it enters $T_1$. Using the definition of a gap and \Cref{le:timeLine}, we have

\[
	\forall p \in [2,i] : t^{\downarrow}_{p-1} \leq t^{\uparrow}_p
\]
and 
\[
	t^{\downarrow}_i \leq t^{\uparrow}_1.
\]

On the other hand, we have $t^{\uparrow}_p < t^{\downarrow}_p$ for all $p \in [1,i]$. Therefore, we get $t^{\downarrow}_p < t^{\downarrow}_p$
for all $p \in [1,i]$ that is a contradiction. Consequently, PDG is acyclic. 
\end{proof}

For our analysis, we consider steps, and in each step, we replace those edges in $E^{old}$ whose corresponding nodes in PDG do not have
any outgoing edges. Note that in each step until the end when all edges in $E^{old}$ are replaced, there is such a node in PDG w.r.t. \Cref{le:PDG}.
Further, we remove these nodes and the incident edges from PDG at the end of each step.
 
The following lemma shows that the total weight of the edges that are removed during the transformation of $\forest_{\gnn}$ are amortized with the
total weight of edges that are added during the transformation.

\begin{lemma}\label{le:amortization}
In the overview of our amortized analysis, using \Cref{le:amortizedSimple}, it was shown that \eqref{eq:cumulativeAmortization} holds
for the simple case. Even in a general case \eqref{eq:cumulativeAmortization} holds.
\end{lemma}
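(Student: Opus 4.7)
The plan is to extend \Cref{le:amortizedSimple} by iteratively replacing the edges of $E^{old}$ in a linear extension of the PDG, which exists since PDG is acyclic by \Cref{le:PDG}. This ordering $e^{old}_1, e^{old}_2, \ldots$ guarantees that when we come to replace $e^{old}_i$, every edge of $E^{pot}(e^{old}_i)$ is still present in the current forest and its potential has not been consumed. For each $e^{old}_i$ I single out a primary contributor $e^{pri}_i \in E^{pot}(e^{old}_i)$: the edge whose associated gap $\bigl(\mu(e^{pri}_i),\mu(e^{old}_i)\bigr)$ is realized on the lowest subtree $T'_i$ of $T$. Copying the argument of \Cref{le:amortizedSimple}, the choice of $T'_i$ gives $\weight_T(e^{new}_i) = \delta(T'_i)$, and \Cref{le:greedyNature} gives $\latency_{\gnn}(\mu(e^{pri}_i)) \le \delta(T'_i) = \weight_T(e^{new}_i)$, so the ``greedy slack'' is already accounted for.

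For the remaining term $\weight_T(e^{old}_i)$, I would identify the highest subtree $T''_i$ that $\mu(e^{old}_i)$ actually leaves; its diameter equals $\weight_T(e^{old}_i)$. If $|E^{pot}(e^{old}_i)| = 1$, then $e^{pri}_i$ coincides with the witness that triggers this highest-level gap and we recover the simple-case inequality $\weight_T(e^{old}_i) \le \weight_T(e^{new}_i) + \weight_T(e^{pri}_i) - \latency_{\gnn}(\mu(e^{pri}_i))$ verbatim. When $|E^{pot}(e^{old}_i)| > 1$, the nesting structure supplied by \Cref{le:gapWindow} guarantees the existence of a chain of edges in $E^{pot}(e^{old}_i)$ whose underlying subtrees telescope from $T'_i$ up to $T''_i$; summing their potentials covers the difference $\delta(T''_i) - \delta(T'_i)$. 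This yields a per-step bound of the form $\weight_T(e^{old}_i) \le \weight_T(e^{new}_i) + \sum_{e\in S_i}\bigl(\weight_T(e)-\latency_{\gnn}(\mu(e))\bigr)$ for a subset $S_i \subseteq E^{pot}(e^{old}_i)$.

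The main obstacle is verifying that the sets $S_i$ can be chosen so that each $e\in E^{pot}$ is charged at most once in total (or, more generally, that the total charges respect the budget $\Phi(E^{pot})$). I would address this with a fractional charging scheme, distributing the potential of each $e\in E^{pot}$ among the $e^{old}$'s of $E^{old}(e)$ in proportion to the subtree-diameter slack they require. Using the nested-subtree structure from \Cref{le:gapWindow} together with the monotone time ordering of \Cref{le:timeLine}, feasibility reduces to a Hall-type condition on the bipartite graph between $E^{old}$ and $E^{pot}$ weighted by subtree heights, which I expect to verify inductively along the PDG topological order. Once this feasibility is established, summing the per-step bound over $i$ gives $\Weight_T(E^{old}) \le \Weight_T(E^{new}) + \Phi(E^{pot})$, which is precisely \eqref{eq:cumulativeAmortization}. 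I anticipate that the hardest part is not the telescoping itself but ruling out pathological configurations in which the same $e \in E^{pot}$ is the natural primary for several $e^{old}$'s on different nested subtrees; resolving these cases will likely require a careful exchange argument exploiting the asynchronous ordering of entries and exits at each subtree root.
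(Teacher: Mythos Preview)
Your overall architecture---topological processing along the PDG, identifying the primary edge $e^{pri}_i$ whose gap sits on the lowest subtree $T'_i$, and telescoping from $T'_i$ up to the highest subtree $T''_i$ that $\mu(e^{old}_i)$ leaves---matches the paper's proof. Where you diverge is in the accounting step, and there you take a detour that both overcomplicates the argument and leaves a real gap.

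You write the per-step bound as $\weight_T(e^{old}_i) \le \weight_T(e^{new}_i) + \sum_{e\in S_i}\bigl(\weight_T(e)-\latency_{\gnn}(\mu(e))\bigr)$ with $S_i\subseteq E^{pot}(e^{old}_i)$, and then try to enforce that the $S_i$ are (fractionally) disjoint via a Hall-type feasibility condition. That feasibility check is precisely the part you leave unproven, and it is not obvious: a single $e\in E^{pot}$ can legitimately be the natural primary for several $e^{old}$'s, so the ``at most once'' constraint on full potentials is genuinely delicate. The paper sidesteps this entirely by never charging full potentials. Instead it writes, for each $e^{old}_z$, the \emph{equality}
\[
\weight_T(e^{old}_z) \;=\; \weight_T(e^{new}_z) \;+\; 2\Bigl(d_T(v_i,v) + \sum_{p=1}^{i-1} d_T(v_p,v_{p+1})\Bigr),
\]
where $v_p$ is the root of the lowest subtree carrying the gap $\bigl(\mu(e^{pot}_p),\mu(e^{old}_z)\bigr)$ and $v$ is the root of $T''$. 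Each term $2\,d_T(v_p,v_{p+1})$ is a \emph{segment} of the path traversed by $\mu(e^{pot}_p)$, not its full weight. The key observation---which follows directly from \Cref{le:gapWindow} applied with $\mu^{pot}_p$ fixed---is that the segments contributed by a given $e^{pot}_p$ to the various $e^{old}_z\in E^{old}(e^{pot}_p)$ are pairwise disjoint subpaths of $\mu^{pot}_p$'s route. Hence summing over all $z$ automatically yields, for each $e\in E^{pot}$, a total contribution of at most $\weight_T(e)-\delta\bigl(T_1(e)\bigr)$, where $T_1(e)$ is the lowest subtree on which $e$ forms any gap; combining with $\latency_{\gnn}(e)\le\delta\bigl(T_1(e)\bigr)$ from \Cref{le:greedyNature} gives \eqref{eq:cumulativeAmortization} directly.

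In short: replace your subset-selection and Hall-matching machinery by the path-segment decomposition. The disjointness you are trying to manufacture through a matching argument is already built into the geometry of the gaps, and \Cref{le:gapWindow} delivers it for free.
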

\begin{proof}
Consider the edge $e^{old}_z \in E^{old}$ that does not have any outgoing edge in PDG in the current step of transformation.
Suppose that during the transformation, $e^{old}_z$ is replaced with $e^{new}_z$.
Let $T'$ be the subtree of $T$ such that $\weight_T(e^{old}_z)=\delta(T')$.
Further, let $E^{pot}(e^{old}_z)=\big\{e^{pot}_1,e^{pot}_2,\dots,e^{pot}_i\big\}$.
We assume that, w.l.o.g., the gap $\big(\mu(e^{pot}_{p+1}),\mu(e^{old}_z)\big)$ is larger than the gap $(\mu(e^{pot}_p),\mu(e^{old}_z))$
for all $p \in [1,i-1]$.
Suppose $T'_p$ is the lowest subtree of $T'$ such that $\forest_{\gnn}$ has the gap $(\mu^{pot}_p,\mu^{old}_z)$ on that for all $p \in [1,i]$. 
Let $v_p$ denote the root of $T'_p$ and $v$ denote the root of $T'$. 
For simplicity, let $\mu^{old}_q=\mu(e^{old}_q)$, $\mu^{new}_q=\mu(e^{new}_q)$, and $\mu^{pot}_q=\mu(e^{pot}_q)$ for all $q$.
Using \Cref{co:gapLowest}, $T'_{p+1}$ must be higher than $T'_p$ as you can see in \Cref{fig:edgeAmortized}. 
\begin{figure}[H]
  \center	
    \includegraphics[width=0.5\textwidth]{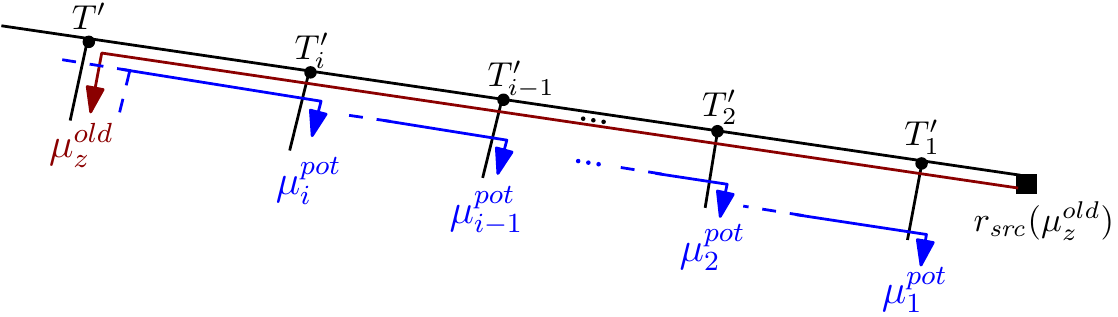}
  \caption{All messages that make gaps with $\mu^{old}_z$.}
  \label{fig:edgeAmortized}
\end{figure}
The gap $(\mu^{pot}_1,\mu^{old}_z)$ is the smallest gap among all gaps $(\mu^{pot}_p,\mu^{old}_z)$ for any $p \in [1,i]$
using the assumption of the lemma. Since $T'_1$ is the lowest subtree of $T$ that has the gap $(\mu^{pot}_1,\mu^{old}_z)$,
w.r.t. \Cref{co:gapLowest}, therefore, 
\[
	\weight_T(e^{new}_z)=\delta(T'_1)=2 \cdot d_T\left(v_{\src}(\mu^{old}_z),v_1\right).
\]
Since $\weight_T(e^{old}_z)=2\cdot d_T\big(v_{\src}(\mu^{old}_z),v\big)$, hence
we have
\begin{equation}\label{eq:edgeUBGeneralCase}
	\weight_T(e^{old}_z) = \weight_T(e^{new}_z) + 2 \cdot \left(d_T(v_i,v) + \sum_{p=1}^{i-1} d_T(v_p,v_{p+1}) \right).
\end{equation}
\Cref{le:gapWindow} implies that all subtrees rooted at the nodes on the direct path between $v_p$ and $v_{p+1}$ for any $p \in [1,i-1]$
cannot have another gap $\big(\mu^{pot}_p,\mu'\big)$ for any edge $e(\mu') \in E^{old}$ and all subtrees rooted at the nodes on the direct path
between $v_i$ and $v$ cannot have another gap $\big(\mu^{pot}_i,\mu'\big)$ for any edge $e(\mu') \in E^{old}$.
Therefore, the direct path between $v_p$ and $v_{p+1}$ for any $p \in [1,i-1]$ is a sub-path
of the whole path that is traversed by $\mu^{pot}_p$. Further, the direct path between
$v_i$ and $v$ is a sub-path of the whole path that is traversed by $\mu^{pot}_i$.
Consequently, the amount $d_T(v_p,v_{p+1})$ for any $p \in [1,i-1]$ as well as the amount $d_T(v_i,v)$
are counted exactly once and do not appear in the right side of \eqref{eq:edgeUBGeneralCase} for any other edge in
$E^{old}$. Therefore we can have the following.
Consider any edge $e$ in $E^{pot}$. Assume that we sum up the right side of \eqref{eq:edgeUBGeneralCase} for all $e^{old}_z \in E^{old}$.
Regarding to the edge $e$, at most the amount $\weight_T\big(e)-\delta(T_1(e)\big)$ appears
where $\delta\big(T_1(e)\big)$ is the diameter of the lowest subtree of $T$ in which $\forest_{\gnn}$ has the gap $(\mu(e),\mu')$ for any $e(\mu') \in E^{old}$.
Note that the amount $2 \cdot d_T\big(v_{\des}(\mu(e)),v_1(e)\big)$ where $v_1(e)$ is the root of $T_1(e)$ is never used in the right side of
\eqref{eq:edgeUBGeneralCase} for any edge in $E^{old}$ and therefore $\delta\big(T_1(e)\big)$ can be subtracted when we sum up the right side of
\eqref{eq:edgeUBGeneralCase} for all $e^{old}_z \in E^{old}$.
Hence, we sum up \eqref{eq:edgeUBGeneralCase} for all $e^{old}_z \in E^{old}$ and we get
\begin{equation}\label{eq:subtractedAmortization}
	\Weight_T(E^{old}) \leq \Weight_T(E^{new}) + \left(\Weight_T(E^{pot}) - \sum_{e \in E^{pot}} \delta(T_1(e)) \right).	
\end{equation}
On the other hand, using \Cref{le:greedyNature} we can have the following for every $e \in E^{pot}$,
\[
	\latency_{\gnn}(e) \leq \delta\big(T_1(e)\big).
\]
If we sum up the above equation for all edges in $E^{pot}$, we get
\begin{equation}\label{eq:sumEpot}
	\Latency_{\gnn}(E^{pot}) \leq \sum_{e \in E^{pot}} \delta\big(T_1(e)\big).
\end{equation}
Using \eqref{eq:subtractedAmortization} and \eqref{eq:sumEpot} we get
\begin{equation}\label{eq:cumulativeAmortizationGeneral}
		\Weight_T(E^{old}) \leq \Weight_T(E^{new}) + \left(\Weight_T(E^{pot}) - \Latency_{\gnn}(E^{pot}) \right).	
\end{equation}
Regarding to the definition of the potential function $\Phi$, the claim of the lemma holds.
\end{proof}

\begin{proof}[\textbf{Proof of \Cref{le:UBGNNForest}}]
Although for our analysis we replace the edges w.r.t. the priority directed graph (PDG), we still use the potential
of edges in $E^{old} \cap E^{pot}$ for computing \eqref{eq:cumulativeAmortizationGeneral} while they are removed
during the transformation. However, we show that
$\Phi\left(E^{pot} \cap E^{old}\right)$ is actually used only as an auxiliary potential and can be removed. 

Consider \Cref{le:amortization}. If we subtract $\Weight_T(E^{pot} \cap E^{old})$ from both side of \eqref{eq:cumulativeAmortizationGeneral}, then we get
\[
		\Weight_T\left(E^{old} \setminus E^{pot}\right) \leq
		\Weight_T\left(E^{new}\right) + \Weight_T\left(E^{pot} \setminus E^{old}\right) - \Latency_{\gnn}(E^{pot}).	
\]
Since w.r.t \eqref{eq:latencyUB} $\Latency_{\gnn}\left(E^{old} \setminus E^{pot}\right) \leq \Weight_T\left(E^{old} \setminus E^{pot}\right)$
and $E^{old}=\left(E^{old} \setminus E^{pot}\right) \cup \left(E^{old} \cap E^{pot} \right)$ we can have
\[
		\Latency_{\gnn}(E^{old}) \leq
		\Weight_T\left(E^{new}\right) + \Weight_T\left(E^{pot} \setminus E^{old}\right) - \Latency_{\gnn}\left(E^{pot} \setminus E^{old}\right).	
\]
Thus, the actual potential that is used for amortizing $\Latency_{\gnn}(E^{old})$ is $\Phi(E^{pot} \setminus E^{old})$
and the edges in $E^{pot} \setminus E^{old}$ are indeed in $\forest_{mdf}$.
Since we have $E^{old} \cup E^{pot}=\left(E^{old}  \cup \left(E^{pot} \setminus E^{old} \right)\right)$, therefore
\[
		\Latency_{\gnn}\left(E^{old} \cup E^{pot}\right) \leq
		\Weight_T\left(E^{new}\right) + \Weight_T\left(E^{pot} \setminus E^{old}\right).	
\]
Using \eqref{eq:latencyUB} we can have $\Latency_{\gnn}\left(\forest_{\gnn} \setminus \left(E^{old} \cup E^{pot} \right) \right) \leq
\Weight_T\left(\forest_{\gnn} \setminus \left(E^{old} \cup E^{pot} \right) \right)$. Together with $\forest_{mdf}=\forest_{\gnn} \setminus E^{old} \cup E^{new}$
we get
\[
	\Latency_{\gnn}(\forest_{\gnn}) \leq \Weight_T(\forest_{mdf}).
\]
\end{proof}

\begin{proof}[\textbf{Proof of \Cref{th:gnn}}]
The claim of the theorem immediately follows \Cref{le:transformation} and \Cref{le:UBGNNForest}.
\end{proof}

\begin{proof}[\textbf{Proof of \Cref{th:hstOptimal}}]
\Cref{th:genericResult} and \Cref{th:gnn} both together show that the claim of the theorem holds.
\end{proof}

\begin{proof}[\textbf{Proof of \Cref{co:hstOptimalQueuing}}]
As described in \Cref{sec:intro}, the distributed queuing problem is an application of
\dsms\ where $k=1$. The goal in the distributed queuing problem is to minimize sum
of the total communication cost and the total waiting time. When all requests are simultaneously invoked,
the total waiting time gets $0$. Consequently, \gnn\ optimally solves the distributed queuing problem
for one-shot executions on HSTs in the light of \Cref{th:hstOptimal}.
\end{proof}

\subsection{\dsms\ Problem on General Networks}
\label{sec:DSMSonGeneralNetworks}

In this section, we consider a general graph $G=(V,E)$ as the input graph. We show that when running any distributed \dsms\ protocol
\alg\ that satisfies the conditions of \Cref{th:genericResult} on top of HST $T$, we obtain a randomized protocol
with an expected competitive ratio of at most $\bigO{\log n}$ against an oblivious adversary.
The following theorem provides a general version of \Cref{th:graph}.

\begin{theorem}\label{th:genericGraph}
Suppose we are given a graph $G = (V,E)$ and a set of requests $\R$ that all are invoked at the same time by the nodes of $G$
where $|V|=n$. There is a randomized embedding of $G$ into a distribution over HSTs in which we sample an HST $T$ according to the
distribution defined by the embedding. Consider any distributed protocol \alg\ that that satisfies the conditions of \Cref{th:genericResult}.
When running \alg\ on $T$, we get a distributed randomized protocol for $G$ with an expected competitive ratio of at most $\bigO{\log n}$
against an oblivious adversary. This even holds if communication is asynchronous.
\end{theorem}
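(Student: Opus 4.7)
The plan is to reduce the general graph problem to the HST case established in \Cref{th:genericResult} via the probabilistic tree embedding of \cite{fakcharoenphol2003tight}. Given $G=(V,E)$ with $|V|=n$, I would sample an HST $T$ together with a bijection $f$ between $V$ and the leaves of $T$ such that (a)~$d_T(f(u),f(v)) \geq d_G(u,v)$ for every $u,v\in V$ and (b)~$\E[d_T(f(u),f(v))] = \bigO{\log n}\cdot d_G(u,v)$. The $k$ servers and the requests in $\R$ are placed at the corresponding leaves of $T$, and \alg\ is executed on the overlay tree $T$. Because the adversary is oblivious, $\R$ is statistically independent of the random construction of $T$, so expectations over the tree are well defined for a fixed input instance.

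The first step is to upper bound the $G$-cost of \alg\ by its $T$-cost. Each edge $e=(r,r')\in\forest_{\alg}$ corresponds to a single find-predecessor message that follows a direct path in $T$ between the two involved leaves. When this message is routed on $G$---with every internal (virtual) tree node simulated by a designated descendant leaf and every tree edge implemented by a shortest $G$-path between the two simulating nodes---its resulting $G$-latency is at most the total weight of the corresponding $T$-path, and hence at most $d_T(f(v),f(v'))$. Summing over $\forest_{\alg}$ and invoking the hypothesis of \Cref{th:genericResult} gives $\Delay_{\alg}\leq W_T(\forest_{\alg})\leq W_T(\forest_{\grd})$ with $\Delay_{\alg}$ measured on $G$.

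The second step is to compare $W_T(\forest_{\grd})$ with the cost of an optimal offline protocol \opt\ operating on $G$. Let $\forest_{\opt}$ be its output forest; viewed as an edge set of $B=\big(\R,{\R \choose 2}\big)$ it is itself an $\R_D$-respecting spanning $k$-forest, so \Cref{le:optimalGreedyForest} gives $W_T(\forest_{\grd})=W_T(\forest_{\min})\leq W_T(\forest_{\opt})$. Linearity of expectation combined with property (b), applied edge-by-edge, then yields
\[
\E\big[W_T(\forest_{\opt})\big] \;=\; \sum_{e=(r,r')\in\forest_{\opt}}\E\big[d_T(f(v),f(v'))\big] \;\leq\; \bigO{\log n}\cdot W_G(\forest_{\opt}).
\]
Chaining with the lower bound $W_G(\forest_{\opt})\leq \Delay_{\opt}$ from \eqref{eq:optAlgCostLB} produces $\E[\Delay_{\alg}] \leq \bigO{\log n}\cdot \Delay_{\opt}$, which is the claimed competitive ratio.

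The main obstacle is not the embedding itself---which slots in cleanly once $T$ is sampled---but the first step, which requires verifying the $G$-latency bound despite asynchrony and despite the virtual nature of $T$'s internal nodes. Because physical latencies on $G$ are only bounded above by edge weights in the worst case rather than specified exactly, one must pin down a concrete overlay simulation that preserves the $T$-distance bound for every message trajectory regardless of the asynchronous scheduler's choices. The descendant-leaf simulation sketched above does this uniformly, after which the remainder of the argument is a routine application of linearity of expectation and the already established optimality of \alg\ on HSTs.
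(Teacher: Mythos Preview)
Your approach is essentially the paper's: sample $T$ via the FRT embedding, use the hypothesis of \Cref{th:genericResult} to bound $\Delay_{\alg}$ by $\Weight_T(\forest_{\grd})$, then compare to $\Weight_T(\forest_{\opt})$ via \Cref{le:optimalGreedyForest} and finally to $\Weight_G(\forest_{\opt})\leq\Delay_{\opt}$ via the expected stretch and \eqref{eq:optAlgCostLB}. The paper phrases the middle step as $\E[\Weight_T(\forest_{\opt}(T))]\leq\E[\Weight_T(\forest_{\opt}(G))]$, but this is exactly your $\Weight_T(\forest_{\min})\leq\Weight_T(\forest_{\opt})$.

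One genuine slip: the displayed chain $\Delay_{\alg}\leq \Weight_T(\forest_{\alg})\leq \Weight_T(\forest_{\grd})$ is wrong in its second inequality. By \Cref{le:optimalGreedyForest}, $\forest_{\grd}$ has \emph{minimum} $T$-weight among $\R_D$-respecting spanning $k$-forests, so in fact $\Weight_T(\forest_{\grd})\leq\Weight_T(\forest_{\alg})$, not the reverse. The hypothesis of \Cref{th:genericResult} gives you $\Delay_{\alg}=\Latency_{\alg}(\forest_{\alg})\leq\Weight_T(\forest_{\grd})$ \emph{directly}, without passing through $\Weight_T(\forest_{\alg})$; your discussion of the overlay simulation is exactly what justifies treating the $G$-execution as a valid asynchronous $T$-execution (tree-edge latencies bounded by tree-edge weights), which is also how the paper handles it in the remark following the proof. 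Drop the intermediate $\Weight_T(\forest_{\alg})$ and the argument is clean.
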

\begin{proof}
Assume that the HST $T$ is constructed on top of $G$ by using the randomized algorithm of \cite{fakcharoenphol2003tight}.
Let $\forest_{\opt}(T)$ and $\forest_{\opt}(G)$ denote the resulted forests by \opt\ where communication is synchronous on $T$ and $G$ as the input graph.
We get

\begin{equation}\label{eq:optUBGraph}
    \E\left[\Weight_{T}\left(\forest_{\opt}(T)\right)\right] \leq
    \E\left[\Weight_{T}\left(\forest_{\opt}(G)\right)\right] \leq
    \bigO{\log n} \cdot \Weight_{G}\left(\forest_{\opt}(G)\right).
\end{equation}

The first inequality follows from the fact that $\forest_{\opt}(G)$ is not necessarily an
optimal weight forest w.r.t. the edge weights of $T$. The second inequality follows from the expected stretch bound of the HST
construction of \cite{fakcharoenphol2003tight}. Given \Cref{th:genericResult} and \eqref{eq:optUBGraph}, the claim of the theorem holds w.r.t. \Cref{re:asynchOpt}.
In fact, using the assumptions of \Cref{th:genericResult} that is also stated in the theorem, \Cref{le:optimalGreedyForest}, and \eqref{eq:optForestCostLB}
we have $\Delay_{\alg} \leq \Weight_{T}\left(\forest_{\opt}(T)\right)$.
\end{proof}

Note that the statement of \Cref{th:genericGraph} also holds when communication is synchronous on $G$. 
Because the statement of \Cref{th:genericResult} applies to the general asynchronous case, it also captures a synchronous scenario,
where the latency on each edge is fixed but might be smaller than the actual weight of the edge on $T$.
Note that such executions are relevant because an HST is often built as an overlay graph on top of an underlying network
graph $G$ and the latency of simulating a single HST edge might be smaller than the weight of the edge.

\begin{proof}[\textbf{Proof of \Cref{th:graph}}]
Given \Cref{th:gnn} and \Cref{th:genericGraph}, we immediately get the theorem.
\end{proof}
\section{Lower Bound}
\label{sec:LB} 
We provide a simple reduction from the distributed $k$-server problem \cite{bartal1992distributed}
(a statement of this problem is given in \Cref{sec:intro}) to the \dsms\ problem that preserves the competitive ratio up to some constant factor.
We utilize the lower bound presented in \cite{bartal1992distributed} together with our reduction to prove our
lower bound stated in \Cref{thm:LB}.

The reduction is trivial since the \dsms\ problem when requests are sequentially invoked
is identical with the distributed $k$-server problem but their cost functions.
We consider instances that consist of a synchronous network that is modeled by a graph $G=(V,E)$
and a set of requests that are sequentially invoked one by one. Let $I$ denote such an instance.
Our reduction neither changes
the sequence of request nor the input network.
The only difference is with respect to their cost functions. Hence, let us provide our analysis of their costs.

\begin{proof}[\textbf{Proof of \Cref{thm:LB}}] As before, let $\R$ denote a sequence of requests including the $k$ dummy requests in $\R_D$.
Let $\forest$ denote the resulted forest by an optimal offline protocol that solves the instance $I$. $\forest$ consists of $k$
TSP paths that span all requests in $\R$.
The total communication cost incurred by the optimal offline \dsms\ protocol equals the
total weight of the forest $\forest$, that is, $\Weight_G(\forest)$ (see \eqref{eq:forestAlgLength}) since communication is synchronous.
Suppose that there is a $c$-competitive online \dsms\ protocol that solves the instance $I$.
The cost incurred by the online \dsms\ protocol is at most $c \cdot \Weight_G(\forest)$.
However, the total cost incurred by the online distributed $k$-server protocol generated by our reduction on the set of
requests $\R$ is $\bigO{c\cdot D \cdot \Weight_G(\forest)}$ where $D$ is the ratio between the cost
to move a server and the cost to transmit a message over the same distance in synchronous networks.
Note that $c \cdot \Weight_G(\forest)$ is the maximum total weight of the resulted forest by the online protocol.
Therefore, $c \cdot D \cdot \Weight_G(\forest)$ is the maximum total movement cost of all servers of the online protocol.
On the other hand, the total cost incurred by an optimal offline protocol for the distributed $k$-server problem
is $\Omega\big(D \cdot \Weight_G(\forest)\big)$. Consequently, the claim of the theorem holds using also the lower bound of $k$ \cite{manasse1988competitive}.
\end{proof}

\bibliographystyle{alpha}
\bibliography{references} 

\appendix

\section{Minimum Spanning Forest Approximation}
\label{appsec:msf} 

In the following, we prove a generic result about spanning forests of a weighted graph
$G=(V,E,w)$, and let \forest\ denote a forest that spans the nodes in $V$. For any edge
$e\in \forest$, let $w(e)$ denote the weight of the edge on $G$. Further, assume that
when removing $e$ from \forest, the node sets of the resulting $(m+1)$ ($m > 0$)
connected components are $V_{e^1},\ldots,V_{e,m+1}$. Let $(V_{e^1},\ldots,V_{e,m+1})$
be the $(m+1)$-cut induced by removing $e$ from \forest. If $m=1$,
\Cref{appthm:MSFapprox} in particular implies the following result about a spanning tree $T$
of a weighted graph $G$. If for every edge $e\in T$, and every edge $e^*$ over the cut induced by
$T$ when removing $e$ from $T$ it holds that $w(e^*)\geq w(e)/\lambda$, then the total weight
of $T$ is within a factor $\lambda$ of the total weight of a minimum spanning tree (MST) of $G$.

\begin{theorem}\label{appthm:MSFapprox}
  Let $\lambda \geq 1$, $m \geq 1$, and $G=(V,E,w)$ be a weighted connected
  graph with non-negative edge weights $w(e)\geq 0$. Further, let
  $S\subseteq V$, $|S|\leq m$ and let \forest\ and $\forest^*$ be two
  arbitrary $S$-respecting spanning $m$-forests of $G$. Further assume
  that for every pair $(e,e^*)$ of edges $e\in \forest$ and $e^*\in \forest^*$ such that
  $\forest \setminus\set{e}\cup\set{e^*}$ is an $S$-respecting spanning
  $m$-forest of $G$, it holds that $w(e^*)\geq w(e)/\lambda$. Then,
  the total weight of all edges of $\forest$ is at most $\lambda$
  times the total weight of the edges of $\forest^*$.
\end{theorem}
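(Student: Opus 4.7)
The plan is to reduce the inequality to a matroid basis–exchange argument. Since $\forest$ and $\forest^*$ are both $S$-respecting spanning $m$-forests of $G$, each contains exactly $|V|-m$ edges, and I will show that the $S$-respecting spanning $m$-forests of $G$ are the bases of a matroid on the ground set $E$. Granted this, the classical symmetric basis-exchange theorem of Brualdi furnishes a bijection $\phi\colon\forest\to\forest^*$ such that $\forest\setminus\set{e}\cup\set{\phi(e)}$ is again an $S$-respecting spanning $m$-forest for every $e\in\forest$ (taking $\phi(e)=e$ on $\forest\cap\forest^*$). The hypothesis of the theorem then gives $w(\phi(e))\geq w(e)/\lambda$ for every $e\in\forest$, and summing yields
\[
  w(\forest^*)\;=\;\sum_{e\in\forest}w(\phi(e))\;\geq\;\frac{1}{\lambda}\sum_{e\in\forest}w(e)\;=\;\frac{w(\forest)}{\lambda},
\]
which is the inequality claimed.

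To exhibit the matroid, I would form an auxiliary multigraph $G^*$ from $G$ by contracting the set $S$ to a single vertex $s^*$ and deleting the resulting self-loops. A subset $F\subseteq E$ is a forest of $G$ containing no path between two vertices of $S$ if and only if, regarded as a set of edges of $G^*$, it is acyclic: any cycle of $F$ in $G^*$ either avoids $s^*$, giving a cycle of $F$ in $G$, or passes through $s^*$, giving a path of $F$ in $G$ between two $S$-vertices. A component count then shows that an $(|V|-m)$-edge subset $F$ of $E$ is an $S$-respecting spanning $m$-forest of $G$ if and only if it is a spanning $(m-|S|+1)$-forest of $G^*$; this is consistent because $G$, and hence $G^*$, is connected and $|S|\leq m$. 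Now for any graph $H$ and any $k$ at least the number of connected components of $H$, the spanning $k$-forests of $H$ are precisely the bases of the rank-$(|V(H)|-k)$ truncation of the graphic matroid of $H$, and a truncation of a matroid is a matroid. Transporting this back through the contraction, the $S$-respecting spanning $m$-forests of $G$ form the bases of a matroid on $E$, as required.

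The main obstacle is this identification of the matroid structure; once it is in place, the weight bound is an immediate consequence of Brualdi's theorem together with the weight hypothesis. If one prefers to avoid matroid vocabulary altogether, the bijection $\phi$ can also be constructed directly by a Hall-type argument on the bipartite ``exchange graph'' whose parts are $\forest$ and $\forest^*$ and whose edges join $e$ and $e^*$ precisely when $\forest\setminus\set{e}\cup\set{e^*}$ is an $S$-respecting spanning $m$-forest. Verifying Hall's condition then rests on the cut-counting observation that for any $A\subseteq\forest$ the graph $(V,\forest\setminus A)$ has $m+|A|$ components, and $\forest^*$ must contribute at least $|A|$ edges bridging these components (since $\forest^*$ is itself a spanning $m$-forest of $V$); a standard augmenting-path refinement then upgrades this counting bound to the required perfect matching.
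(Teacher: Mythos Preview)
Your proof is correct and takes a genuinely different route from the paper's. The paper argues by induction on $|\forest\setminus\forest^*|$: it picks a \emph{maximum-weight} edge $e\in\forest\setminus\forest^*$, exchanges it for some $e^*\in\forest^*\setminus\forest$, and then has to verify---via a case analysis depending on whether the candidate replacement edge stays inside a single component of $\forest\setminus\{e\}$ or crosses two of them---that the new forest $\forest'=\forest\setminus\{e\}\cup\{e^*\}$ again satisfies the theorem's hypothesis, so that the induction can proceed. The max-weight choice of $e$ is essential for one of the cases. You sidestep all of this by recognising that the $S$-respecting spanning $m$-forests are exactly the bases of the truncated graphic matroid of $G/S$, and then Brualdi's bijective exchange theorem hands you a one-to-one map $\phi:\forest\to\forest^*$ along which the weight hypothesis can simply be summed. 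Your argument is shorter and more conceptual, and it makes clear that the result holds verbatim for bases of \emph{any} matroid; the paper's argument is entirely self-contained and needs no outside machinery, at the cost of the inductive bookkeeping. Your Hall-type alternative is essentially a direct proof of Brualdi's bijection for this particular matroid, so it buys self-containment at roughly the same price as the paper's induction.
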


\begin{proof}
  For an edge set $F\subseteq E$, we use $W(F)$ to denote the total
  weight of the edges in $F$. We prove the stronger statement that
  \begin{equation}\label{appeq:MSFapprox}
    W(\forest\setminus \forest^*) \leq \lambda\cdot 
    W(\forest^*\setminus \forest).
  \end{equation}
  We show \eqref{appeq:MSFapprox} by induction on
  $|\forest\setminus \forest^*| = |\forest^*\setminus \forest|$. First
  note that if $|\forest\setminus \forest^*| = 0$, we have
  $\forest=\forest^*$ and thus \eqref{appeq:MSFapprox} is clearly
  true. Further, if $|\forest\setminus \forest^*|=1$, there is exactly
  one edge $e\in \forest\setminus \forest^*$ and exactly one edge
  $e^*\in \forest^*\setminus \forest$. We therefore have
  $\forest^*=\forest\setminus\set{e}\cup\set{e^*}$ and by the
  assumptions of the theorem we have $w(e)\leq \lambda \cdot w(e^*)$,
  implying \eqref{appeq:MSFapprox}.

  Let us therefore assume that
  $|\forest\setminus \forest^*|=\gamma\geq 2$ and let $e$ be a maximum
  weight edge of $\forest\setminus \forest^*$. Let
  $(V_{e^1},\ldots,V_{e,m+1})$ be the $(m+1)$-cut induced by removing
  $e$ from $\forest$. Let $\forest'$ be a
  spanning forest of $G$ that is obtained by removing $e$ from
  $\forest$ and by adding some edge
  $e^* \in \forest^*\setminus \forest$ that connects two components
  $V_{e,i}$ and $V_{e,j}$ where $1 \leq i \neq j \leq m+1$ such that
  $\forest'$ is an $S$-respecting $m$-forest.  Note that such an edge
  $e^*$ must exist for the following reason. Let $s:=|S|$, $s\leq m$
  be the size of $S$, let $V_{e,i_1}\dots,V_{e,i_{s}}$ be components
  of $\forest\setminus \forest^*$ that contain some node of $S$, and
  let $V':=\bigcup_{j=1}^s V_{e,i_j}$. The number of edges of
  $\forest$ that connect two nodes in $V'$ is exactly $|V'|-s$ and
  because $\forest^*$ is also $S$-respecting, the number of edges of
  $\forest^*$ that connect two nodes in $V'$ is at most
  $|V'|-s$. Further, note that $e\in \forest\setminus\forest^*$
  contains at least one node $v\not\in V'$. Hence, since
  $\forest\setminus\set{e}$ has $m+1$ components and $\forest^*$ has
  only $m$ components, $\forest^*\setminus \forest$ must contain at
  least one edge $e^*$ that connects two components of
  $\forest\setminus\set{e}$, where at most one of those components is
  contained in $V'$. When choosing this edge $e^*$,
  $\forest\setminus\set{e}\cup\set{e^*}$ is an $S$-respecting spanning
  $m$-forest of $G$.

  By the assumptions of the theorem, we have
  $w(e)\leq \lambda\cdot w(e^*)$. To prove \eqref{appeq:MSFapprox}, it
  thus suffices to show that
  $W(\forest'\setminus \forest^*)\leq \lambda\cdot
  W(\forest^*\setminus \forest')$.
  We have $|\forest'\setminus \forest^*|=\gamma-1$ and thus, if the
  spanning forest $\forest'$ satisfies the conditions of the theorem,
  $W(\forest'\setminus \forest^*)\leq \lambda\cdot
  W(\forest^*\setminus \forest')$
  and \eqref{appeq:MSFapprox} follows from the induction hypothesis. We
  therefore need to show that $\forest'$ satisfies the conditions of
  the theorem. That is, we need to show that for every edge
  $e'\in \forest'\setminus\forest^*$ and for every edge
  $\hat{e}^*\in \forest^*\setminus \forest'$ such that
  $\forest'\setminus\set{e'}\cup\set{\hat{e}^*}$ is an $S$-respecting spanning
  $m$-forest, it holds that
  $w(\hat{e}^*)\geq w(e')/\lambda$.

  Let us therefore consider such a pair of edges $e'\in
  \forest'\setminus\forest^*$ and $\hat{e}^*\in \forest^*\setminus
  \forest'$ such that $\forest'\setminus\set{e'}\cup\set{\hat{e}^*}$ is an $S$-respecting spanning
  $m$-forest. We make a case distinction on whether $\hat{e}^*$
  connects two nodes of inside a single component of
  $\forest\setminus\set{e}$ or whether $\hat{e}^*$ connects two
  components of $\forest\setminus\set{e}$.
  \begin{itemize}
  \item Let us first assume that $\hat{e}^*$ connects two nodes $u$
    and $v$ inside a single component $V_{e,i}$ of
    $\forest\setminus\set{e}$. In this case, the edge $e'$ must
    connect two nodes $u'$ and $v'$ of the same component
    $V_{e,i}$. As a consequence, $\forest\setminus
    \set{e'}\cup\set{\hat{e}^*}$ is a spanning $m$-forest of $G$,
    which has the same component structure as $\forest'$. Hence, $\forest\setminus
    \set{e'}\cup\set{\hat{e}^*}$ is an $S$-respecting spanning
    $m$-forest and the assumptions of the theorem thus imply that
    $w(\hat{e}^*)\geq w(e')/\lambda$.
  \item Let us now assume that $\hat{e}^*$ connects two nodes $u$ an
    $v$ of different components $V_{e,i}$ and $V_{e,j}$ of
    $\forest\setminus\set{e}$. If both components $V_{e,i}$ and
    $V_{e,j}$ contain a node of $S$, the edge $e'$ must either
    connect two nodes in $V_{e,i}$ or two nodes in $V_{e,j}$ and we
    again have that $\forest\setminus
    \set{e'}\cup\set{\hat{e}^*}$ is an $S$-respecting spanning
    $m$-forest. Hence, the assumptions of the theorem again imply that
    $w(\hat{e}^*)\geq w(e')/\lambda$.

    It thus remains to consider the case where at most one of the
    components $V_{e,i}$ and $V_{e,j}$ contains a node of
    $S$. However, in this case, we can get an $S$-respecting
    $m$-forest by considering the tree
    $\forest\setminus\set{e}\cup\set{\hat{e}^*}$ and the assumptions
    of the theorem yield that $w(\hat{e}^*)\geq w(e)/\lambda$. This
    implies that $w(\hat{e}^*)\geq w(e')/\lambda$ because we assumed
    that $e$ is a maximum weight edge of $\forest\setminus\forest^*$
    and thus $w(e)\geq w(e')$.\qedhere
  \end{itemize}
\end{proof}

\begin{table}[H]
\centering
\begin{tabular}{|l|l|l|}
\hline
\textbf{Notation}                                 &\textbf{Definition}                                                    & \textbf{Page}                               \\ \hline
\begin{tabular}[c]{@{}l@{}}
$n$ \\ $k$ \\ \R \\ $r_i=(v_i,t_i)$ \\
$\pi^z_{\alg}$ \\ $r^z_0=(v^z,0)$ \\
$s^z$ \\ 
$\R^z_{\alg}$ \\ $\pi^z_{\alg}(i)$ \\ $\latency_{\alg}(\mu)$ \\
$\delay_{\alg}(r_i,r_j)$ \\ $\Delay_{\alg}(\pi^z_{\alg})$ \\
$\Delay_{\alg}$ \\ $d_G(u,v)$ \\ $H$ \\ $\mu(v)$ \\
$B$ \\ $r_{\src}(\mu)$ \\ $r_{\des}(\mu)$ \\
$e(\mu)$ \\ $\mu(e)$ \\ $T$ \\ $\forest_{\alg}$ \\
$F^z_{\alg}$ \\ $\Latency_{\alg}(F)$ \\
$\weight_{G}\big(e=(r_i,r_j)\big)$ \\ $\Weight_{G}(F)$ \\
$\R_D$ \\ $F(T')$ \\ $\forest_{\grd}$ \\ $(\mu,\mu')$ \\
$\forest_{mdf}$ \\ $E^{old}$ \\
$E^{new}$ \\ $E^{pot}$ \\ 
$E^{pot}(e)$ \\ $E^{old}(e)$ \\ $\Phi(F)$ \\
$\forest_{\min}$ \\ $M^{\uparrow}(T')$ \\
$M^{\downarrow}(T')$ \\ $\delta(T')$
\end{tabular}  & 

\begin{tabular}[c]{@{}l@{}}
number of pints/nodes/processors \\ number of servers \\ input requests \\ request $r_i$ that is invoked by node $v_i$ at time $t_i$ \\
$z$-th schedule as one of the $k$ resulted schedules by \alg \\ dummy request $z$ as the tail of $\pi^z_{\alg}$ \\
$z$-th server that serves all requests in $\pi^z_{\alg}$ \\ 
request set of $\pi^z_{\alg}$ \\ index of the request scheduled at the $i$-th position of $\pi^z_{\alg}$ \\ latency of message $\mu$ in an execution of \alg \\
cost incurred by \alg\ for scheduling $r_j$ as the successor of $r_i$ \\ total cost incurred by \alg\ for scheduling requests in $z$-th schedule \\
total cost incurred by \alg \\ weight of the shortest path between $u$ and $v$ on the input graph $G$ \\ directed version of $T$ that is changing during a \gnn\ execution \\ find-predecessor message sent by $v$ \\
complete graph on requests in $\R$ \\ corresponding request with message $\mu$ \\ predecessor request of $r_{\src}(\mu)$ \\
edge constructed by message $\mu$ \\ message that constructs the edge $e$ \\ input HST \\ resulted forest by \alg; also, set of edges of the forest \\
$z$-th TSP path of $\forest_{\alg}$; also, set of edges of the $z$-th TSP path \\ total cost of $F$ such that $F \subseteq \forest_{\alg}$ \\
weight of the shortest path between $v_i$ and $v_j$ on the input graph $G$ \\ total weight of $F$ w.r.t. measurements on the input graph $G$ \\
set of $k$ dummy requests; $\R_D \subseteq \R$ \\ subgraph of $F$ induced by the requests contained in $F$ and $T'$ \\ locality-based forest \\ gap \\
resulted forest by the transformation of $\forest_{\gnn}$ \\ set of edges removed throughout the transformation of  $\forest_{\gnn}$ \\
set of edges added throughout the transformation of $\forest_{\gnn}$ \\ $\left\{e \in \forest_{\gnn}:\big(\mu(e),\mu(e')\big) \ \text{is a gap for some} \ e' \in E^{old}\right\}$ \\
subset of $E^{pot}$ filtered out by $e \in E^{old}$ \\ subset of $E^{old}$ filtered out by $e \in E^{pot}$ \\ potential of $F$ \\
minimum weight $\R_D$-respecting spanning $k$-forest of $B$ \\ set of messages that leave $T'$ \\
set of messages that enter $T'$ \\ diameter of $T'$
\end{tabular} &

\begin{tabular}[c]{@{}l@{}}\pageref{no:noNodes} \\ \pageref{no:noServers} \\ \pageref{no:requests} \\ \pageref{no:singleRequest} \\
\pageref{no:schedule} \\ \pageref{no:dummy} \\
\pageref{no:server} \\
\pageref{no:requestPartition} \\ \pageref{no:requestPartition} \\ \pageref{no:latency} \\
\pageref{eq:delayCost} \\ \pageref{eq:componentTotalCost} \\ \pageref{eq:totalCost} \\ \pageref{no:shortestDist} \\ \pageref{no:H} \\ \pageref{no:messageSender} \\
\pageref{no:B} \\ \pageref{no:endpointsMSG} \\ \pageref{no:endpointsMSG} \\
\pageref{no:messageEdge} \\ \pageref{no:messageEdge} \\ \pageref{no:hstT} \\  \pageref{no:resultedForest} \\
\pageref{no:resultedForest} \\ \pageref{eq:totalLatencyF} \\
\pageref{no:weightEdge} \\ \pageref{eq:forestAlgLength} \\
\pageref{no:RD} \\ \pageref{no:subgraphTree} \\ \pageref{no:LBF} \\ \pageref{no:gap} \\
\pageref{no:transformedForest} \\ \pageref{no:newOldEdges} \\
\pageref{no:newOldEdges} \\ \pageref{no:potEdges} \\
\pageref{no:filterEPOT} \\ \pageref{no:filterEOLD} \\ \pageref{no:potential} \\
\pageref{no:minForest}  \\ \pageref{no:messagesUP} \\
\pageref{no:messagesDown} \\ \pageref{no:treeDiameter}
\end{tabular} \\ \hline

\end{tabular}
\caption{The essential notations used throughout the paper.}
\label{ta:notations}
\end{table}

\end{document}